%% file: main.tex
\documentclass[journal]{IEEEtran}  
\IEEEoverridecommandlockouts
\usepackage{amsmath,amssymb}
\usepackage{graphicx}
\usepackage{bm}
\usepackage{svg}
\usepackage{textcomp}
\usepackage{xcolor}
\usepackage{cite}
\usepackage{url}
\usepackage[hidelinks]{hyperref}
\usepackage{algorithm}
\usepackage[noend]{algorithmic}
\usepackage{xprintlen}
\usepackage{cases}
\usepackage{tabularx,ragged2e}
\usepackage{booktabs}
\usepackage{nccmath}
\usepackage{caption}
\usepackage{float}
\usepackage[labelformat=simple]{subcaption}
\usepackage{mathtools}
\usepackage{verbatim}
\usepackage{import}

\def\BibTeX{{\rm B\kern-.05em{\sc i\kern-.025em b}\kern-.08em
    T\kern-.1667em\lower.7ex\hbox{E}\kern-.125emX}}
\usepackage{array}
\newcolumntype{P}[1]{>{\centering\arraybackslash}p{#1}}

\input{settings}
\begin{document}
\title{Control of Multi-agent Systems under STL Specifications based on Prescribed Performance Observers}
\author{Tommaso Zaccherini, \IEEEmembership{Student Member, IEEE}, Siyuan Liu, \IEEEmembership{Member, IEEE}, \\ Dimos V. Dimarogonas, \IEEEmembership{Fellow, IEEE}
\thanks{This work was supported in part by the Wallenberg AI, Autonomous Systems and
Software Program (WASP) funded by the Knut and Alice Wallenberg (KAW) Foundation, Horizon Europe EIC project SymAware (101070802), 
the ERC LEAFHOUND Project, and the Swedish Research Council (VR).
}
\thanks{Tommaso Zaccherini and Dimos V. Dimarogonas are with the Division
of Decision and Control Systems, KTH Royal Institute of Technology, Stockholm, Sweden.
	E-mail: {\tt\small \{tommasoz, dimos\}@kth.se}. Siyuan Liu is with the Department of Electrical Engineering, Control Systems Group, Eindhoven University of Technology, the Netherlands. Email: {\tt\small s.liu5@tue.nl}}
}

\maketitle

\input{Content/Abstract}

\begin{IEEEkeywords}
Multi-agent systems, Signal Temporal Logic, Observer-based control, Task/communication graphs mismatch, Decentralized control and estimation.
\end{IEEEkeywords}

\section{Introduction}
\label{sec:introduction}
\input{Content/Introduction}

\section{Preliminaries and Problem Setting}\label{Sec: Preliminaries}
\input{Content/Preliminaries}

\section{Decentralized State Observer}\label{Sec: Decentralized State Observer}

\input{Content/Observer}

\section{Cluster-induced graph and Observer-based Task Satisfaction}\label{Sec: Observer-based Task Satisfaction}
\input{Content/Observer_based_Task_Satisfaction}

\section{Distributed Controller Design}
\label{Sec: Distributed Controller Design}
\input{Content/Controller}

\section{Case Study}
\label{Sec: Case Study}
\input{Content/Case_study}

\section{Conclusions}
\label{Sec: Conclusions}
\input{Content/Conclusions}

\bibliographystyle{IEEEtran}
\bibliography{reference}

\end{document}

%% file: Settings.tex
\newtheorem{problem}{\textbf{Problem}}
\newtheorem{proposition}{\bf{Proposition}}
\newtheorem{definition}{\bf{Definition}}
\newtheorem{lemma}{\bf{Lemma}}
\newtheorem{theorem}{\bf{Theorem}}
\newenvironment{proof}
{\par\noindent\textit{Proof:}\ }
{\hfill$\blacksquare$\par}
\newtheorem{remark}{\bf{Remark}}
\newtheorem{assumption}{\bf{Assumption}}

\newtheorem{corollary}{\bf{Corollary}}
\newtheorem{example}{\textbf{Example}}

\def\myunderbar#1{\underline{\sbox\tw@{$#1$}\dp\tw@\z@\box\tw@}}

\newcommand{\vect}[1]{\boldsymbol{#1}}
\newcommand{\norm}[1]{ \lVert #1 \rVert}
\newcommand{\abs}[1]{ | #1 |}

\newcommand{\Nhop}[2]{\mathcal{N}_{#1}^{{#2}\text{-hop}}} 
\newcommand{\khop}[1]{\text{{#1}c}}
\newcommand{\Neigh}[2]{\mathcal{N}_{#1}^{{#2}}} 
\newcommand{\Neighext}[2]{\Bar{\mathcal{N}}_{#1}^{{#2}}} 
\newcommand{\estimate}[3]{\hat{{#1}}^{{#2}}_{{#3}}}
\newcommand{\state}[3]{{{#1}}^{{#2}}_{{#3}}}
\newcommand{\error}[3]{\Tilde{{#1}}^{{#2}}_{{#3}}}
\newcommand{\statediff}[5]{{#1}^{#2}_{#3}-{#4}^{#5}_{#3}}

\newcommand{\kc}[2]{{#1}_{#2}^{\text{kc}}}

%% file: Content/Abstract.tex
\begin{abstract}

This paper addresses decentralized control of \mbox{large-scale} heterogeneous \mbox{multi-agent} systems subject to bounded external disturbances and limited communication, with the objective of satisfying cooperative Signal Temporal Logic (STL) specifications. 
The considered specifications involve spatiotemporal tasks that require collaboration among multiple agents, including agents beyond direct communication neighborhoods.
To address the communication constraints, a \mbox{$k$-hop} Prescribed Performance State Observer (\mbox{$k$-hop} PPSO) is designed to enable each agent to estimate the states of agents up to $k$ communication hops away using only information from $1$-hop neighbors, while guaranteeing predefined performance bounds on the estimation errors. The estimation error bounds are explicitly incorporated into a reformulation of the spatial robustness of the STL specifications, yielding robustness measures that account for \mbox{worst-case} estimation uncertainty. Based on the modified robustness, a decentralized \mbox{continuous-time} feedback control law is designed to guarantee satisfaction of the STL specifications in the presence of bounded disturbances and estimation errors. The proposed framework provides formal correctness guarantees using only local information and limited communication. Numerical simulations illustrate the theoretical results.
\end{abstract}

%% file: Content/Introduction.tex
\IEEEPARstart{T}{he} growing deployment of robotic platforms in complex and \mbox{large-scale} environments has shifted control system design from centralized architectures toward distributed and cooperative frameworks. Applications such as \mbox{multi-robot} exploration \cite{1435481}, coordinated unmanned aerial vehicle formations \cite{1068000}, autonomous vehicle platooning \cite{7497531}, and distributed manipulation \cite{7353473} require multiple dynamical systems to operate concurrently while interacting through physical couplings, shared objectives, or communication networks. In such settings, centralized control strategies often suffer from scalability limitations, high communication overhead, and reduced robustness to failures. Consequently, \mbox{multi-agent} systems (MAS) have emerged as a natural framework for modeling, analysis, and control of interconnected robotic agents capable of autonomous \mbox{decision-making}.

In recent years, control methodologies for MAS have been extended to accommodate complex task specifications expressed in temporal logic \cite{maler2004monitoring}. Temporal logic—such as Linear Temporal Logic (LTL) and Signal Temporal Logic (STL)—shares structural similarities with natural language and provides a robust formalism for precisely analyzing system behaviors over time. As a result, it enables the specification of sophisticated, high-level requirements that go beyond conventional control objectives \cite{belta2017formal}.

Planning and control under temporal logic specifications have been extensively studied for both single-agent (SAS) and multi-agent systems (MAS). Planning under LTL specifications is typically addressed by constructing a product transition system that captures both the system dynamics and the specification automaton \cite{loizou2004automatic, liu2017distributed, guo2015multi, 10383968}. In contrast, planning under STL specifications is often formulated as a mixed-integer linear programming (MILP) problem \cite{sun2022multi, 7039363} or approached through \mbox{sampling-based} methods as  Rapidly Exploring Random Trees (RRT) and Rapidly Exploring Random Trees Star (RRT*) \cite{vasile2017sampling, karlsson2020sampling, 10341993}. Despite their effectiveness, these approaches face significant scalability challenges. In particular, while \mbox{LTL-based} methods are affected by the big dimensional state-space resulting from the construction of product automata, \mbox{STL-based} MILP formulations and sampling-based methods exhibit a combinatorial growth in the number of binary decision variables and possible trajectory combinations as the number of agents and tasks increases.

To mitigate these limitations, recent works exploit the notion of space robustness \cite{donze2010robust}—naturally associated with STL—to design controllers capable of driving SAS and MAS toward the satisfaction of STL tasks. 
For instance, \cite{lindemann2019control, 9246252,10885877, 9029666, 9483028} exploit Control Barrier Functions (CBFs) \cite{ames2019control} to enforce robustness constraints, while \cite{LINDEMANN2021100973, 10918825, long2024dynamic} leverage the notion of Prescribed Performance Control (PPC) \cite{4639441}.
However, these approaches typically assume the availability of communication links among cooperating agents, which limits their applicability in scenarios where agents must collaborate without direct communication. To address this issue, \cite{11222720} proposes a task decomposition algorithm that redefines tasks involving \mbox{non-communicating} agents to align them with the available communication topology. Nevertheless, this approach requires solving an optimization problem to decompose the task along a selected path, introducing an additional layer of complexity to the control problem and creating a dependency on the chosen path.

Motivated by these limitations and by the growing interest in distributed state estimation for MAS \cite{ACIKMESE20141037,6308694, 8264361, ZHAO201322, XU2013334, LI2011510, 7440802}, this paper proposes a decentralized \mbox{observer-based} control framework that enables MAS to satisfy global STL specifications despite the absence of direct communication among collaborating agents.

We consider a MAS whose \mbox{inter-agent} communication structure is described by a communication graph. The global STL specification is formulated as the conjunction of local task specifications assigned to individual agents and selected agent pairs. This formulation naturally captures applications such as formation control, containment, and coverage, where coordinated interactions are essential to accomplish shared objectives. In this setting, \mbox{inter-agent} collaboration requirements are encoded through a task graph which, as defined earlier, is not required to coincide with the underlying communication graph.
To overcome the mismatch between the communication and task graphs, we propose a decentralized \mbox{$k$-hop} Prescribed Performance State Observer (\mbox{$k$-hop} PPSO) that enables each agent to estimate the states of task-relevant but \mbox{non-communicating} agents. The proposed observer guarantees convergence of the state estimation errors to a prescribed neighborhood of the origin while enforcing predefined performance bounds during the transient. In contrast to our earlier works in \cite{11018605} and \cite{zaccherini2025robustestimationcontrolheterogeneous}, which proposed a \mbox{$k$-hop} \mbox{observer-based} control framework for standard \mbox{multi-agent} objectives, the present work integrates the observer design with Prescribed Performance Control to ensure satisfaction of Signal Temporal Logic (STL) specifications. Compared to classical control, where the input can be designed to be bounded or with bounded derivative, PPC under STL specifications does not provide any boundedness guarantee before proof of task satisfaction. Thus, to guarantee prescribed performances on the state estimations, additional assumptions are required to decouple the observer and the feedback controller. Exploiting this decoupling, we show that the spatial robustness of a broad class of STL specifications can be systematically adjusted to account for \mbox{worst-case} estimation errors. This enables control of the MAS toward task satisfaction even during the transient phase of the observer dynamics.

To the best of our knowledge, this work represents the first decentralized \mbox{estimation-based} controller for MAS subject to STL specifications under communication–task graph mismatch.

The contribution of the paper is summarized as follows:
\begin{itemize}
    \item Compared to \cite{ACIKMESE20141037, 6308694,8264361}, where distributed observers and \mbox{consensus-based} filters without performance guarantees are designed for homogeneous and undisturbed MAS, this work handles heterogeneous MAS subject to bounded external disturbances by extending our previous work in \cite{11018605}. Moreover, in contrast to the observer in \cite{zaccherini2025robustestimationcontrolheterogeneous}, which was developed for standard control objectives, the observer proposed here is specifically tailored for STL specifications.
    
    \item Unlike \cite{LINDEMANN2021100973} and \cite{10918825}, the method proposed here allows for \mbox{communication-task} graph mismatch. Specifically, using the \mbox{worst-case} estimation guarantees provided by the \mbox{$k$-hop} PPSO, we show that, under typical STL tasks, the spatial robustness can be adjusted to account for the \mbox{worst-case} estimation error. This leads to an equivalent condition that, once imposed, guarantees task satisfaction even when state estimates are used instead of the true states.

     \item A preliminary investigation of the proposed methodology appeared in \cite{zaccherini2026observerbasedcontrolmultiagentsystems}. 
     Compared to this earlier work, the present paper introduces revised constraints on the prescribed performance functions, leading to less conservative initialization requirements. Moreover, while in \cite{zaccherini2026observerbasedcontrolmultiagentsystems} the observer design was only briefly outlined due to space limitations and presented as a variation of the approach in \cite{zaccherini2025robustestimationcontrolheterogeneous}, here we provide a complete analysis and formally prove that the state estimation errors satisfy predefined prescribed performance bounds within the PPC framework.
     Finally, to further demonstrate scalability, the simulation study is extended to a \mbox{large-scale} MAS composed of omnidirectional robots.
\end{itemize}

%% file: Content/Preliminaries.tex
\textbf{Notation:} 
Let $\mathbb{R}$, $\mathbb{R}_{\ge 0}$, and $\mathbb{R}_{>0}$ denote the sets of real, nonnegative real, and strictly positive real numbers, respectively. Furthermore, let $\mathbb{N}$ denote the set of natural numbers including $0$. The Euclidean space of dimension $n$ is $\mathbb{R}^n$, and $\mathbb{R}^{n\times m}$ represents the space of real matrices with $n$ rows and $m$ columns. The identity and the null matrices of dimension $n$ are $I_n$ and $0_n$, and $1_n$ denotes the $n$-dimensional vector of ones.
For any set $S$, $|S|$ denotes its cardinality, $S^c$ its complement, and $\partial S$ its boundary. For $N$ sets $\{S_1,\dots, S_N\}$, the Cartesian product, intersection, and union are $\bigtimes^N _{i=1} S_i$, $\bigcap^N _{i=1} S_i$, and $\bigcup^N _{i=1} S_i$, respectively. For a set $S = \{s_1,\ldots,s_n\}$ with $s_i \in \mathbb{R}$, $S_{\max} := \max_{i\in\{1,\ldots,n\}} s_i$ and $S_{\min} := \min_{i\in\{1,\ldots,n\}} s_i$. 
$\top$ and $\bot$ denote the logical values true and false.
Let $B \in \mathbb{R}^{n\times n}$ be a symmetric matrix. Its smallest and largest eigenvalues are $\lambda_{\min}(B)$ and $\lambda_{\max}(B)$. We write $B \succ 0$ if $B$ is positive definite, and $\|B\|$ to denote its spectral norm. For $B \in \mathbb{R}^{n\times m}$, $(B)_{i,j}$ denotes the \mbox{$(i,j)$-th} entry, and $(B)_{i,:} \in \mathbb{R}^{1\times m}$ the \mbox{$i$-th} row. For $x \in \mathbb{R}^n$, the Euclidean norm is $\|x\| = \sqrt{x^\top x}$.
The diagonal matrix with diagonal entries $a_1,\ldots,a_n$ is $\mathrm{diag}(a_1,\ldots,a_n)$, and $\otimes$ denotes the Kronecker product. A function $f$ is of class $\mathcal{C}^1$ if it is continuously differentiable on its domain.

\subsection{Signal Temporal Logic (STL)}
\label{Section Signal Temporal Logic}
Signal Temporal Logic (STL) is a formal framework for specifying temporal and quantitative properties of \mbox{continuous-time} signals \cite{maler2004monitoring}. It integrates logical predicates with temporal operators to express constraints on both signal values and the time intervals over which these constraints must be satisfied. Each predicate $\mu$ is defined through a function $\mathcal{P} : \mathbb{R}^n \rightarrow \mathbb{R}$, $\mathcal{P} \in \mathcal{C}^1$, such that $\mu = \top$ if $\mathcal{P}(x) \ge 0$ and $\mu = \bot$ otherwise, for $x \in \mathbb{R}^n$.
In this paper we consider \mbox{non-temporal} \eqref{eq: definition of non temporal formula} and temporal formulas \eqref{eq: definition of temporal formula}, defined recursively as:
\begin{equation}
    \label{eq: definition of non temporal formula}
    \psi ::= \top \ | \  \mu \  | \ \neg \mu \ | \ \psi_1 \land \psi_2,
\end{equation}
\begin{equation}
    \label{eq: definition of temporal formula}
    \phi ::= G_{[a,b]} \psi \ | \ F_{[a,b]} \psi  \ | \  F_{[\underline{a},\underline{b}]} G_{[\bar{a},\bar{b}]}  \psi, 
\end{equation}
where $\neg$, $\land$, $ G_{[a,b]}$, and $F_{[a,b]}$ denote the negation, conjunction, always and eventually operators, with $a, b \in \mathbb{R}_{\geq 0}$ and $ a \leq b$. $\psi_1$ and $\psi_2$ in \eqref{eq: definition of non temporal formula} are \mbox{non-temporal} formulas of class \eqref{eq: definition of non temporal formula}.

To quantify the degree to which a trajectory $x: \mathbb{R}_{\geq 0} \rightarrow X \subseteq \mathbb{R}^n$ satisfies or violates an STL formula $\phi$, we adopt the notion of robust semantics from \cite[Def.~3]{donze2010robust}. Formally, the robust semantics of STL operators are defined as 
\begin{equation*}
    \begin{split}
        \rho^{\mu}(x,t)&:= \mathcal{P}(x(t)), \\
        \rho^{\neg \phi}(x,t)&:= -\rho^{\phi}(x,t),    \\
         \rho^{\phi_1 \wedge \phi_2}(x,t)&:= \min (\rho^{\phi_1}(x,t), \rho^{\phi_2}(x,t)), \\
        \rho^{F_{[a, b]} \phi}(x,t)&:=\max _{t_1 \in [t+a, t+b]} \rho^\phi(x, t_1),\\
        \rho^{G_{[a, b]} \phi}(x,t)&:= \min _{t_1 \in [t+a, t+b]} \rho^\phi(x, t_1),\\
        \rho^{F_{[\underline{a},\underline{b}]} G_{[\bar{a},\bar{b}]}  \phi}(x,t) &:=  \max _{t_1 \in [t+\underline{a}, t+\underline{b}]} \min _{t_2 \in [t_1+\bar{a}, t_1+\bar{b}]} \rho^\phi(x, t_2),
    \end{split}
\end{equation*}
where $\rho^\phi(x, t): X \times \mathbb{R}_{\geq 0}  \rightarrow \mathbb{R}$ is the \textit{robustness function} of $\phi$, such that $\rho^\phi(x, t) > 0$ implies that $\phi$ is satisfied at time $t$. To develop a \mbox{continuous-time} feedback controller that requires the derivative of the robustness function (cf. Theorem \ref{Theorem on task satisfacion}), $\rho^{\phi_1 \wedge \phi_2}(x,t)$ is approximated by the smooth function $\bar{\rho}^{\phi_1 \wedge \phi_2}(x,t) = -\frac{1}{\eta} \ln(\exp(- \eta \rho^{\phi_1}(x,t)) + \exp(- \eta \rho^{\phi_2}(x,t)))$ as in \cite{7799279}. Note that, since $\bar{\rho}^{\phi_1 \wedge \phi_2}(x,t) \leq \rho^{\phi_1 \wedge \phi_2}(x,t)$ for all $\eta >0$, with equality when $\eta \rightarrow \infty$, $\bar{\rho}^{\phi_1 \wedge \phi_2}(x,t) > 0$ implies $\rho^{\phi_1 \wedge \phi_2}(x,t)> 0$.

\subsection{Multi-agent systems}
\label{Sec: MAS, communication graoh and task graph}
Consider a team of $N \in \mathbb{N}$ heterogeneous agents $\mathcal{V} = \{1,\dots, N\}$. Denote with $\vect{x}$ and $\vect{u}$ the global state and input of the system and suppose each agent $i \in \mathcal{V}$ behaves as:
\begin{equation}
    \label{eq: agent's dynamic}
    \dot x_i(t) =  f_i(x_i(t)) + g_i(x_i(t))u_i(t) + w_i(\vect{x},t),
\end{equation}
where $x_i \in \mathbb{R}^{n_i}$ and $u_i \in \mathbb{R}^{m_i}$ are the state and input of agent $i$, respectively, $f_i:\mathbb{R}^{n_i} \rightarrow \mathbb{R}^{n_i}$ is the flow drift, $g_i:\mathbb{R}^{n_i} \rightarrow \mathbb{R}^{n_i \times m_i}$ is the input matrix, and $w_i: \mathbb{R}^{\sum^N _{i=1} n_i} \times \mathbb{R}_{\geq 0} \rightarrow \mathbb{R}^{n_i}$.

\begin{remark}
    $w_i$ is a function of the global state $\vect{x}$ and time $t$. Thus, it can represent external disturbances acting on $i$ or coupling interactions among agents.
\end{remark}

Let $n = \sum_{i=1}^{N} n_i$ and $m = \sum_{i=1}^{N} m_i$ be the dimensions of the global state and input. Then, $\vect{x}= \left[x_1^\top, \dots, x_N^\top \right]^\top \in \mathbb{R}^n$ and $ \vect{u}=\left[u_1^\top, \dots, u_N^\top\right]^\top \in \mathbb{R}^m$.

\begin{assumption}
    \label{Assumption on existence of a solution}
    (i) $f_i:\mathbb{R}^{n_i} \rightarrow \mathbb{R}^{n_i}$ is unknown and locally Lipschitz; (ii) $g_i:\mathbb{R}^{n_i} \rightarrow \mathbb{R}^{n_i \times m_i}$ is locally Lipschitz and $g_i(x_i)g_i(x_i)^\top $ is positive definite for all $x_i \in \mathbb{R}^{n_i}$; (iii) $w_i: \mathbb{R}^{n} \times \mathbb{R}_{\geq 0} \rightarrow \mathbb{R}^{n_i}$ is unknown, continuous and uniformly bounded on $\mathbb{R}^{n} \times \mathbb{R}_{\geq 0}$.
\end{assumption}

\textbf{Communication graph:} The \mbox{information-exchange} among the agents is described by an undirected graph $\mathcal{G}_C := (\mathcal{V}, \mathcal{E}_C)$, where $\mathcal{E}_C \subseteq \mathcal{V} \times \mathcal{V}$ represents the set of communication links among the agents $\mathcal{V}$. A path between two agents $i,j \in \mathcal{V}$ is defined as a sequence of non-repeating edges connecting $i$ to $j$. A \mbox{$k$-hop} path between $i$ and $j$ corresponds to a path of length $k$, i.e., a sequence of exactly $k$ edges connecting $i$ to $j$. Given $i,j\in \mathcal{V}$, $D_{ij}$ denotes the distance between $i$ and $j$, which is defined by the length of the shortest path in $\mathcal{G}_C$ connecting $i$ and $j$. 
For each agent $i \in \mathcal{V}$, we define with $\Nhop{i}{k} := \{ j \in \mathcal{V} \,|\, \exists \, p\text{-hop path from $j$ to $i$ with } 2 \leq p \leq k \}$, $\Neigh{i}{C} := \{ j \in \mathcal{V} \,|\, (i,j) \in \mathcal{E}_C \}$ and $\Neighext{i}{C} := \Neigh{i}{C}\cup \{i\}$ the set of $k$-hop neighbors, direct neighbors and extended neighbors of $i$, respectively. We label the elements of $\Nhop{i}{k}$ as $\{N^i_1, \dots, N^i_{\eta_i}\}$, where each $N^i_j \in \mathcal{V}$ denotes the global index of the \mbox{$j$-th} \mbox{$k$-hop} neighbor of $i$, and $\eta_i = |\Nhop{i}{k}|$ represents the total number of such neighbors. 

\begin{assumption}
    \label{Assumption on communication graph and neighbors}
    $\mathcal{G}_C$ is a \mbox{time-invariant}, connected and undirected graph, and each $i\in \mathcal{V}$ knows $\Neigh{i}{C}$ and $\Nhop{i}{k}$.
\end{assumption}
\begin{assumption}
    \label{Assumption on 2-hop communication}
    Each agent $i\in \mathcal{V}$ has access and can propagate $x_j(t)$, for all $j \in \Neigh{i}{C}$, to $\Neigh{i}{C}$ at all $t \in \mathbb{R}_{>0}$.
\end{assumption}

Assumption~\ref{Assumption on communication graph and neighbors} is not restrictive, since distributed neighborhood discovery algorithms have been extensively studied in the sensor network literature \cite{994183}. Furthermore, Assumption~\ref{Assumption on 2-hop communication} is satisfied in scenarios where each agent can measure the states of its direct neighbors using onboard sensors and can share this information with its direct neighbors.
\begin{remark}
    \label{Remark on validity of Assumption 3}
    Assumption~\ref{Assumption on 2-hop communication} may appear restrictive. However, as will be discussed in Remark~\ref{Remark on local disagreement vector computation and positive definiteness of Mik}, the proposed method remains applicable even when this assumption is violated by incorporating the elements of $\Neigh{i}{C}$ into the \mbox{$k$-hop} neighborhood, i.e., by introducing $\Bar{\mathcal{N}}_i^{k\text{-hop}} := \Nhop{i}{k} \cup \Neigh{i}{C}$. Note that, by the definition of $\Neigh{i}{C}$, agent~$i$ is also excluded from the redefined neighborhood $\Bar{\mathcal{N}}_i^{k\text{-hop}}$.
\end{remark}

\textbf{Task dependency graph: } The MAS in \eqref{eq: agent's dynamic} is subject to a global STL specification $\phi =\land_{i=1}^N \phi_i$, where each $\phi_i$ denotes the local STL specification assigned to agent $i \in \mathcal{V}$.
Task dependencies among agents are represented by the \textit{task dependency graph} $\mathcal{G}_T := (\mathcal{V}, \mathcal{E}_T)$, where an edge $(i,j) \in \mathcal{E}_T$ if the robustness function $\rho^{\phi_i}$ depends explicitly on the state $x_j$, i.e., satisfaction of $\phi_i$ depends on agent $j$. The task neighborhood of agent $i$ is defined as $\mathcal{N}^T_i := \{ j \in \mathcal{V} \,|\, (i,j) \in \mathcal{E}_T \lor j = i\}$, i.e., it contains the agents whose state affects the satisfaction of task $\phi_i$.

\begin{assumption}
    \label{Assumption on Task graph}
   $\mathcal{G}_T$ is a directed acyclic graph, where \mbox{self-loops} are excluded from the definition of cycles and are therefore permitted. Consequently, each task $\phi_i$ may depend on its local state $x_i$.
\end{assumption}

Note that, since the tasks are assigned to the agents at the design stage, Assumption~\ref{Assumption on Task graph} is not limiting.

\subsection{State estimates}
\label{Section: state and input estimate}
Let $\vect{x}^i$ denote the stacked vector containing the state of the \mbox{$k$-hop} neighbors of agent $i$, i.e., of $N_j^i \in \Nhop{i}{k}$:
\begin{equation}
    \label{stack vector of real values estimated by agent i}
    \state{\vect{x}}{i}{} := \left[\state{x}{\top}{N_1^i}, \dots ,\state{x}{\top}{N^i_{\eta_i}} \right]^\top
\end{equation}
and let $\estimate{\vect{x}}{i}{} := \left[\estimate{x}{i \ \top}{N_1^i}, \dots ,\estimate{x}{i \ \top}{N^i_{\eta_i}} \right]^\top$ be its estimate carried out by the agent~$i$. Moreover, denote with  $\tilde{\vect{x}}^i$ the corresponding estimation error, i.e.:
\begin{equation}
    \label{Definition: error on input and state estimation perfromed by i}
    \error{\vect{x}}{i}{} :=\left[\error{x}{i \ \top}{N_1^i}, \dots ,\error{x}{i \ \top}{N^i_{\eta_i}}\right]^\top,
\end{equation}
where $\error{x}{i}{N_j^i} := \estimate{x}{i}{N_j^i} -\state{x}{}{N_j^i}$ for all $N_j^i \in \Nhop{i}{k}$.

Define $\vect{x}_i := 1_{\eta_i} \otimes x_i$ and let
\begin{equation}
    \label{eq: estimation of state and input of agent i}
    \estimate{\vect{x}}{}{i} := \left[\estimate{x}{N_1^i \top}{i}, \dots , \estimate{x}{N_{\eta_i}^i \top}{i} \right]^\top
\end{equation}
be the stacked vector containing the estimates of $x_i$ computed by the \mbox{$k$-hop} neighbors of agent $i$, i.e., $\estimate{x}{N_j^i}{i}$ is the estimate of $x_i$ performed by agent $N_j^i \in \Nhop{i}{k}$, with $j \in \{1, \dots, \eta_i\}$.

As in \eqref{Definition: error on input and state estimation perfromed by i}, denote with $\error{\vect{x}}{}{i} := \estimate{\vect{x}}{}{i} -\vect{x}_i$ the estimation error computed by each $N_j^i \in \Nhop{i}{k}$, i.e.:
\begin{equation}
    \label{Definition: error on input and state estimation of agent i}
        \hspace{-0.2cm}\error{\vect{x}}{}{i} := \left[\error{x}{N_1^i \top}{i}, \dots ,\error{x}{N_{\eta_i}^i \top}{i} \right]^\top,
\end{equation}
with $\error{x}{N_j^i}{i} := \estimate{x}{N_j^i}{i} - x_i$ for all $N_j^i \in \Nhop{i}{k}$.

For notational simplicity, assume without loss of generality that $n_i = 1$ for all $i \in \mathcal{V}$. The extension to \mbox{higher-dimensional} cases follows directly via the Kronecker product.

\subsection{Individual/collaborative tasks}
\label{Section: collaborative and non-collaborative tasks}
Consider the local task $\phi_i$ assigned to agent $i \in \mathcal{V}$. As discussed in Section~\ref{Section Signal Temporal Logic}, the degree of satisfaction of $\phi_i$ is quantified by the robustness function $\rho^{\phi_i}(\vect{x}_{\phi_i}, t)$, where $\vect{x}_{\phi_i}$ collects the state of all agents participating in $\phi_i$, i.e., those $x_j$ for which $j\in \Neigh{i}{T}$.

Satisfaction of $\phi_i$ may depend on agent $i$ and possibly other agents in $\mathcal{V}\setminus \{i\}$. Accordingly, we denote $\phi_i$ as \textbf{individual} if $\Neigh{i}{T} = \{i\}$, and \textbf{collaborative} otherwise.
A collaborative specification $\phi_i$ may involve agents that are directly connected to agent $i$ through the communication graph $\mathcal{G}_C$, as well as agents that are not, namely agents $j \in \Neigh{i}{T} \cap \Neighext{i}{C}$ and $j \in \Neigh{i}{T} \setminus \Neighext{i}{C}$, respectively. Consequently, due to mismatches between the task and communication graphs, the vector $\vect{x}_{\phi_i}$ may be only partially available to agent $i$. To address this issue, we introduce a local estimate $\hat{\vect{x}}_{\phi_i}$ and define the corresponding robustness estimate as $\rho^{\phi_i}(\hat{\vect{x}}_{\phi_i}, t)$.
When only partial state information is available, robustness evaluated at the estimated state defines a different function than the robustness evaluated at the true state, i.e., $\rho^{\phi_i}(\hat{\vect{x}}_{\phi_i},t) \not\equiv \rho^{\phi_i}(\vect{x}_{\phi_i},t)$. In contrast, under full state knowledge, the two functions coincide, i.e., $\rho^{\phi_i}(\hat{\vect{x}}_{\phi_i},t) \equiv \rho^{\phi_i}(\vect{x}_{\phi_i},t) $.

For notational purposes, let $\Bar{\vect{x}}_{\phi_i}$ denote the components of $\hat{\vect{x}}_{\phi_i}$ corresponding to known states, i.e., $x_j$ with $j \in \Neighext{i}{C} \cap \Neigh{i}{T}$, and let $\estimate{\vect{x}}{i}{\phi_i}$ denote the components corresponding to estimated states, i.e, $x_j$ with $j \in \Neigh{i}{T} \setminus \Neighext{i}{C}$. Define $N^{T}_{i} := \sum_{j \in \Neigh{i}{T}} n_j$, $N^{TC}_{i} := \sum_{j \in \Neighext{i}{C}\cap \Neigh{i}{T}} n_j$, and $N^{T\setminus C}_{i} := \sum_{j \in \Neigh{i}{T} \setminus \Neighext{i}{C}} n_j$. Then, $\vect{x}_{\phi_i}, \hat{\vect{x}}_{\phi_i}\in \mathbb{R}^{N^{T}_{i}}$, $\Bar{\vect{x}}_{\phi_i} \in \mathbb{R}^{N^{TC}_{i}}$ and $\estimate{\vect{x}}{i}{\phi_i} \in \mathbb{R}^{N^{T\setminus C}_{i}}$. 

\subsection{Prescribed performance functions}
The notion of prescribed performance function \cite{4639441} is formally defined below.
\begin{definition}
    \label{Definition of Prescribed performance function}
    $\rho :\mathbb{R}_{\geq 0} \rightarrow \mathbb{R}_{> 0}$ is a prescribed performance function if it satisfies, for all $t \in \mathbb{R}_{\geq 0}$: (i) $\rho \in \mathcal{C}^1$; (ii)  $\rho(t) \leq \overline{\rho}$ for some $ \overline{\rho} <  \infty $ and (iii) $|\dot{\rho}(t)|\leq \overline{\rho}_d$ for some $\overline{\rho}_d<\infty$.
\end{definition}

One conventional choice of prescribed performance function is the decreasing exponential function 
\begin{equation}
    \label{eq: choice of prescribed performance function}
    \rho(t)= (\rho(0)-\rho(\infty)) e^{-l t} + \rho(\infty),
\end{equation}
where $\rho(0)$ and $\rho(\infty)$ denote the initial and steady-state values, and $l > 0$ specifies the decay rate.
\begin{lemma}
    \label{Lemma: sum of prescribed perofmance functions}
    Let $\rho_i (t)$, $i \in \{1, \dots n\}$, be prescribed performance functions as per Definition~\ref{Definition of Prescribed performance function}. Then $\sum^n_{i = 1}\rho_i (t)$ is also a prescribed performance function.
\end{lemma}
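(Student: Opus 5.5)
We verify directly the three conditions of Definition~\ref{Definition of Prescribed performance function} for $\rho(t) := \sum_{i=1}^n \rho_i(t)$, together with the requirement that $\rho$ maps $\mathbb{R}_{\geq 0}$ into $\mathbb{R}_{>0}$. Each step follows from the corresponding property of the summands, using only linearity of differentiation and the triangle inequality.

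First, positivity and well-definedness. Each $\rho_i$ takes values in $\mathbb{R}_{>0}$, so for every $t \in \mathbb{R}_{\geq 0}$ the finite sum $\rho(t)$ is a finite real number with $\rho(t) > 0$. Hence $\rho:\mathbb{R}_{\geq 0}\rightarrow \mathbb{R}_{>0}$.

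Second, condition (i). Each $\rho_i$ is continuously differentiable, and a finite sum of $\mathcal{C}^1$ functions is $\mathcal{C}^1$, with
\begin{equation*}
\dot\rho(t)=\sum_{i=1}^n \dot\rho_i(t).
\end{equation*}
At the boundary point $t=0$ the same holds with one-sided derivatives.

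Third, conditions (ii) and (iii). Let $\overline{\rho}_i$ and $\overline{\rho}_{d,i}$ denote the constants associated with $\rho_i$ in Definition~\ref{Definition of Prescribed performance function}. Then
\begin{equation*}
\rho(t)\leq \sum_{i=1}^n \overline{\rho}_i =: \overline{\rho} < \infty .
\end{equation*}
By the triangle inequality,
\begin{equation*}
|\dot\rho(t)| \leq \sum_{i=1}^n |\dot\rho_i(t)| \leq \sum_{i=1}^n \overline{\rho}_{d,i} =: \overline{\rho}_d < \infty .
\end{equation*}
Both bounds are finite because $n$ is finite, so $\rho$ satisfies all three conditions and is a prescribed performance function.

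None of these steps is a genuine obstacle. The only point requiring care is that the constants $\overline{\rho}$ and $\overline{\rho}_d$ must be uniform in $t$, and this holds because they are finite sums of the uniform constants of the individual $\rho_i$.
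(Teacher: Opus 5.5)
Your proof is correct and takes the same route as the paper: $\mathcal{C}^1$ by linearity of differentiation, the upper bound by summing the individual bounds $\overline{\rho}_i$, and the derivative bound via the triangle inequality. Your explicit check that the sum takes values in $\mathbb{R}_{>0}$ is a small addition the paper leaves implicit.
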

\begin{proof}
    (i) Since $\rho_i \in \mathcal{C}^1$ for all $i \in \{1,\dots, n\}$, $\sum^n_{i = 1}\rho_i (t) \in \mathcal{C}^1$; (ii) From Definition~\ref{Definition of Prescribed performance function}, $\rho_i (t) \leq \overline{\rho}_i$ holds for some bounded $\overline{\rho}_{i} \in \mathbb{R}_{> 0}$. Hence, $\sum^n_{i = 1}\rho_i (t) \leq \sum^n_{i = 1} \overline{\rho}_{i}$; (iii) $|\dot{\rho}_i(t)|\leq \overline{\rho}_{d,i}$ holds for some $\overline{\rho}_{d,i} < \infty$. Thus $|\sum^n_{i = 1}\dot{\rho}_i(t)| \leq \sum^n_{i = 1} \overline{\rho}_{d,i}$.
\end{proof}
\begin{lemma}
    \label{Lemma om the norm of prescibed performance functions}
    Let $\vect{\rho} (t) = \bigl[\rho_1(t), \dots , \rho_m(t)\bigr]^\top$ be a vector whose components $\rho_i$, $i \in \{1, \dots, m\}$, are prescribed performance functions as per Definition \ref{Definition of Prescribed performance function}. Then, $\norm{\vect{\rho}(t)}$ is also a prescribed performance function satisfying conditions (i)-(iii) in Definition \ref{Definition of Prescribed performance function}.
\end{lemma}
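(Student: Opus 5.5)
The plan is to verify conditions (i)--(iii) of Definition~\ref{Definition of Prescribed performance function} directly for $r(t) := \norm{\vect{\rho}(t)} = \bigl(\sum_{i=1}^m \rho_i^2(t)\bigr)^{1/2}$, together with positivity of the codomain.

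\textbf{Positivity and differentiability.} Since each $\rho_i$ maps into $\mathbb{R}_{>0}$, we have $\sum_i \rho_i^2(t) > 0$ for all $t \ge 0$, so $r(t) > 0$. The square root is $\mathcal{C}^1$ on $\mathbb{R}_{>0}$, and $\sum_i \rho_i^2$ is $\mathcal{C}^1$ as a sum of products of $\mathcal{C}^1$ functions. Hence $r \in \mathcal{C}^1$, and
\begin{equation*}
\dot r(t) = \frac{\sum_{i=1}^m \rho_i(t)\dot\rho_i(t)}{\norm{\vect{\rho}(t)}}.
\end{equation*}
Strict positivity is exactly what keeps us away from the non-differentiable point of the norm at the origin. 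This is the only delicate point, and it is settled by the codomain $\mathbb{R}_{>0}$ in the definition.

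\textbf{Boundedness (ii).} Using $\rho_i(t) \le \overline{\rho}_i$ and $\rho_i > 0$, we get
\begin{equation*}
r(t) \le \Bigl(\sum_{i=1}^m \overline{\rho}_i^{\,2}\Bigr)^{1/2} < \infty .
\end{equation*}
Alternatively, $r(t) \le \sum_i \rho_i(t)$ combined with Lemma~\ref{Lemma: sum of prescribed perofmance functions} gives the same conclusion.

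\textbf{Bounded derivative (iii).} By the Cauchy--Schwarz inequality,
\begin{equation*}
\Bigl|\sum_{i=1}^m \rho_i\dot\rho_i\Bigr| \le \norm{\vect{\rho}(t)}\,\norm{\dot{\vect{\rho}}(t)} ,
\end{equation*}
so $|\dot r(t)| \le \norm{\dot{\vect{\rho}}(t)} \le \bigl(\sum_{i=1}^m \overline{\rho}_{d,i}^{\,2}\bigr)^{1/2} < \infty$. The potentially singular denominator $\norm{\vect{\rho}}$ cancels, so no lower bound on $\rho_i$ is needed.
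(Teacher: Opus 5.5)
Your proof is correct and follows the same route as the paper. Like the paper, you verify (i)--(iii) directly: you use $\vect{\rho}(t)\neq 0$ to get differentiability of the norm, the bound $\sqrt{\sum_i \overline{\rho}_i^{\,2}}$ for (ii), and the derivative $\vect{\rho}^\top\dot{\vect{\rho}}/\norm{\vect{\rho}}$ for (iii). Your Cauchy--Schwarz step in (iii) is a genuine improvement, since the paper appeals only to pointwise positivity of $\norm{\vect{\rho}(t)}$, which does not by itself bound the quotient uniformly, whereas you get the explicit two-sided bound $|\dot r(t)|\le(\sum_i \overline{\rho}_{d,i}^{\,2})^{1/2}$.
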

\begin{proof}
    (i) Each $\rho_i(t)$ is positive and continuously differentiable ($\mathcal{C}^1$) by definition, hence $\vect{\rho} \in \mathcal{C}^1$. The Euclidean norm is smooth in $\mathbb{R}^{\eta_i} \setminus \{0\}$. Furthermore, $\vect{\rho}(t) \neq 0$ for all $t \ge 0$. Thus, $\norm{\vect{\rho}} \in \mathcal{C}^1$. (ii) From Definition~\ref{Definition of Prescribed performance function}, $0 < \rho_i(t) \leq \overline{\rho}_i$ holds with $\overline{\rho}_i < \infty$. Hence, $\norm{\vect{\rho}(t)} \leq \overline{\vect{\rho}}$, with $\overline{\vect{\rho}} = \sqrt{ \sum_{i=1}^{m}(\overline{\rho}_i)^2}$. (iii) Differentiating the Euclidean norm gives
     $\frac{d\norm{\vect{\rho}(t)}}{dt} = \frac{\vect{\rho}^\top(t) \dot{\vect{\rho}}(t)}{\norm{\vect{\rho}(t)}}$. Then, since $\norm{\vect{\rho}(t)} > 0$, $0 <\rho_i(t) <\infty$, and $|\dot{\rho}_i(t)| <\infty$ for all $i \in \{1, \dots, m\}$, it follows that $\frac{d\norm{\vect{\rho}(t)}}{dt}$ is upper bounded.
\end{proof}

\subsection{Problem statement}
\label{Section: Problem statement}
We now have all the ingredients to provide a formal definition of the problem considered in this paper:
\begin{problem}
    \label{Problem: first problem formulation}
    Consider a heterogeneous MAS \eqref{eq: agent's dynamic} with a communication graph $\mathcal{G}_C$ and a global STL specification $\phi = \land_{i=1}^N \phi_i$, where each $\phi_i$, defined as in \eqref{eq: definition of temporal formula}, denotes a local STL task assigned to agent $i$. Let $\mathcal{G}_T$ be the task dependency graph induced by $\phi$, and suppose Assumptions~\ref{Assumption on existence of a solution}-\ref{Assumption on Task graph} hold. Then, for each $i \in \mathcal{V}$, design a decentralized observer that estimates the state $x_j$ of agents $j \in \Neigh{i}{T} \setminus \Neighext{i}{C}$, while ensuring that the estimation error satisfies prescribed performance bounds. Furthermore, design a decentralized \mbox{observer-based} control input $u_i$ that guarantees satisfaction of the global specification $\phi$. To guarantee a decentralized framework, both the observer and the controller must rely only on locally available information, i.e., the agent’s own information and information exchanged with its direct neighbors.
\end{problem}

The remainder of the paper is organized as follows. Section~\ref{Sec: Decentralized State Observer}
introduces a decentralized $k$-hop PPSO that enables each agent $i \in \mathcal{V}$ to estimate ${x}_{N_j^i}$, for all $N_j^i \in \Nhop{i}{k}$, while ensuring that the state estimation errors satisfy prescribed performance bounds. Section~\ref{Sec: Observer-based Task Satisfaction} introduces the notion of \mbox{cluster-induced} graph and, building on the observer’s performance guarantees, reformulates the STL tasks to account for the \mbox{worst-case} estimation errors. Section~\ref{Sec: Distributed Controller Design} proposes an \mbox{observer-based} decentralized controller that ensures task satisfaction also when state estimations are used instead of the true state. Section~\ref{Sec: Case Study} presents simulation results, and Section~\ref{Sec: Conclusions} concludes the paper with remarks and directions for future work.

%% file: Content/Observer.tex
This section introduces the concepts of \textit{\mbox{$k$-hop} induced graphs} and \textit{disagreement vectors}, and presents a decentralized observer that allows each agent $i\in \mathcal{V}$ to estimate the states of its \mbox{$k$-hop} neighborhood $\Nhop{i}{k}$. Specifically, building on the PPC framework~\cite{4639441}, we propose a decentralized \mbox{$k$-hop} Prescribed Performance State Observer ($k$-hop PPSO) that ensures the convergence of the state estimation error to a neighborhood of the origin while guaranteeing prescribed transient performance specified a priori.

\subsection{$\mathit{k}$-hop induced graphs}
For every $i \in \mathcal{V}$, let $\mathcal{G}^{\khop{k}}_i$ denote the \textit{\mbox{$k$-hop} induced graph} associated to $i$, i.e., the graph $\mathcal{G}^{\khop{k}}_i := ( \Nhop{i}{k}, \mathcal{E}^{\khop{k}}_i)$ induced by the \mbox{$k$-hop} neighbors $\Nhop{i}{k}$ of agent $i$, where $\mathcal{E}^{\khop{k}}_i := \{(p,q) \in \mathcal{E}_C: \{p,q\} \subseteq \Nhop{i}{k}\}$. 
 
For every agent $i\in \mathcal{V}$, let $\kc{M}{i} \in \mathbb{R}^{\eta_i \times \eta_i}$ be the matrix defined as
\begin{equation}
    \label{eq: Mkc defintion}
    \kc{M}{i} := \kc{L}{i} + \kc{H}{i},
\end{equation}
where $\kc{L}{i}$ is the Laplacian matrix \cite[Eq. 2.9]{mesbahi2010graph} associated to $\mathcal{G}^{\khop{k}}_i$, and $\kc{H}{i}:= \text{diag}(|\Neigh{N_1^i}{C}\cap \Neigh{i}{C}|, \dots, |\Neigh{N_{\eta_i}^i}{C}\cap \Neigh{i}{C}|) \in \mathbb{N}^{\eta_i \times \eta_i}$.
Since $\mathcal{G}^{\khop{k}}_i$ may be disconnected and split into multiple connected \mbox{sub-components}, the positive definiteness of $\kc{M}{i}$ is studied in the following.

From the definition of \mbox{$k$-hop} neighbors in Section \ref{Sec: MAS, communication graoh and task graph}, the following lemmas hold for the \mbox{$k$-hop} induced graphs.
\begin{lemma}
    \label{lemma: neighbors}
    Under Assumption \ref{Assumption on communication graph and neighbors}:
    (i) $|\Neigh{j}{C} \cap \Nhop{i}{k}|> 0 \  \text{ or } \ |\Neigh{j}{C}\cap \Neigh{i}{C}|> 0$ for all $i \in \mathcal{V}$ and $j \in \Nhop{i}{k}$.
    (ii) For all $i \in \mathcal{V}$ and for each connected component in the \mbox{subgraph} $\mathcal{G}^{\khop{k}}_i$, there exists at least one $j \in \Nhop{i}{k}$ for which $\ |\Neigh{j}{C}\cap \Neigh{i}{C}|> 0$.
\end{lemma}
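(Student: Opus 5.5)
Both parts follow from the definition of $\Nhop{i}{k}$ through shortest (or explicit) paths in $\mathcal{G}_C$, which is connected by Assumption~\ref{Assumption on communication graph and neighbors}. Throughout, I use that $j \in \Nhop{i}{k}$ means there is a $p$-hop path $i = v_0, v_1, \dots, v_p = j$ with $2 \leq p \leq k$, where consecutive vertices are adjacent in $\mathcal{G}_C$ and the vertices are distinct (paths have non-repeating edges; if needed, pass to a simple sub-path of length at least $2$). The key observation is that the intermediate vertices $v_2,\dots,v_{p-1}$ lie in $\Nhop{i}{k}$, since the prefix $v_0,\dots,v_q$ is a $q$-hop path with $2\le q \le p\le k$, while $v_1 \in \Neigh{i}{C}$.

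\textbf{Part (i).} Fix $j \in \Nhop{i}{k}$ with such a path, and split on $p$. If $p = 2$, then $v_1 \in \Neigh{i}{C}$ and $v_1$ is adjacent to $j$, so $v_1 \in \Neigh{j}{C}\cap\Neigh{i}{C}$ and the second alternative holds. If $p \geq 3$, then $v_{p-1}$ is adjacent to $j$ and, by the observation, $v_{p-1} \in \Nhop{i}{k}$. Hence $v_{p-1} \in \Neigh{j}{C}\cap\Nhop{i}{k}$ and the first alternative holds.

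\textbf{Part (ii).} Fix a connected component $\mathcal{C}$ of $\mathcal{G}^{\khop{k}}_i$ and pick any $j \in \mathcal{C}$ with its path $v_0,\dots,v_p$. The vertices $v_2,\dots,v_p$ all lie in $\Nhop{i}{k}$, and consecutive ones are joined by edges of $\mathcal{E}_C$ with both endpoints in $\Nhop{i}{k}$, i.e.\ by edges of $\mathcal{E}^{\khop{k}}_i$. Therefore $v_2,\dots,v_p=j$ all belong to $\mathcal{C}$. Since $v_2$ is adjacent to $v_1 \in \Neigh{i}{C}$, we get $|\Neigh{v_2}{C}\cap\Neigh{i}{C}|>0$, so $v_2$ is the required agent.

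\textbf{Main obstacle.} The delicate point is ensuring that the intermediate vertices genuinely belong to $\Nhop{i}{k}$ rather than to $\Neigh{i}{C}\cup\{i\}$. The definition only asks for the existence of some $q$-hop path with $q\ge2$, so a vertex may lie in both $\Neigh{i}{C}$ and $\Nhop{i}{k}$, and that overlap is harmless. What matters is that every prefix of length at least $2$ is itself a valid path of length at most $k$. This needs the path to be simple; a non-simple path can be shortened to a simple one of length at least $2$, and if the shortcut leaves length $1$ then $j\in\Neigh{i}{C}$ and we reselect the path.
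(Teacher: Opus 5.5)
Your argument is correct and is essentially the paper's proof. Part (i) uses the same split between a length-$2$ connection, which gives a common neighbor with $i$, and a longer one, whose penultimate vertex lies in $\Nhop{i}{k}$; part (ii) is an explicit version of the paper's terse contradiction argument that a path from $i$ enters each component of $\mathcal{G}^{\khop{k}}_i$ at a vertex adjacent to a neighbor of $i$.

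Your closing worry about simplicity is unnecessary. Consecutive vertices are automatically distinct, and a prefix of length at least $2$ of any edge-non-repeating path already certifies membership in $\Nhop{i}{k}$; your ``reselect'' fallback could in fact fail when $(i,j)$ is a bridge. The paper's choice of a shortest path (via $D_{ij}$) is what would keep your argument valid under the stricter reading $2\le D_{ij}\le k$ of $\Nhop{i}{k}$ that its proof and examples assume, since an intermediate vertex of a longer path may then be a direct neighbor of $i$.
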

\begin{proof}
    (i) By contradiction assume there exists $i \in \mathcal{V}$ and $j \in \Nhop{i}{k}$ such that $|\Neigh{j}{C}\cap \Nhop{i}{k}|= 0$ and $|\Neigh{j}{C}\cap \Neigh{i}{C}| = 0$. If $|\Neigh{j}{C}\cap \Neigh{i}{C}| = 0$, then $D_{ij} > 2$, which implies $|\Neigh{j}{C}\cap \Nhop{i}{k}|\neq 0$, contradicting the original assumption. Similarly, if $|\Neigh{j}{C}\cap \Nhop{i}{k}|= 0$ with $j \in \Nhop{i}{k}$, then $D_{ij} = 2$. Therefore, there must exist $q \in \Neigh{j}{C}\cap \Neigh{i}{C}$, which contradicts the assumption that $|\Neigh{j}{C}\cap \Neigh{i}{C}| = 0$.
    (ii) Suppose by contradiction that for some $i \in \mathcal{V}$, every $j$ in a connected component of $\mathcal{G}^{\khop{k}}_i$ is such that $|\Neigh{j}{C}\cap \Neigh{i}{C}|= 0$. This implies there does not exist a \mbox{$l$-hop} path, with $2\leq l\leq k$, between $i$ and any agent $q$ in the connected component, which contradicts the assumption $q \in \Nhop{i}{k}$.
\end{proof}

Intuitively, condition (i) in Lemma~\ref{lemma: neighbors} ensures that every agent in the \mbox{$k$-hop} neighborhood of $i \in \mathcal{V}$ is connected either directly to $i$ via a shared neighbor or to another agent within $\Nhop{i}{k}$. Condition (ii) further guarantees that each connected component of $\mathcal{G}^{\khop{k}}_i$ contains at least one agent sharing a direct neighbor with $i$. Together, these conditions ensure that every connected component remains indirectly linked to $i$, allowing information originating from $i$ to propagate across the entire \mbox{$k$-hop} induced graph.

\begin{lemma}
    \label{Lemma: sum of two positive semi-definite matrices}
    (\cite[Cor. 4.3.12]{Horn_Johnson_2012})
    Let $A, B \in \mathbb{R}^n$ be Hermitian and suppose that $B$ is positive \mbox{semi-definite}. Then, $\lambda_i(A) \leq \lambda_i(A+B)$, $i =1,\dots, n$, with equality for some $i$ if and only if $B$ is singular and there is a nonzero vector $x$ such that $A x = \lambda_i(A)x$, $Bx = 0$, and $(A+B)x = \lambda_i(A+B)x$.
\end{lemma}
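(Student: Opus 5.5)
We take the eigenvalues in nondecreasing order, $\lambda_1(\cdot)\le\dots\le\lambda_n(\cdot)$, and read the statement's $A,B\in\mathbb{R}^n$ as $n\times n$ Hermitian matrices. The plan rests on the Courant--Fischer min--max characterization
\[
\lambda_i(C)=\min_{\dim S=i}\ \max_{x\in S,\ \norm{x}=1} x^{*}Cx .
\]

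\textbf{Inequality.} Since $B$ is positive semi-definite, $x^{*}(A+B)x\ge x^{*}Ax$ for every unit vector $x$. Taking the maximum over any $i$-dimensional subspace $S$ and then the minimum over all such $S$ preserves this inequality. Hence $\lambda_i(A)\le\lambda_i(A+B)$ for every $i$.

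\textbf{Sufficiency of the equality conditions.} Suppose there is a nonzero $x$ with $Ax=\lambda_i(A)x$, $Bx=0$ and $(A+B)x=\lambda_i(A+B)x$. Then $(A+B)x=Ax=\lambda_i(A)x$, so $\lambda_i(A+B)x=\lambda_i(A)x$. Since $x\neq 0$, this gives $\lambda_i(A+B)=\lambda_i(A)$. Moreover, $Bx=0$ with $x\neq 0$ shows that $B$ is singular.

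\textbf{Necessity, the main step.} Assume $\lambda_i(A+B)=\lambda_i(A)=:\lambda$. Define two subspaces from orthonormal eigenbases:
\begin{itemize}
\item $S$ is spanned by eigenvectors of $A+B$ for $\lambda_1(A+B),\dots,\lambda_i(A+B)$, so $\dim S=i$;
\item $T$ is spanned by eigenvectors of $A$ for $\lambda_i(A),\dots,\lambda_n(A)$, so $\dim T=n-i+1$.
\end{itemize}
Since $\dim S+\dim T=n+1>n$, there is a unit vector $x\in S\cap T$. On $S$ the Rayleigh quotient of $A+B$ is at most $\lambda$, and on $T$ the Rayleigh quotient of $A$ is at least $\lambda$. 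Therefore
\[
0\le x^{*}Bx = x^{*}(A+B)x - x^{*}Ax \le \lambda-\lambda=0 .
\]
So $x^{*}Bx=0$. Because $B\succeq 0$, writing $x^{*}Bx=\norm{B^{1/2}x}^2$ gives $B^{1/2}x=0$, hence $Bx=0$ and $B$ is singular.

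Both inequalities above are then tight, so $x^{*}Ax=\lambda$ and $x^{*}(A+B)x=\lambda$.
\begin{itemize}
\item Expand $x$ in the eigenbasis of $A$ restricted to $T$. The Rayleigh quotient there is a convex combination of eigenvalues that are all $\ge\lambda$. It equals $\lambda$ only if all weight lies on eigenvectors with eigenvalue exactly $\lambda$, which gives $Ax=\lambda x$.
\item The same argument for $A+B$ on $S$, where all eigenvalues are $\le\lambda$, gives $(A+B)x=\lambda x$.
\end{itemize}
This completes the characterization. We expect this equality case to be the main obstacle: it hinges on choosing the complementary subspaces $S$ and $T$ correctly and on the observation that a PSD quadratic form vanishes only on the kernel.
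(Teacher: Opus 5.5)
Your proof is correct. The paper gives no argument of its own: it imports the statement verbatim from \cite[Cor.~4.3.12]{Horn_Johnson_2012}. There it is derived from Weyl's inequalities by taking $j=1$ in $\lambda_{i-j+1}(A)+\lambda_j(B)\le\lambda_i(A+B)$ and using $\lambda_1(B)\ge 0$. Equality then forces $\lambda_1(B)=0$, so $B$ is singular, together with equality in Weyl's bound. The equality characterization of Weyl's theorem then supplies a common vector $x$ with $Bx=\lambda_1(B)x=0$.

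You instead argue from scratch. You use Courant--Fischer for the inequality. For the equality case you intersect the bottom-$i$ eigenspace of $A+B$ with the top-$(n-i+1)$ eigenspace of $A$, and then use the fact that a positive semi-definite form vanishes only on its kernel. This is in effect the $j=1$ instance of the subspace-intersection proof behind Weyl's theorem, in which the subspace attached to $B$ is the whole space. In your version, the kernel fact replaces the appeal to $\lambda_1(B)=0$.

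The citation buys brevity and places the result in the full Weyl family. Your version is self-contained and needs only the spectral theorem and Rayleigh-quotient bounds. That is all the paper's later use in Lemma~\ref{Lemma: Mkc positive definit} (the case $i=1$) actually requires.

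One minor simplification: your last bullet for $A+B$ is redundant. Once $Ax=\lambda x$ and $Bx=0$, you already have $(A+B)x=Ax=\lambda x$.
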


Using Lemmas~\ref{lemma: neighbors} and \ref{Lemma: sum of two positive semi-definite matrices}, it follows that:
\begin{lemma}
    \label{Lemma: Mkc positive definit}
     Under Assumption \ref{Assumption on communication graph and neighbors}, $\kc{M}{i} \succ 0 $ for all $i \in \mathcal{V}$.
\end{lemma}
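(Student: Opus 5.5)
The plan is to decompose $\kc{M}{i}$ along the connected components of $\mathcal{G}^{\khop{k}}_i$ and show positive definiteness of each diagonal block.

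\emph{Step 1: block decomposition.} After relabeling the elements of $\Nhop{i}{k}$ so that nodes of the same connected component are consecutive, $\kc{L}{i}$ becomes block diagonal, with blocks equal to the Laplacians $L_c$ of the components $c = 1,\dots,r$. Since $\kc{H}{i}$ is diagonal, it splits accordingly into diagonal blocks $H_c$. Thus $\kc{M}{i}$ is permutation-similar to $\text{diag}(L_1+H_1,\dots,L_r+H_r)$. It therefore suffices to prove $L_c + H_c \succ 0$ for every $c$. Each block is symmetric, and positive semi-definiteness is immediate: $L_c \succeq 0$ as a graph Laplacian, and $H_c \succeq 0$ as a diagonal matrix with nonnegative entries. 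Lemma~\ref{Lemma: sum of two positive semi-definite matrices}, applied with $A = L_c$ and $B = H_c$, gives $\lambda_{\min}(L_c+H_c) \ge \lambda_{\min}(L_c) = 0$.

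\emph{Step 2: exclude a zero eigenvalue.} This is the main step. I would argue by contradiction: suppose $x \ne 0$ satisfies $(L_c+H_c)x = 0$. Then
\[
0 = x^\top L_c x + x^\top H_c x = \sum_{(p,q)\in \text{comp. } c} (x_p-x_q)^2 + \sum_{p} |\Neigh{p}{C}\cap\Neigh{i}{C}|\, x_p^2 .
\]
Both sums are nonnegative, so each vanishes.
\begin{itemize}
\item Because component $c$ is connected, the first sum vanishing forces $x = \alpha 1$ for some $\alpha \neq 0$; that is, $x$ lies in the kernel of $L_c$.
\item By Lemma~\ref{lemma: neighbors}(ii), component $c$ contains some $j$ with $|\Neigh{j}{C}\cap\Neigh{i}{C}|>0$. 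The second sum is then at least $|\Neigh{j}{C}\cap\Neigh{i}{C}|\,\alpha^2 > 0$, a contradiction.
\end{itemize}
Equivalently, in the language of Lemma~\ref{Lemma: sum of two positive semi-definite matrices}: equality $\lambda_1(L_c)=\lambda_1(L_c+H_c)=0$ would require a common null vector of $L_c$ and $H_c$. The only null vectors of $L_c$ are multiples of $1$, and $H_c 1 \ne 0$, so no such vector exists and the inequality is strict.

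\emph{Step 3: conclude.} Each $L_c+H_c$ has all eigenvalues strictly positive. Hence $\kc{M}{i} \succ 0$, since positive definiteness is invariant under permutation similarity.

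Two edge cases need only a check rather than real work. An isolated node forms a component with $L_c = 0$, and by Lemma~\ref{lemma: neighbors}(i)/(ii) it has $H_c > 0$. If $\Nhop{i}{k}$ is empty, the statement is vacuous.

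The only genuinely substantive point is Step 2, namely using connectedness of each component to pin down the kernel of $L_c$. This is precisely where Lemma~\ref{lemma: neighbors}(ii) is needed.
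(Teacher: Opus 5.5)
Your proof is correct and is essentially the paper's argument: the null vectors of $\kc{L}{i}$ are exactly the vectors that are constant on each connected component of $\mathcal{G}^{\khop{k}}_i$, and Lemma~\ref{lemma: neighbors}(ii) gives a strictly positive entry of $\kc{H}{i}$ in every component, so $\kc{L}{i}$ and $\kc{H}{i}$ have no common nonzero null vector and Lemma~\ref{Lemma: sum of two positive semi-definite matrices} forces $\lambda_{\min}(\kc{M}{i})>0$. Your block decomposition and quadratic-form computation just spell out the step the paper states briefly, namely that no nonzero combination of component indicator vectors is annihilated by $\kc{H}{i}$.
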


\begin{proof}
Since $\kc{L}{i}$ is the Laplacian matrix of the subgraph $\mathcal{G}^{\khop{k}}_i$, we have $\lambda_{\min}(\kc{L}{i}) = 0$. Furthermore, since $\kc{M}{i}$ is the sum of two real, positive \mbox{semi-definite}, symmetric matrices, it is Hermitian \cite[Def.~4.1.1]{Horn_Johnson_2012}. Therefore, Lemma~\ref{Lemma: sum of two positive semi-definite matrices} applies, and $\kc{M}{i} \succ 0$ is guaranteed if there does not exist an $x \in \mathbb{R}^{\eta_i}$ orthogonal to $\kc{L}{i}$ and $\kc{H}{i}$, i.e., for which $\kc{L}{i} x = 0$ and $\kc{H}{i} x = 0$. 
Suppose $\mathcal{G}^{\khop{k}}_i$ contains $m^{\text{sub}}_i$ connected components $\mathcal{G}^{\khop{k}}_{i, r}$, $r \in \{1,\dots, m^{\text{sub}}_i\}$. Then, each eigenvector associated with the zero eigenvalue of $\kc{L}{i}$ belongs to the span of the $m^{\text{sub}}_i$ binary vectors representing consensus on each connected component $\mathcal{G}^{\khop{k}}_{i, r}$. However, by Lemma \ref{lemma: neighbors}, each $\mathcal{G}^{\khop{k}}_{i, r}$ has at least an associated \mbox{non-zero} element in the diagonal matrix $\kc{H}{i}$. Thus, none of the eigenvectors of $\kc{L}{i}$ is orthogonal to $\kc{H}{i}$, leading to $0 < \lambda_j(\kc{M}{i})$ for all $j \in \{1, \dots, \eta_i\}$.
\end{proof}

The following example is provided to clarify the notation and definitions above.
\begin{example}
    Consider the MAS in Fig~\ref{fig: example of khop induced graph, communication graph}. The \mbox{$3$-hop} induced graph associated to Agent $1$ is shown in Fig~\ref{fig: example of khop induced graph, 2hop induced graph} and is defined as $\mathcal{G}_1^{\khop{k}} = (\Nhop{1}{k}, \mathcal{E}_1^\khop{k})$, where $\Nhop{1}{k} = \{3,4,5,6\}$ and $\mathcal{E}_1^\khop{k} = \{(3,5), (4,6)\}$. 

As a result, $\kc{L}{i} = \begin{bmatrix}
    I_2 & -I_2 \\ -I_2 & I_2
    \end{bmatrix}$, $\kc{H}{i} = \begin{bmatrix}
     I_2 & 0_2 \\ 0_2 & 0_2
    \end{bmatrix}$ and $\kc{M}{i} = \begin{bmatrix}
    2I_2 & -I_2 \\ -I_2 & I_2
    \end{bmatrix}.$
    $\hfill \triangle$ 
\end{example}
\begin{figure}[t]
    \centering
    \begin{subfigure}[b]{0.48\linewidth}
        \centering
        \includegraphics[width=\linewidth]{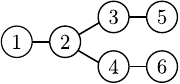}
        \caption{}
        \label{fig: example of khop induced graph, communication graph}
    \end{subfigure}
    \hfill
    \begin{subfigure}[b]{0.48\linewidth}
        \centering
        \includegraphics[width=0.45\linewidth]{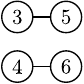}
        \caption{}
        \label{fig: example of khop induced graph, 2hop induced graph}
    \end{subfigure}
    \caption{(a) Communication graph; (b) Agent~$1$'s $3$-hop induced graph.}
    \label{fig: example of khop induced graph}
\end{figure}

\subsection{Disagreement vector}
For each $i \in \mathcal{V}$ and $N_j^i \in \Nhop{i}{k}$, define the disagreement component on the estimate of $x_i$ performed by $N_j^i$ as:
\begin{equation}
    \label{eq: disagreement vector on the estimate of agent i}
    \xi_i^{N_j^i} :=\sum_{l\in(\Neigh{N_j^i}{C} \cap \Nhop{i}{k})} (\statediff{\hat{x}}{N_j^i}{i}{\hat{x}}{l}) + |\Neigh{N_j^i}{C} \cap \Neigh{i}{C}|  (\hat{x}^{N_j^i}_i - x_i),
\end{equation}
where $\xi_i^{N_j^i}$ represents a local disagreement term capturing the inconsistency of the estimate $\hat{x}^{N_j^i}_i$ with respect to: (i) the state estimates $\estimate{x}{l}{i}$ shared by agents $l \in \Neigh{N^i_{j}}{C} \cap \Nhop{i}{k}$ and (ii) the true state information $x_i$ shared by agents $l \in \Neigh{N^i_{j}}{C} \cap \Neigh{i}{C}$. In particular, the first summation term in \eqref{eq: disagreement vector on the estimate of agent i} quantifies the disagreement with neighboring estimates, while the second term penalizes the deviation from $x_i$, weighted by the number of neighbors with access to this information, i.e., $|\Neigh{N_j^i}{C} \cap \Neigh{i}{C}|$. Hence, $\xi_i^{N_j^i}$ acts as a consensus-like correction term promoting alignment with neighboring estimates and consistency with locally available state information.

By stacking $\xi_i^{N_j^i}$ for all $N_j^i \in \Nhop{i}{k}$ and by using the state estimation error defined in \eqref{Definition: error on input and state estimation of agent i}, the \textbf{disagreement vector} defined as $\vect{\xi}_i := \Bigl[\xi_{i}^{N_1^i}, \dots ,\xi_{i}^{N^i_{\eta_i}}\Bigr]^\top$ can be expressed as:

\begin{equation}
    \label{eq: disagreement vector expression}
    \vect{\xi}_i = (\kc{L}{i} + \kc{H}{i}) \error{\vect{x}}{}{i} = \kc{M}{i} \error{\vect{x}}{}{i},
\end{equation}
where $\kc{M}{i}\in \mathbb{R}^{\eta_i \times \eta_i}$ is defined as in \eqref{eq: Mkc defintion}.

\begin{remark}
    \label{Remark on local disagreement vector computation and positive definiteness of Mik}
    If Assumption~\ref{Assumption on 2-hop communication} does not hold, $x_i$  becomes unavailable for local computation of the disagreement component. Nevertheless, \eqref{eq: disagreement vector on the estimate of agent i} can still be evaluated locally and the positive definiteness of $\kc{M}{i}$ can be preserved by extending the definition of the \mbox{$k$-hop} neighborhood to include each agent’s \mbox{$1$-hop} neighbors, i.e., by introducing $\Bar{\mathcal{N}}_i^{k\text{-hop}} := \Nhop{i}{k} \cup \Neigh{i}{C}$ for all $i \in \mathcal{V}$. In this case, $\kc{L}{i}$ is the Laplacian matrix of the subgraph induced by the new \mbox{$k$-hop} neighbors $\Bar{\mathcal{N}}_i^{k\text{-hop}}$ of agent~$i$, and $\kc{H}{i} := \text{diag}(h^{N_1^i}_i, \dots, h^{N_{\eta_i}^i}_i) \in \mathbb{R}^{\eta_i \times \eta_i}$, where $h^{N_j^i}_i = 1$ if $N_j^i \in \Neigh{i}{C}$.
\end{remark}

\subsection{Prescribed performance observer}
\label{Section: Observer requirements}
To guarantee prescribed performances on the state estimation errors, the observer must be designed so that $\error{x}{N_j^i}{i}(t)$ satisfies
\begin{equation}
    \label{eq: state estimation error inequality}
    |\error{x}{N_j^i}{i}(t)| <\delta^{N_j^i}_i(t),
\end{equation}
for all $i \in \mathcal{V}$ and $N_j^i \in \Nhop{i}{k}$, where $\delta^{N_j^i}_i :\mathbb{R}_{\geq 0} \rightarrow \mathbb{R}$ is a prescribed performance function, defined as in Definition~\ref{Definition of Prescribed performance function}, which specifies the desired time-varying bounds on the corresponding estimation error.

Lemma~\ref{Lemma: Mkc positive definit} ensures that $\kc{M}{i}$ is invertible. Hence, from \eqref{eq: disagreement vector expression} we obtain $\error{\vect{x}}{}{i} = {\kc{M}{i}}^{-1} \vect{\xi}_i$. Consequently, since $|{\error{x}{N_j^i}{i}}| \leq \sum^{\eta_i}_{r = 1} |{{(\kc{M}{i}}^{-1}})_{N_j^i, r}| |\xi^{r}_i|$ holds, to enforce \eqref{eq: state estimation error inequality} it suffices to constrain the evolution of $\xi_i^{N_j^i}$ such that
\begin{equation}
    \label{eq: definition of the bounds on the disagreement term}
    |\xi_i^{N_j^i}| < \rho_i^{N_j^i}(t)
\end{equation}
holds for all $N_j^i \in \Nhop{i}{k}$, where each $\rho_i^{N_{j}^i}(t)$ is a prescribed performance function satisfying
\begin{equation}
    \label{eq: constraint on the prescribed prformance functions}
    \sum^{\eta_i}_{r = 1} |{{(\kc{M}{i}}^{-1}})_{N_j^i, N_r^i}| \rho^{N_r^i}_i(t) \leq \delta^{N_{j}^i}_i(t),  
\end{equation}
$\rho^{N_j^i}_i(0) \geq \rho^{N_j^i}_i(\infty) > 0$ and $\rho^{N_j^i}_i(0) > |\xi_{i}^{N_j^i}(0)|$.
Lemma~\ref{Lemma: sum of prescribed perofmance functions} implies that $\sum^{\eta_i}_{r = 1} |{{(\kc{M}{i}}^{-1}})_{N_j^i, N_r^i}| \rho^{N_r^i}_i(t)$ is a prescribed performance function. Therefore, $\rho_i^{N_{j}^i}(t)$ can be designed so that \eqref{eq: constraint on the prescribed prformance functions} is satisfied with equality for some $N_j^i \in \Nhop{i}{k}$.

In our preliminary work \cite{zaccherini2026observerbasedcontrolmultiagentsystems}, the performance functions $\rho_i^{N_{j}^i}(t)$ were designed to satisfy \begin{equation}
    \label{eq: old constraint on the prescribed prformance functions}
    \lVert\vect{\rho}_i(t)\rVert  \leq \lambda_{\text{min}}(\kc{M}{i}) \min_{j \in \{1,\dots, \eta_i\}} \{\delta_i^{N_j^i}(t)\},
\end{equation}
with $\vect{\rho}_i(t) := \Bigl[\rho_i^{N_1^i}(t),\dots , \rho_i^{N_{\eta_i}^i}(t)\Bigr]^\top$. As will be shown in the following, this design is quite conservative. In particular, for fixed bounds $\delta_i^{N_{j}^i}(t)$, it permits smaller admissible values of $\rho_i^{N_{j}^i}(t)$ than the \mbox{component-wise} design proposed in this paper. Formally:

\begin{proposition}
    Any vector $\vect{\rho}_i(t)$ satisfying the norm constraint in \eqref{eq: old constraint on the prescribed prformance functions} also satisfies \eqref{eq: constraint on the prescribed prformance functions}. Furthermore, for fixed bounds $\delta_i^{N_{j}^i}(t)$, the \mbox{component-wise} constraint \eqref{eq: constraint on the prescribed prformance functions} admits larger $\rho_i^{N_{j}^i}(t)$ than those resulting from \eqref{eq: old constraint on the prescribed prformance functions}.
\end{proposition}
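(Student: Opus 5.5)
The plan is to bound the left-hand side of \eqref{eq: constraint on the prescribed prformance functions} by an operator norm of $\kc{M}{i}^{-1}$ times $\lVert\vect{\rho}_i(t)\rVert$, and then use \eqref{eq: old constraint on the prescribed prformance functions} to close the chain. Fix $t$, write $A := \kc{M}{i}^{-1}$, and let $e_j$ be the $j$-th canonical vector. By Lemma~\ref{Lemma: Mkc positive definit}, $\kc{M}{i}$ is symmetric positive definite. Hence $A$ is symmetric positive definite with $\lVert A\rVert = 1/\lambda_{\min}(\kc{M}{i})$.

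For the first claim, denote by $|A|$ the entrywise absolute value. Since $\rho_i^{N_r^i}(t) > 0$, the $j$-th left-hand side of \eqref{eq: constraint on the prescribed prformance functions} equals $e_j^\top |A| \vect{\rho}_i(t)$. Cauchy--Schwarz gives
\begin{equation*}
e_j^\top |A| \vect{\rho}_i(t) \leq \lVert (|A|)_{j,:}\rVert \, \lVert \vect{\rho}_i(t)\rVert = \lVert (A)_{j,:}\rVert \, \lVert \vect{\rho}_i(t)\rVert .
\end{equation*}
Taking absolute values entrywise leaves the Euclidean norm of the row unchanged. Moreover, $\lVert (A)_{j,:}\rVert = \lVert A e_j\rVert \leq \lVert A\rVert = 1/\lambda_{\min}(\kc{M}{i})$. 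Combining this with \eqref{eq: old constraint on the prescribed prformance functions}, the left-hand side is at most $\min_{l}\delta_i^{N_l^i}(t) \leq \delta_i^{N_j^i}(t)$, which gives \eqref{eq: constraint on the prescribed prformance functions} for every $j$. This step is routine. The key point is to pass through the row norm, rather than the induced $\infty$-norm of $A$, since the latter is not controlled by $\lambda_{\min}$.

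For the second claim, I would read "admits larger $\rho$" as strict inclusion of the feasible sets: the old set is contained in the new one by the first part, and I will show the inclusion is strict. I also want to show that the component-wise maxima of the new set dominate those of the old set. Under \eqref{eq: old constraint on the prescribed prformance functions}, any single component satisfies $\rho_i^{N_j^i}(t) \leq \lambda_{\min}(\kc{M}{i})\min_l \delta_i^{N_l^i}(t)$. Under \eqref{eq: constraint on the prescribed prformance functions}, one can take, for instance, all components proportional to the solution of $|A|\vect{\rho} = \vect{\delta}$ when this solution is positive. Alternatively, one can take $\rho_i^{N_j^i} = \min_l \delta_i^{N_l^i}/\lVert |A| 1_{\eta_i}\rVert_\infty$ for all $j$. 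Then I compare with the old value. Note that $\lVert |A|1\rVert_\infty$ relates to $\lVert A\rVert$ only up to a $\sqrt{\eta_i}$ factor, so a uniform strict comparison will not follow from norms alone.

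The main obstacle is therefore making "larger" precise without over-claiming, since in degenerate cases the inequalities above can be tight. My first attempt is to exhibit a vector satisfying \eqref{eq: constraint on the prescribed prformance functions} but violating \eqref{eq: old constraint on the prescribed prformance functions}. A concrete candidate is the vector that saturates the $j$-th component-wise constraint: set all components except the $j$-th to a small $\epsilon > 0$, and choose the $j$-th so that the $j$-th row of \eqref{eq: constraint on the prescribed prformance functions} holds with equality. This allows $\rho_i^{N_j^i} \approx \delta_i^{N_j^i}/(A)_{jj}$. Since $A \succ 0$, the Rayleigh quotient gives $(A)_{jj} \leq \lambda_{\max}(A) = 1/\lambda_{\min}(\kc{M}{i})$, so this value is at least $\lambda_{\min}(\kc{M}{i})\delta_i^{N_j^i}$, the old bound. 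The inequality is strict unless $e_j$ is an eigenvector of $A$ for its largest eigenvalue, or the $\delta$'s are unequal. I would handle this equality case separately, noting that it corresponds to the trivial situation where the old and new constraints coincide. For all other components the $\epsilon$ terms keep the remaining rows feasible once $\epsilon$ is small.
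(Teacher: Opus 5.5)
Your argument for the first claim is the paper's proof. You apply Cauchy--Schwarz to each row of $|A|$, bound the row norm by $\lVert A\rVert=\lambda_{\min}(\kc{M}{i})^{-1}$, and use $\min_l\delta_i^{N_l^i}\le\delta_i^{N_j^i}$, which gives $\mathcal{S}_1(t)\subseteq\mathcal{S}_2(t)$ in the paper's notation for the two feasible sets. The paper stops there: it reads ``admits larger $\rho_i^{N_j^i}$'' as this inclusion and proves nothing strict.

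The extra argument you give for the second claim breaks at one step. You saturate row $j$ with $\rho_j\approx\delta_j/A_{jj}$ and set the other components to $\epsilon$, then say the remaining rows stay feasible. They need not. Row $l\neq j$ contains $|A_{lj}|\rho_j\approx|A_{lj}|\delta_j/A_{jj}$, which is not an $\epsilon$-term and can exceed $\delta_l$.

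A counterexample comes from the scalar version of the paper's example, $\kc{M}{i}=\begin{bmatrix}2&-1\\-1&1\end{bmatrix}$, so $A=\begin{bmatrix}1&1\\1&2\end{bmatrix}$. With $\delta_1=2$ and $\delta_2=1$ the component-wise constraints read $\rho_1+\rho_2\le 2$ and $\rho_1+2\rho_2\le 1$. Your vector with $\rho_1\approx 2$ violates the second one for every $\epsilon>0$.

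The repair is to push $\rho_j$ only up to $\rho_j^\star:=\min_{l:\,A_{lj}\neq 0}\delta_l/|A_{lj}|$, letting the other components tend to $0^+$; this only relaxes the constraints because $|A|\ge 0$ entrywise. Then compare using $|A_{lj}|\le\lVert A\rVert=\lambda_{\min}(\kc{M}{i})^{-1}$. This gives $\rho_j^\star\ge\lambda_{\min}(\kc{M}{i})\min_l\delta_l$, which is the supremum of any single component under the norm constraint. In the example the values are $1$ versus $(3-\sqrt5)/2$.

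Your equality-case discussion also needs correcting.
\begin{itemize}
\item Equality in $\delta_j/A_{jj}\ge\lambda_{\min}(\kc{M}{i})\min_l\delta_l$ requires both $A_{jj}=\lambda_{\max}(A)$ and $\delta_j=\min_l\delta_l$. So unequal $\delta$'s help strictness rather than hurt it.
\item Equal single-component suprema do not mean the two constraints coincide. Take $\kc{M}{i}=\mathrm{diag}(1,2)$, an edgeless induced graph, with equal $\delta$'s: the new set is a box and the old one a quarter disk.
\end{itemize}
The two sets actually coincide only when $\eta_i=1$.

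If you want strict inclusion for $\eta_i\ge 2$, let $R:=\lambda_{\min}(\kc{M}{i})\min_l\delta_l$ and take $p=R\,1_{\eta_i}/\sqrt{\eta_i}$. Either no component-wise row is active at $p$, and scaling $p$ up slightly gives a point of $\mathcal{S}_2(t)$ outside the ball. Or every active row supports the ball at $p$, so its normal is parallel to $1_{\eta_i}$. In that case $p+s(c_1-c_2)$, with $c_1,c_2$ canonical vectors and $s$ small, stays in $\mathcal{S}_2(t)$ but has norm larger than $R$.
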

\begin{proof}
    Let $\mathcal{S}_1(t) := \{\vect{\rho}_i(t): \lVert\vect{\rho}_i(t)\rVert  \leq \lambda_{\text{min}}(\kc{M}{i}) \min_{j \in \{1,\dots, \eta_i\}} \{\delta_i^{N_j^i}(t)\}\}$ and $\mathcal{S}_2(t) := \{\vect{\rho}_i(t): \sum^{\eta_i}_{r = 1} |{{(\kc{M}{i}}^{-1}})_{N_j^i, r}| \rho^{r}_i(t) \leq \delta^{N_{j}^i}_i(t), \forall N_{j}^i \in \Nhop{i}{k}\}$ be the sets containing the vectors $\vect{\rho}_i(t)$ satisfying the constraints in \eqref{eq: old constraint on the prescribed prformance functions} and \eqref{eq: constraint on the prescribed prformance functions}, respectively.
    
    Since prescribed performance functions are positive by Definition~\ref{Definition of Prescribed performance function},  the \mbox{Cauchy–Schwarz} inequality implies $\sum^{\eta_i}_{r = 1} |{{(\kc{M}{i}}^{-1}})_{N_j^i, r}| \rho^{r}_i \leq \sqrt{\sum^{\eta_i}_{r = 1} |{{(\kc{M}{i}}^{-1}})_{N_j^i, r}|^2}\norm{\vect{\rho}_i(t)}$ for all $i \in \mathcal{V}$ and $N^i_j \in \Nhop{i}{k}$. By definition of Euclidean norm, $\sqrt{\sum^{\eta_i}_{r = 1} |{{(\kc{M}{i}}^{-1}})_{N_j^i, r}|^2} = \norm{( {\kc{M}{i}}^{-1})_{N_j^i,:}}$. Thus, since $\norm{( {\kc{M}{i}}^{-1})_{N_j^i,:}} \leq \norm{{\kc{M}{i}}^{-1}}$ and $\norm{{\kc{M}{i}}^{-1}} \leq \lambda_{\text{min}}(\kc{M}{i})^{-1} $, the inequality $\sum^{\eta_i}_{r = 1} |{{(\kc{M}{i}}^{-1}})_{N_j^i, r}| \rho^{r}_i \leq  \lambda_{\text{min}}(\kc{M}{i})^{-1}  \norm{\vect{\rho}_i(t)}$ is satisfied. For any $\vect{\rho}_i(t) \in \mathcal{S}_1$, $\lVert\vect{\rho}_i(t)\rVert  \leq \lambda_{\text{min}}(\kc{M}{i}) \min_{j \in \{1,\dots, \eta_i\}} \{\delta_i^{N_j^i}(t)\}$ holds. Thus, $\sum^{\eta_i}_{r = 1} |{{(\kc{M}{i}}^{-1}})_{N_j^i, r}| \rho^{r}_i \leq \min_{j \in \{1,\dots, \eta_i\}} \{\delta_i^{N_j^i}(t)\} \leq \delta_i^{N_{j}^i}(t)$ is valid for all $\vect{\rho}_i(t) \in \mathcal{S}_1(t)$.
    Then, since $\lVert\vect{\rho}_i(t)\rVert  \leq \lambda_{\text{min}}(\kc{M}{i}) \min_{j \in \{1,\dots, \eta_i\}} \{\delta_i^{N_j^i}(t)\}$ implies $\sum^{\eta_i}_{r = 1} |{{(\kc{M}{i}}^{-1}})_{N_j^i, r}| \rho^{r}_i(t) \leq \delta^{N_{j}^i}_i(t)$, we conclude $\mathcal{S}_1(t) \subseteq \mathcal{S}_2(t)$.
\end{proof}

\begin{remark}
    Since $\rho_i^{N_j^i}(t)$ are design choices, desired behavior can be imposed to every estimation error $\error{x}{N_j^i}{i}(t)$ by designing appropriate $\rho_i^{N_j^i}(t)$.
\end{remark}

Inspired by the PPC literature \cite{4639441}, for every $i \in \mathcal{V}$ and $ N_j^i \in \Nhop{i}{k}$, we introduce the normalized and the transformed disagreement as $e_i^{N_j^i} :=\rho^{N_j^i}_i(t)^{-1}\xi_{i}^{N_j^i}$ and  
\begin{equation}
\vspace{-0.05cm}
    \label{eq: transformed normalized disagreement}
    \epsilon^{N_j^i}_i := T(e^{N_j^i}_i) = T(\rho^{N_j^i}_i(t)^{-1}\xi_{i}^{N_j^i}),
\end{equation}
where $T: (-1,1) \rightarrow \mathbb{R}$ is a strictly increasing transformation satisfying $T(0) = 0$. In this work, we select $T(e) = \ln(\frac{1+e}{1-e})$, which has a strictly positive derivative $ J_T(e) = \frac{2}{1-e^2}$.
Then, by defining the transformed normalized disagreement vector as $\vect{\epsilon}_{i}  := \left[\epsilon^{N_1^i}_i, \dots ,\epsilon_{i}^{N^i_{\eta_i}} \right]^\top$, its dynamics result into:
\begin{equation}
    \label{eq: disagreement vector dynamics on agent i state}
     \dot{\vect{\epsilon}}_i = \vect{J}_i  \vect{P}^{-1}_{i}(\dot{\vect{\xi}}_{i} - \dot{\vect{P}}_{i} \vect{e}_i),
\end{equation}
where $\vect{J}_i := \text{diag}\left(J_T(e^{N_1^i}_i), \dots, J_T(e^{N^i_{\eta_i}}_i)\right)$, $\vect{P}_{i} := \text{diag}\left(\rho^{N_1^i}_i, \dots, \rho^{N_{\eta_i}^i}_i\right)$, $\dot{\vect{P}}_{i} := \text{diag}\left(\dot{\rho}^{N_1^i}_i, \dots, \dot{\rho}^{N^i_{\eta_i}}_i\right)$, $\vect{e}_i := \vect{P}^{-1}_i \vect{\xi}_i$ and $\dot{\vect{\xi}}_{i} := \Bigl[\dot{\xi}_{i}^{N_1^i}, \dots ,\dot{\xi}_{i}^{N^i_{\eta_i}}\Bigr]^\top$.

\begin{remark}
    \label{Remark: boundness E impliess funnel satisfaction}
    From \eqref{eq: transformed normalized disagreement}, it follows that if the vector $\vect{\epsilon}_{i}$ is bounded, then $e^{N_j^i}_i$ remains confined within the interval $(-1,1)$ for all $N_j^i \in \Nhop{i}{k}$. Consequently, for every $i\in \mathcal{V}$ and each $N^i_j \in \Nhop{i}{k}$, $\xi_{i}^{N^i_j}$ evolves in compliance with \eqref{eq: definition of the bounds on the disagreement term}.
\end{remark}
For the \mbox{multi-dimensional} case, where $n_i \neq 1$, this reasoning can be performed on every component of the agent's state. Hence, desired performance can be imposed on the convergence of every disagreement component of $\xi_{i}^{N_j^i}$, i.e., on every $\xi_{i,l}^{N_j^i}$, with $l \in \{1, \dots, n_i\}$.

\subsection{Observer design}
\label{Sec: Observer design}
To design observers that satisfy the prescribed performance requirements in \eqref{eq: definition of the bounds on the disagreement term}, we adopt a variation of the decentralized \mbox{$k$-hop} PPSO proposed in \cite{zaccherini2025robustestimationcontrolheterogeneous}. Specifically, for all $i \in \mathcal{V}$ and $N_j^i \in \Nhop{i}{k}$, we design an observer whose state $\estimate{x}{i}{N_j^i}$ evolves according to
\begin{equation}
    \label{eq: component of the state estimation error dyanmics}
    \dot{\hat{x}}_{i}^{N_j^i} = - \rho_i^{N_j^i}(t)^{-1} J_T(e_i^{N_j^i}) \epsilon_{i}^{N_j^i}(t),
\end{equation}
where $J_T(e^{N_j^i}_i)$, $e^{N_j^i}_i$ and $\epsilon^{N_j^i}_{i}(t)$ are defined as in \eqref{eq: transformed normalized disagreement}.
\begin{remark}
    Note that $\epsilon_{i}^{N_j^i}(t)$, and consequently $\dot{\hat{x}}_{i}^{N_j^i}(t)$, is computed exclusively based on information received from the neighbors of agent $N_j^i$. Hence, the proposed observer is decentralized.
\end{remark}
By stacking $\dot{\hat{x}}_{i}^{N_j^i}$ for all $N_j^i \in \Nhop{i}{k}$, the dynamics of $\estimate{\vect{x}}{}{i}$, defined as in \eqref{eq: estimation of state and input of agent i}, becomes:
\begin{equation}
    \label{Eq: State observer dynamics}
    \begin{split}
        \dot{\hat{\vect{x}}}_i &= -  \vect{P}_i^{-1} \vect{J}_i \vect{\epsilon}_i,
    \end{split}
\end{equation}
where $\vect{P}_i$, $\vect{J}_i$, and $\vect{\epsilon}_i$ are defined as {in \eqref{eq: disagreement vector dynamics on agent i state}}.

At the purpose of the observer convergence, consider the following assumption:
\begin{assumption}
    \label{Assumption: observer asssumptions for convergence}
    For all $i\in \mathcal{V}$ and $t \in \mathbb{R}_{\geq 0}$, the input $u_i(t) $ is bounded and designed to enforce $x_i(t) \in \mathcal{X}_i$, where $\mathcal{X}_i \subset \mathbb{R}^{n_i}$ is a bounded set.
\end{assumption}
As will be shown in Theorem~\ref{Theorem on task satisfacion}, a decentralized controller $u_i$ can be designed to satisfy Assumption~\ref{Assumption: observer asssumptions for convergence} in the considered framework, thus Assumption~\ref{Assumption: observer asssumptions for convergence} is not restrictive in the given context.

With the ingredients introduced so far, the observer convergence can be stated as follows.
\begin{theorem}
    \label{Theorem: main theorem on state estimation convergence}
    Consider a heterogeneous MAS \eqref{eq: agent's dynamic} with connected communication graph $\mathcal{G}_C$ and decentralized state observers as in \eqref{eq: component of the state estimation error dyanmics}. Under Assumptions~\ref{Assumption on existence of a solution}-\ref{Assumption on 2-hop communication} and \ref{Assumption: observer asssumptions for convergence}, the state estimation error $\tilde{x}^{N^i_j}_{i}(t)$ satisfies $|\tilde{x}_{i}^{N_j^i}(t)| < \delta^{N_j^i}_i(t)$ for all $N_j^i \in \Nhop{i}{k}$ and all $i\in \mathcal{V}$, provided that $|\xi^{N^i_j}_{i}(0)| < \rho^{N_j^i}_i(0)$ holds, and $\rho^{N_j^i}_i(t)$ is designed so that $\sum^{\eta_i}_{r = 1} |{{(\kc{M}{i}}^{-1}})_{N_j^i, r}| \rho^{r}_i \leq \delta^{N_{j}^i}_i(t)$.
\end{theorem}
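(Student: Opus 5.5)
The plan is the standard two-phase PPC argument: first establish existence and uniqueness of a maximal solution on the set where all normalized disagreements lie in $(-1,1)$, then show that the transformed errors $\vect{\epsilon}_i$ stay bounded on that maximal interval. Boundedness rules out finite escape and, via Remark~\ref{Remark: boundness E impliess funnel satisfaction}, gives \eqref{eq: definition of the bounds on the disagreement term}. The estimation bound then follows from \eqref{eq: disagreement vector expression}, Lemma~\ref{Lemma: Mkc positive definit}, and \eqref{eq: constraint on the prescribed prformance functions}.

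\textbf{Step 1 (existence).} Define the open set $\Omega_i := \{(\hat{\vect{x}}_i,t): |e_i^{N_j^i}| < 1\ \forall j\}$. This set is nonempty at $t=0$ because $|\xi^{N^i_j}_{i}(0)| < \rho^{N_j^i}_i(0)$. The closed-loop right-hand side, consisting of \eqref{Eq: State observer dynamics} coupled with \eqref{eq: agent's dynamic}, is locally Lipschitz in $\hat{\vect{x}}_i$ and continuous in $t$ on $\Omega_i$. This uses Assumption~\ref{Assumption on existence of a solution}, the fact that $T$ and $J_T$ are smooth on $(-1,1)$, and $\rho>0$. Standard ODE theory then yields a unique maximal solution on some interval $[0,\tau_{\max})$ that remains in $\Omega_i$.

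\textbf{Step 2 (boundedness of $\vect{\epsilon}_i$).} Take $V_i=\tfrac12\vect{\epsilon}_i^\top\vect{\epsilon}_i$. By \eqref{eq: disagreement vector expression}, $\dot{\vect{\xi}}_i=\kc{M}{i}(\dot{\hat{\vect{x}}}_i-1_{\eta_i}\dot x_i)$. Substituting \eqref{Eq: State observer dynamics} into \eqref{eq: disagreement vector dynamics on agent i state} gives
\[
\dot V_i = -\vect{\epsilon}_i^\top \vect{J}_i\vect{P}_i^{-1}\kc{M}{i}\vect{P}_i^{-1}\vect{J}_i\vect{\epsilon}_i - \vect{\epsilon}_i^\top\vect{J}_i\vect{P}_i^{-1}\big(\kc{M}{i}1_{\eta_i}\dot x_i + \dot{\vect{P}}_i\vect{e}_i\big).
\]
Set $\vect{z}_i:=\vect{P}_i^{-1}\vect{J}_i\vect{\epsilon}_i$, which is possible because $\vect{P}_i$ and $\vect{J}_i$ are diagonal and commute. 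Lemma~\ref{Lemma: Mkc positive definit} then gives $\dot V_i\le -\lambda_{\min}(\kc{M}{i})\|\vect{z}_i\|^2+\|\vect{z}_i\|\,\bar d_i$, where $\bar d_i$ bounds $\|\kc{M}{i}1_{\eta_i}\dot x_i+\dot{\vect{P}}_i\vect{e}_i\|$. The bound $\bar d_i$ is finite for three reasons:
\begin{itemize}
\item $|e|<1$ on $\Omega_i$;
\item $|\dot\rho|\le\bar\rho_d$ by Definition~\ref{Definition of Prescribed performance function};
\item $\dot x_i$ is bounded, since $x_i\in\mathcal{X}_i$ is bounded and $u_i$ is bounded by Assumption~\ref{Assumption: observer asssumptions for convergence}, $f_i,g_i$ are continuous on the compact closure of $\mathcal{X}_i$, and $w_i$ is uniformly bounded.
\end{itemize}
Hence $\dot V_i<0$ whenever $\|\vect{z}_i\|>\bar d_i/\lambda_{\min}(\kc{M}{i})$. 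Since $J_T\ge 2$ and $\rho\le\bar\rho$, we have $\|\vect{z}_i\|\ge (2/\bar\rho)\|\vect{\epsilon}_i\|$. Therefore $\dot V_i<0$ outside a ball of radius $\bar\rho\,\bar d_i/(2\lambda_{\min}(\kc{M}{i}))$, and $\|\vect{\epsilon}_i(t)\|\le\max\{\|\vect{\epsilon}_i(0)\|,\bar\rho\,\bar d_i/(2\lambda_{\min}(\kc{M}{i}))\}$ on $[0,\tau_{\max})$.

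\textbf{Step 3 (extension and conclusion).} Because $\vect{\epsilon}_i$ is bounded, each $|e_i^{N_j^i}|\le\bar e<1$, so the solution stays in a compact subset of $\Omega_i$. Together with the bounded right-hand side, this contradicts finite $\tau_{\max}$, so $\tau_{\max}=\infty$. Remark~\ref{Remark: boundness E impliess funnel satisfaction} then gives $|\xi_i^{N_j^i}(t)|<\rho_i^{N_j^i}(t)$ for all $t$. Finally,
\[
|\tilde x_i^{N_j^i}|\le\sum_r|({\kc{M}{i}}^{-1})_{N_j^i,r}||\xi_i^r|<\sum_r|({\kc{M}{i}}^{-1})_{N_j^i,r}|\rho_i^r\le\delta_i^{N_j^i}(t).
\]
The inequality is strict because at least one entry in each row of the invertible ${\kc{M}{i}}^{-1}$ is nonzero.

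I expect the main obstacle to be Step 2's disturbance bound, namely justifying a uniform bound on $\dot x_i$ independent of the observer. This is exactly where Assumption~\ref{Assumption: observer asssumptions for convergence} decouples observer and controller, and I would state explicitly that it is invoked on the whole maximal interval. A secondary point is handling the weighted quadratic form. Writing it via $\vect{z}_i$ avoids needing $\vect{J}_i\vect{P}_i^{-1}\kc{M}{i}$ to be symmetric.
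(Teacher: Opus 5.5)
Your proposal is correct and follows the paper's proof. Both use the same Lyapunov function $V=\tfrac12\vect{\epsilon}_i^\top\vect{\epsilon}_i$, bound the quadratic term through $\kc{M}{i}\succ 0$, $J_T\ge 2$ and $\rho\le\bar\rho$, bound the perturbation using Assumption~\ref{Assumption: observer asssumptions for convergence} together with $|e|<1$, and conclude by back-substituting $\tilde{\vect{x}}_i={\kc{M}{i}}^{-1}\vect{\xi}_i$ into \eqref{eq: constraint on the prescribed prformance functions}. The differences are only technical: you replace the paper's completing-the-square and $S=1-\exp(-V)$ invariance device with a direct ``$\dot V<0$ outside a ball'' argument, you make the maximal-solution/extension step explicit where the paper leaves it implicit, and your constant $2/\bar\rho$ has the correct direction, whereas the paper's $\alpha_\rho$ should be the reciprocal of $\max\rho^2$.
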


\begin{proof}
    Consider agent $i \in \mathcal{V}$. According to Assumption~\ref{Assumption on communication graph and neighbors}, $\mathcal{G}_C$ is a \mbox{time-invariant} graph. Thus, $\kc{M}{i}$ is constant, $\dot{\vect{\xi}}_i = \kc{M}{i} \dot{\tilde{\vect{x}}}_{i}$ from \eqref{eq: disagreement vector expression},  and \eqref{eq: disagreement vector dynamics on agent i state} can be rewritten as:
    \begin{equation}
        \label{eq: transformed error dynamics}
        \dot{\vect{\epsilon}}_i = \vect{J}_i \vect{P}^{-1}_{i}(\kc{M}{i} \dot{\tilde{\vect{x}}}_{i} - \dot{\vect{P}}_{i} \vect{e}_i).
    \end{equation}
    From the agent's dynamics in \eqref{eq: agent's dynamic}, the definitions in \eqref{Definition: error on input and state estimation of agent i}, and the observer \eqref{Eq: State observer dynamics}, $\dot{\tilde{\vect{x}}}_{i}$ becomes $\dot{\tilde{\vect{x}}}_{i} = -\vect{f}_i(x_{i}) - \vect{g}_i(x_{i}) u_i - \vect{w}_i(\vect{x},t) -  \vect{P}_i^{-1} \vect{J}_i \vect{\epsilon}_i $, where $\vect{f}_i(x_i) := 1_{\eta_i} \otimes f_i(x_i)$, $\vect{g}_i(x_i) := 1_{\eta_i} \otimes g_i(x_i)$ and $\vect{w}_i(\vect{x},t) := 1_{\eta_i} \otimes w_i(\vect{x},t)$.
    
     Consider the candidate Lyapunov function $V(\vect{\epsilon}_i(\vect{e}_i)) = \frac{1}{2} \vect{\epsilon}_i(\vect{e}_i)^T \vect{\epsilon}_i(\vect{e}_i)$, with time derivative $\dot{V} = \vect{\epsilon}_i^T \dot{\vect{\epsilon}}_i$. By replacing \eqref{eq: transformed error dynamics} and $\dot{\tilde{\vect{x}}}_{i}$, $\dot{V} $ results into:
     \begin{equation}
        \label{eq Lyapunvon function derivative}    
        \hspace{-0.3cm}
        \begin{split}
            \dot{V} =& - \vect{\epsilon}_i^T \vect{J}_i \vect{P}^{-1}_{i} \kc{M}{i} \vect{P}_i^{-1} \vect{J}_i \vect{\epsilon}_i - \vect{\epsilon}_i^T \vect{J}_i \vect{P}^{-1}_{i} \Bigl\{\dot{\vect{P}}_{i} \vect{e}_i + \\& \kc{M}{i} [ \vect{f}_i(x_{i}) + \vect{g}_i(x_{i}) u_i + \vect{w}_i(\vect{x},t)] \Bigr\}.
        \end{split}
    \end{equation} 
     Since $\kc{M}{i} \succ 0$ from Lemma~\ref{Lemma: Mkc positive definit}, $- \vect{\epsilon}_i^T \vect{J}_i \vect{P}^{-1}_{i} \kc{M}{i} \vect{P}_i^{-1} \vect{J}_i \vect{\epsilon}_i \leq - \lambda_{\min}(\kc{M}{i}) \alpha_{J}\alpha_{\rho} \vect{\epsilon}_i^T \vect{\epsilon}_i$ holds with $\alpha_{J} = \min_{N_j^i \in \Nhop{i}{k}} \bigl\{\min_{e^{N_j^i}_i \in (-1,1)} J_T(e^{N_j^i}_i)^2\bigr\}= 4$ and $\alpha_{\rho} = \max_{N_j^i \in \Nhop{i}{k}}\{\max_{t \in \mathbb{R}_{\geq 0}} \rho_i^{N_j^i}(t)^2\}$. From Definition~\ref{Definition of Prescribed performance function}, there exists $\overline{\rho}^{N_j^i}_i < \infty$ such that $\rho_i^{N_j^i}(t) \leq \overline{\rho}^{N_j^i}_i$. Thus, $\alpha_{\rho}$ is bounded as $0 < \alpha_{\rho} \leq \max_{N_j^i \in \Nhop{i}{k}} \{(\overline{\rho}^{N_j^i}_i)^2\}$.
     By summing and subtracting $\zeta \norm{\vect{P}^{-1}_{i} \vect{J}_i \vect{\epsilon}_{i}}^2$ for some $0 < \zeta < \lambda_{\min}(\kc{M}{i})$, \eqref{eq Lyapunvon function derivative} can be upper bounded as
     \begin{equation}
        \label{eq: Lyapunov fucntion final upper bound observer}
         \begin{split}
             \dot{V} \leq &-(\lambda_{\min}(\kc{M}{i}) - \zeta)\alpha_{J} \alpha_{\rho} \norm{\vect{\epsilon}_i}^2 - \vect{\epsilon}_i^T \vect{J}_i \vect{P}^{-1}_{i}b(t) \\&-\zeta \norm{\vect{P}^{-1}_{i} \vect{J}_i \vect{\epsilon}_{i}}^2,
         \end{split}
     \end{equation}
     where $b(t) = \dot{\vect{P}}_{i} \vect{e}_i + \kc{M}{i} [ \vect{f}_i(x_{i}) + \vect{g}_i(x_{i}) u_i + \vect{w}_i(\vect{x},t)]$.
     Since $\vect{\epsilon}_i^T \vect{J}_i \vect{P}^{-1}_{i}b(t) + \zeta \norm{\vect{P}^{-1}_{i} \vect{J}_i \vect{\epsilon}_{i}}^2$ are terms of the quadratic form $\norm{\sqrt{\zeta} \vect{P}^{-1}_{i} \vect{J}_i \vect{\epsilon}_{i} + \frac{1}{2\sqrt{\zeta}}b(t)}^2$, and $\norm{\sqrt{\zeta} \vect{P}^{-1}_{i} \vect{J}_i \vect{\epsilon}_{i} + \frac{1}{2\sqrt{\zeta}}b(t)}^2 \geq 0$ holds by the definition of norm, the inequality $- \vect{\epsilon}_i^T \vect{J}_i \vect{P}^{-1}_{i}b(t) -\zeta \norm{\vect{P}^{-1}_{i} \vect{J}_i \vect{\epsilon}_{i}}^2 \leq \frac{1}{4\zeta}b^\top(t)b(t)$ is satisfied for all $\vect{\epsilon}_i$, and $\dot{V}$ can be bounded as $\dot{V} \leq -(\lambda_{\min}(\kc{M}{i}) - \zeta)\alpha_{J} \alpha_{\rho} \norm{\vect{\epsilon}_i}^2 + \frac{1}{4\zeta}b^\top(t)b(t)$. Thus, we can write
     \begin{equation}
        \label{eq: Lyapunov function final upper bound, observer}
        \dot{V} \leq -\kappa V + \vect{b}(t),
    \end{equation}
    with $\kappa := 2(\lambda_{\min}(\kc{M}{i}) - \zeta)\alpha_{J} \alpha_{\rho}$ and $\vect{b}(t) := \frac{1}{4\zeta} \{\lambda_{\max}(\kc{M}{i})[\norm{\vect{f}_i(x_i)} + \norm{\vect{g}_i(x_i)} \norm{u_i} + \norm{\vect{w}_i}] + \norm{\dot{\vect{P}}_{i} \vect{e}_i}\}^2$. To proceed, let's check whether $\vect{b}(t)$ admits an upper bound $\Bar{\vect{b}}(t)$.   

    From Assumption~\ref{Assumption on existence of a solution}-(i), $f_i$ and $g_i$ are Lipschitz continuous functions. Thus, given (iii) of Assumption~\ref{Assumption on existence of a solution} and Assumption~\ref{Assumption: observer asssumptions for convergence}, $\norm{\vect{f}_i(x_i)}$, $\norm{\vect{g}_i(x_i)} \norm{u_i}$ and $\norm{\vect{w}_i}$ are bounded.
    Let $\tilde{\mathcal{X}}_i(t) = \{ \error{\vect{x}}{}{i} \in \mathbb{R}^{\eta_i}| -1_{\eta_i} <\vect{e}_i = \vect{P}^{-1}_i \vect{\xi}_i <1_{\eta_i}\}$ be the time varying set containing the state estimation errors $\error{\vect{x}}{}{i}$ for which the disagreement terms $\xi_{i}^{N_j^i}(t)$ satisfy the bounds \eqref{eq: definition of the bounds on the disagreement term} for all $N_j^i \in \Nhop{i}{k}$. Then, $ \dot{\rho}^{N_j^i}_i e^{N_j^i}_i$ is bounded in $\tilde{\mathcal{X}}_i(t)$, and $|\dot{\rho}^{N_j^i}_i(t) e^{N_j^i}_i| < |\bar{\rho}^{N_j^i}_{d,i}(t)| < \bar{\rho}^{N_j^i}_{d,i}(t)$ holds  with $\bar{\rho}^{N_j^i}_{d,i}(t) < \infty$.
    As a result, $\norm{\dot{\vect{P}}_{i} \vect{e}_i}$ is bounded and an upper bound $\Bar{\vect{b}}(t) < \infty$ on $\vect{b}(t)$ is guaranteed to exist for all $ \error{\vect{x}}{}{i} \in \tilde{\mathcal{X}}_i(t)$.
    
    Inspired by \cite[Thm. 22]{10918825}, to prove the invariance of the set $\tilde{\mathcal{X}}_i(t)$, we introduce $S(\vect{e}_i) := 1 - \exp(- V(\vect{\epsilon}_i(\vect{e}_i)))$. Note that, from its definition, $S$ satisfies: (i) $S(\vect{e}_{i}) \in (0,1)$ for all $\vect{e}_{i} \in \mathcal{D}$ with $\mathcal{D} = \bigtimes_{j=1}^{\eta_i}(-1,1)$, and (ii) $S(\vect{e}_{i}) \rightarrow 1$ as $\vect{e}_{i} \rightarrow \partial \mathcal{D} $. Therefore, studying the boundedness of $\vect{\epsilon}_{i}(\vect{e}_i)$ through the one of $V$ reduces to proving that $S(\vect{e}_{i}) < 1$ holds for all $t$.  
    By replacing  \eqref{eq: Lyapunov fucntion final upper bound observer} and $V(\vect{\epsilon}_i(\vect{e}_i)) = - \ln (1-S(\vect{e}_{i}))$ in $\dot{S}(\vect{e}_{i}) = \dot{V}(\vect{\epsilon}_i(\vect{e}_i))(1-S(\vect{e}_{i}))$, $ \dot{S}(\vect{e}_{i}) \leq - \kappa (1-S(\vect{e}_{i})) \Bigl(-\frac{1}{\kappa}\vect{b}(t) - \ln (1-S(\vect{e}_{i}))\Bigr)$ is obtained.
    Since $\kappa >0$ and $1-S(\vect{e}_{i}) >0$ by definition, to verify whether $\dot{S}(\vect{e}_{i}) \leq 0 $ holds, it suffices to study under which conditions $-\frac{1}{\kappa}\vect{b}(t) - \ln (1-S(\vect{e}_{i}))\geq 0$ is valid.
    Note that $-\frac{1}{\kappa}\vect{b}(t) - \ln (1-S(\vect{e}_{i}))\geq 0 $ is satisfied for all $\vect{e}_{i} \in \Omega^c_{\vect{e}}$, where $\Omega_{\vect{e}}=\bigl\{\vect{e}_{i} \in \mathcal{D}| S(\vect{e}_{i}) < 1- \exp(-\frac{{\vect{b}}(t)}{\kappa}) \bigr\}$, and that $-\frac{1}{\kappa}\vect{b}(t) - \ln (1-S(\vect{e}_{i})) = 0 $ holds for $\vect{e}_{i} \in  \partial\Omega_{\vect{e}}$. Thus, $\dot{S}(\vect{e}_{i}) \leq 0$  for $\vect{e}_{i} \in \Omega^c_{\vect{e}}$, with $\dot{S}(\vect{e}_{i}) =0$ iff $\vect{e}_{i} \in \partial\Omega_{\vect{e}}$. 
    Since the initialization satisfies $|\xi^{N^i_j}_{i}(0)| < \rho^{N_j^i}_i(0)$ for all $N_j^i \in \Nhop{i}{k}$, it follows that $e^{N_j^i}_{i}(0) \in (-1,1)$ for all $N_j^i \in \Nhop{i}{k}$, and therefore that $S(\vect{e}_{i}(0)) < 1$. As a result, since $\exp(-\frac{{\vect{b}}(t)}{\kappa}) \geq \exp(-\frac{\Bar{\vect{b}}(t)}{\kappa}) > 0$ by definition, $S(\vect{e}_{i})) < 1 $ is preserved for all $t \in \mathbb{R}_{\geq 0}$, independently of whether $\vect{e}_{i}$ is initialized in $\Omega_{\vect{e}}$ or not.
    From the inequality $S(\vect{e}_{i}) < 1 $, boundedness of $V(\vect{\epsilon}_i(\vect{e}_i))$, and therefore of $\vect{\epsilon}_{i}$, follows. As a result, \eqref{eq: definition of the bounds on the disagreement term} is satisfied.
    
    If all $\rho^{N_j^i}_i(t)$ are designed so that \eqref{eq: constraint on the prescribed prformance functions} holds, $|\error{x}{N_j^i}{i}(t)| <  \delta^{N_j^i}_i(t)$ is guaranteed by construction for all $N_j^i \in \Nhop{i}{k}$ as explained in Section \ref{Section: Observer requirements}.
\end{proof}

\begin{corollary}
   Under the observer guarantees in Theorem~\ref{Theorem: main theorem on state estimation convergence}, a sufficient choice of $k$ ensuring that every agent $i \in \mathcal{V}$ can estimate the states of all agents $j \in \Neigh{i}{T}\setminus \Neighext{i}{C}$ is
\begin{equation}
    \label{selection of the parameter k}
    k \geq \max_{i \in \mathcal{V}, j \in (\mathcal{N}^T_i \setminus \Neighext{i}{C})} D_{ij},
\end{equation}
where $D_{ij}$ is the distance between agent $i$ and $j$ as defined in Section~\ref{Sec: MAS, communication graoh and task graph}.
\end{corollary}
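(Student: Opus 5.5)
The corollary follows directly from the definition of the $k$-hop neighborhood combined with Theorem~\ref{Theorem: main theorem on state estimation convergence}. The plan is to show that, under \eqref{selection of the parameter k}, every agent $j \in \Neigh{i}{T}\setminus \Neighext{i}{C}$ belongs to $\Nhop{i}{k}$. Once this inclusion holds, agent $i$ runs an observer \eqref{eq: component of the state estimation error dyanmics} on $x_j$. Theorem~\ref{Theorem: main theorem on state estimation convergence} then supplies the guarantee $|\tilde{x}^{i}_{j}(t)| < \delta^{i}_j(t)$, with indices read in the paper's convention: agent $i$ is a $k$-hop neighbor of $j$ exactly when $j$ is one of $i$, since $\mathcal{G}_C$ is undirected by Assumption~\ref{Assumption on communication graph and neighbors}.

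The first step is to fix $i\in\mathcal{V}$ and $j \in \Neigh{i}{T}\setminus \Neighext{i}{C}$. Since $j \notin \Neighext{i}{C}$, we have $j\neq i$ and $(i,j)\notin\mathcal{E}_C$, so $D_{ij}\geq 2$. By Assumption~\ref{Assumption on communication graph and neighbors}, $\mathcal{G}_C$ is connected, so $D_{ij}<\infty$ and there is a shortest path of length $p := D_{ij}$ joining $i$ and $j$. A shortest path never repeats vertices, so it is a valid path in the sense of Section~\ref{Sec: MAS, communication graoh and task graph}. By \eqref{selection of the parameter k}, $p \le k$, so $2\le p\le k$. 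Hence $j \in \Nhop{i}{k}$ by definition, and $j = N^i_r$ for some index $r$.

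The second step is to invoke Theorem~\ref{Theorem: main theorem on state estimation convergence} for this pair. Provided the initialization condition $|\xi(0)|<\rho(0)$ and the design constraint \eqref{eq: constraint on the prescribed prformance functions} hold, the estimate $\hat{x}^i_j$ satisfies the prescribed bound at all times. The estimate is therefore available and error-bounded, which is what "can estimate" means in this setting.

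The main subtlety is that the maximum must be taken over pairs with $D_{ij}\ge 2$. This is automatic because direct neighbors are excluded from the index set. If $\Neigh{i}{T}\setminus\Neighext{i}{C}$ is empty for every $i$, the statement holds vacuously for any $k$.
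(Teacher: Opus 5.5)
Your proof is correct and follows the paper's reasoning. The paper gives no formal proof, only the remark that $k$ must be at least the largest shortest-path distance $D_{ij}$ over collaborating but non-communicating pairs, so that every such $j$ is a $k$-hop neighbor of $i$ and Theorem~\ref{Theorem: main theorem on state estimation convergence} applies; your argument makes this explicit, including $2 \le D_{ij} \le k$ via connectivity and the symmetry of the $k$-hop relation needed to match the theorem's index convention.
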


Intuitively, $k$ must be chosen no smaller than the maximum shortest-path distance in $\mathcal{G}_{C}$ between every collaborating but non-communicating pair of agents $(i,j)$, with $i\in \mathcal{V}$ and $j \in \mathcal{N}^T_i \setminus \Neighext{i}{C}$.

%% file: Content/Observer_based_Task_Satisfaction.tex
In this section, the notion of \textit{\mbox{cluster-induced} graph} is first introduced. Then, following the approach in \cite{LINDEMANN2021100973}, the STL task is reformulated as a set of constraints on the spatial robustness defined in Section~\ref{Section Signal Temporal Logic}. Since the objective is the synthesis of an \mbox{observer-based} controller, a modified class of constraints is introduced to ensure STL task satisfaction under \mbox{worst-case} state estimation errors.

\subsection{Cluster-induced graph}
\label{Section: cluster induced graph}
To introduce the notion of cluster, let's define the intersection among graphs as follows.
\begin{definition}
    \label{def: graphs intersection}
    Given $M$ graphs $\mathcal{G}_i := (\mathcal{V}_i, \mathcal{E}_i)$, $i \in \{1,\dots, M\}$, the intersection graph is defined as $\mathcal{G}^{\cap} := (\mathcal{V}^{\cap}, \mathcal{E}^{\cap})$, where $\mathcal{V}^{\cap} := \bigcap_{i =1}^M \mathcal{V}_i$ and $ \mathcal{E}^{\cap} := \bigcap_{i =1}^M \mathcal{E}_i$. For the intersection among directed and undirected graphs, each edge $(j,l)$ of the undirected graph is considered as existing in both directions, i.e., both as $(j,l)$ and $(l,j)$.
\end{definition}

Then, the graph $\mathcal{G}_{CT} = (\mathcal{V} , \mathcal{E}_{CT}) := \mathcal{G}_C \cap \mathcal{G}_T$ is defined, according to Definition \ref{def: graphs intersection}, as the intersection of the communication and the task graphs, i.e., $\mathcal{E}_{CT} := \mathcal{E}_C \cap \mathcal{E}_T$.

The graph $\mathcal{G}_{CT}$ partitions the MAS into $N'$ connected components $\mathcal{C}_l := (\mathcal{V}_l, \mathcal{E}_l)$, with $l\in \mathcal{L}:=\{1,\dots, N'\}$, where $\mathcal{V}_l  := \{l_1, \dots, l_{v_l}\} \subseteq \mathcal{V}$ denotes the set of agents in $\mathcal{C}_l $, and $\mathcal{E}_l \subseteq \mathcal{E}_{CT}$ denotes the set of edges among them. We refer to $\mathcal{C}_l$, for all $l\in \mathcal{L}$, as \textbf{clusters}.
From the definition of \textit{connected components}, $\cup_{l = 1}^{N'} \mathcal{V}_l= \mathcal{V}$ and $\mathcal{V}_i \cap \mathcal{V}_j = \emptyset$ hold for all $i,j \in \mathcal{L}$ with $i \neq j$. As a result, each $\mathcal{C}_l$ is associated with $\phi^{c}_l = \land_{l_i \in \mathcal{V}_l} \phi_{l_i}$, and the global STL task $\phi$ can be rewritten as $\phi = \land_{l=1}^{N'} {\phi}^c_l$. 
For later use, let $\vect{x}_{\phi^c_l} := [x^{\top}_{l_1}, \dots x^{\top}_{l_{v_l}}]^{\top}$ denote the stacked state vector of the agents in cluster $\mathcal{C}_l$. Similarly, define with $\Hat{\vect{x}}_{\phi^c_l} := \Bigl[\estimate{\vect{x}}{{l_1} \ \top}{{\phi_{l_1}}}, \dots, \estimate{\vect{x}}{{l_{v_l}} \ \top}{{\phi_{l_{v_l}}}} \Bigr]^\top $ the vector collecting the state estimates, computed by each $l_i \in \mathcal{V}_l$, of the state of $j \in \Neigh{l_i}{T} \setminus \Neighext{l_i}{C}$, i.e., of the agents involved in $\phi_{l_i}$ but not in communication with $l_i$.

The \mbox{cluster-induced} graph is then defined as follows.
\begin{definition}
    \label{def: cluster induced graph}
    Given a MAS \eqref{eq: agent's dynamic}, with communication graph $\mathcal{G}_C$ and task graph $\mathcal{G}_T$, the \textit{\mbox{cluster-induced} graph} is defined as $\mathcal{G}' := (\mathcal{C}', \mathcal{E}')$, where each node in $\mathcal{C}'$ is associated to a cluster $\mathcal{C}_l$, with $l \in \mathcal{L}$, and an edge $(\mathcal{C}_l, \mathcal{C}_j) \in \mathcal{E}'$ exists if and only if there exists a task $\phi_{l_i}$, with ${l_i} \in \mathcal{V}_l$, whose satisfaction depends on the state of an agent $j_q \in \mathcal{V}_j$, with $(l_i,j_q) \in \mathcal{E}_T$.
\end{definition}
\begin{remark}
    \label{remark on the acyclicity of the cluster induced graph}
    Since $\mathcal{G}'$ is obtained by grouping the agents according to the clusters $\mathcal{C}_l$, no new directed edge is introduced. Thus, $\mathcal{G}'$ is a directed acyclic graph under Assumption~\ref{Assumption on Task graph}.
\end{remark}

For future reference, we define a \textit{leaf cluster} in the cluster-induced graph $\mathcal{G}'$ as a cluster with no outgoing edges in $\mathcal{E}'$, i.e., a cluster $\mathcal{C}_l$ such that $(\mathcal{C}_l, \mathcal{C}_z) \not\in \mathcal{E}'$ for all $z \in \mathcal{L} \setminus \{l\}$. Similarly, a cluster $\mathcal{C}_l$ is a \textit{root cluster} if it has no incoming edges, i.e., there exists no $\mathcal{C}_z$, with $z \in \mathcal{L} \setminus \{l\}$, such that $(\mathcal{C}_z, \mathcal{C}_l) \in \mathcal{E}'$.

To guarantee that all \mbox{intra-cluster} task dependencies match communication links, we introduce the following assumption.
\begin{assumption}
    \label{Assumption on the non-existance of a task not in communication inside a cluster}
     Task neighbors within the same cluster are assumed to be in communication. Formally, $\Neigh{l_i}{T} \cap \mathcal{V}_l \subseteq \Neighext{l_i}{C} \cap \mathcal{V}_l$ for all $l_i\in \mathcal{V}_l$ and $l \in \mathcal{L}$.
\end{assumption}

Since $\Neigh{l_i}{T} \cap \mathcal{V}_l \subseteq \Neigh{l_i}{T}$ and $\Neighext{l_i}{C} \cap \mathcal{V}_l \subseteq \Neighext{l_i}{C}$, Assumption~\ref{Assumption on the non-existance of a task not in communication inside a cluster} does not enforce $\Neigh{l_i}{T} \setminus \Neighext{l_i}{C} = \emptyset$, nor $\Neighext{l_i}{C} \setminus \Neigh{l_i}{T} = \emptyset$. Consequently, agent $l_i\in \mathcal{V}_l$ may still have tasks involving agents $j_q \in \mathcal{V}_j$, $j \in \mathcal{L}\setminus \{l\}$, with $j_q \in \Neigh{{l_i}}{T}$ and $j_q \notin \Neighext{{l_i}}{C}$. 

\begin{figure}[t!]
    \vspace{0.1cm}
    \centering
    \includegraphics[width=1\linewidth]{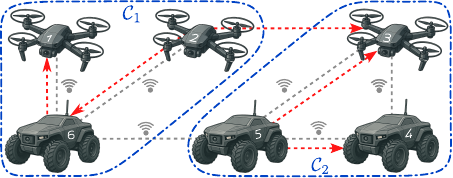}
    \caption{The graphs $\mathcal{G}_C$ and $\mathcal{G}_T$, respectively in gray and red, induce two clusters $\mathcal{C}_1$ and $\mathcal{C}_2$ with $\mathcal{V}_1 = \{1,2,6\}$ and $\mathcal{V}_2=\{3,4,5\}$.}
    \label{fig:Graph Gt, Gc, induced clusters}
    \vspace{-0.5cm}
\end{figure}

The following example is presented to clarify the \mbox{cluster-induced} graph definition.
\begin{example}
    Fig.~\ref{fig:Graph Gt, Gc, induced clusters} shows a MAS with $N = 6$ agents that communicate and collaborate according to the graphs $\mathcal{G}_C$ and $\mathcal{G}_T$  depicted in gray and red, respectively. $\mathcal{G}_C$ and $\mathcal{G}_T$ induces the MAS to be split in two clusters $\mathcal{C}_1 = ( \mathcal{V}_1, \mathcal{E}_1)$ and $\mathcal{C}_2 = ( \mathcal{V}_2, \mathcal{E}_2)$, where $\mathcal{V}_1 := \{1,2,6\}$ and $\mathcal{V}_2 := \{3,4,5\}$. Then, since $(2,3) \in \mathcal{E}_T$, the \mbox{cluster-induced} graph is defined as $\mathcal{G}' := (\mathcal{C}', \mathcal{E}')$, with $\mathcal{C}' = \{\mathcal{C}_1, \mathcal{C}_2\}$ and $\mathcal{E}'= \{(\mathcal{C}_1, \mathcal{C}_2)\}$.
    $\hfill \triangle$ 
\end{example}

\subsection{Observer-based task satisfaction}
\label{Section: observer-based task satisfaction}
Let $T_{\phi}^{\text{wdw}}$ denote the time window over which task $\phi$ must be satisfied.
Formally, for specifications involving the always operators $G_{[a,b]} \psi$, the corresponding time window is $T_{\phi}^{\text{wdw}} = [a,b]$. For eventually operators $F_{[a,b]} \psi$, the time window reduces to a single time instant, i.e., $T_{\phi}^{\text{wdw}} = \{t^*\}$, where $t^* \in [a,b]$ denotes the time at which the task is satisfied, in accordance with \cite{LINDEMANN2021100973}. For \mbox{eventually-always} operators $ F_{[\underline{a},\underline{b}]} G_{[\bar{a},\bar{b}]} \psi$, the time window is given by $T_{\phi}^{\text{wdw}} = [t^*+\bar{a},t^*+\bar{b}]$, where $t^* \in [\underline{a},\underline{b}]$ is selected analogously to the eventually operator, as in \cite{LINDEMANN2021100973}.

Then, as introduced in \cite{LINDEMANN2021100973}, to ensure satisfaction of the STL formula $\phi_i$, it suffices to constrain the \mbox{non-temporal} robustness $\rho^{\psi_i}(\vect{x}_{\phi_i})$ to satisfy:
\begin{equation}
    \label{eq: robustness inequality no estimations ivolved}
     -\gamma_{\psi_i}(t) + \rho_{\psi_i}^{\max} < \rho^{\psi_i}(\vect{x}_{\phi_i}) < \rho_{\psi_i}^{\max},
\end{equation}
where $\vect{x}_{\phi_i}$ collects the states involved in task $\phi_i$, and $\gamma_{\psi_i}(t)$ is a prescribed performance function designed such that $-\gamma_{\psi_i}(t) + \rho_{\psi_i}^{\max} >0$ holds for all $t \in T_{\phi_i}^{\text{wdw}}$.
A typical choice of the prescribed performance function is: $\gamma_{\psi_i}(t) = (\gamma_{\psi_i}^0 - \gamma_{\psi_i}^{\infty}) \exp(-l_{\psi_i} t) +\gamma_{\psi_i}^{\infty} $, where $\gamma^0_{\psi_i}$, $\gamma_{\psi_i}^{\infty}, l_{\psi_i} \in \mathbb{R}_{> 0}$ are design parameters selected as in \cite{LINDEMANN2021100973}. To guarantee feasibility of all task $\phi_i$, $i \in \mathcal{V}$, the following assumption is introduced:
\begin{assumption}
    \label{Assumption: assumption on task feasibility}
    The optimal robustness $\rho^{\text{opt}}_{\psi} := \text{sup}_{\vect{x}} \{\min_{i \in \mathcal{V}} \rho^{\psi_i}(\vect{x}_{\phi_i})\}$ satisfies $\rho^{\text{opt}}_{\psi} > 0$.
\end{assumption}
As discussed in Section~\ref{Section: collaborative and non-collaborative tasks}, the robustness $\rho^{\psi_i}$ may depend on estimated rather than actual states. As a result, the constraint in \eqref{eq: robustness inequality no estimations ivolved} cannot be enforced directly. However, the \mbox{$k$-hop} PPSO introduced in Section~\ref{Sec: Observer design} guarantees prescribed performance on the state estimation errors regardless of the system trajectory. This guarantee allows the formulation of an analogous constraint on $\rho^{\psi_i}(\hat{\vect{x}}_{\phi_i})$ that ensures task satisfaction under the \mbox{worst-case} estimation error. To this end, we introduce the following assumption.
\begin{assumption}
    \label{assumption on robustness lower bound}
    For all $i \in \mathcal{V}$, there exists a prescribed performance function $\rho_{\psi_i}^t:\mathbb{R}_{\geq 0} \rightarrow \mathbb{R}$, defined as in Definition \ref{Definition of Prescribed performance function}, such that
    \begin{equation}
        \label{eq: lower bound on robustness base on the estimation error}
        0 < {\rho}^{\psi_i}(\hat{\vect{x}}_{\phi_i}(t)) - \rho_{\psi_i}^t(t) \implies 0 < \rho^{\psi_i}(\vect{x}_{\phi_i}(t)).
    \end{equation} 
\end{assumption}

Assumption~\ref{assumption on robustness lower bound} is not restrictive in practice. Indeed, for each agent $i \in \mathcal{V}$: (i) $\vect{x}^i =  \estimate{\vect{x}}{i}{} - \error{\vect{x}}{i}{}$ holds by definition; (ii) Theorem~\ref{Theorem on task satisfacion} guarantees that $\norm{\error{\vect{x}}{i}{}} < \norm{[\delta^{i}_{N_1^i}(t), \dots, \delta^{i}_{N_{\eta_i}^i}(t)]}$ holds under the \mbox{$k$-hop} PPSO in Section \ref{Sec: Observer design}; and (iii) Lemma~\ref{Lemma om the norm of prescibed performance functions} ensures that $\norm{[\delta^{i}_{N_1^i}(t), \dots, \delta^{i}_{N_{\eta_i}^i}(t)]}$ is a prescribed performance function. Therefore, Assumption~\ref{assumption on robustness lower bound} holds for a large number of STL tasks.
For clarity, an example of a formation/containment task is presented below.
\begin{example}
    \label{example: funnel modification}
    Consider a path graph with $N=4$ agents. Denote the global state as $\vect{x}=[x_1 \ x_2\ x_3\ x_4]^\top \in \mathbb{R}^{Nn}$, where $x_i \in \mathbb{R}^{n}$, with $n= 1$, for all $i\in \{1,\dots, 4\}$. Assume $\Nhop{1}{3} =\{3,4\}$ and $\Neigh{1}{} =\{2\}$. Suppose agent $1$ is assigned a task $\phi_1$ with non-temporal robustness $\rho^{\psi_1}([x_1, x_3]^\top) := r_{13}^2 -\abs{x_1 - x_3 - d_{13}}^2$, where $r_{13}$ and $d_{13} \in \mathbb{R}_{\geq 0}$. Since $\state{x}{}{3} = \estimate{x}{1}{3} - \error{x}{1}{3}$ from definition,  
    \begin{equation}
        \label{eq in example: rewritten stl task}
         r_{13}^2 -\abs{x_1 - \estimate{x}{1}{3} + \error{x}{1}{3} - d_{13}}^2 > 0
    \end{equation}
    holds for all $t \in \mathbb{R}_{\geq 0}$. Then, since $\abs{x_1- \estimate{x}{1}{3} - d_{13} + \error{x}{1}{3}}^2 \leq (\abs{x_1 - \estimate{x}{1}{3}- d_{13}} + \abs{\error{x}{1}{3}})^2$ is valid from triangle inequality, \eqref{eq in example: rewritten stl task} is satisfied if $r_{13}^2 - (\abs{x_1 - \estimate{x}{1}{3}- d_{13}} + \abs{\error{x}{1}{3}})^2 > 0$. Since $r_{13}^2 - \abs{x_1 - \estimate{x}{1}{3}- d_{13}}^2 - \abs{\error{x}{1}{3}}^2 - 2\abs{x_1 - \estimate{x}{1}{3}- d_{13}}\abs{\error{x}{1}{3}} > 0$ holds in $\abs{x_1 - \estimate{x}{1}{3}- d_{13}} \in (0, - \abs{\error{x}{1}{3}} + r_{13})$, by introducing the bound resulting from the \mbox{$k$-hop} PPSO, i.e., $0 < \abs{\error{x}{1}{3}} < \delta_3^1(t)$,
    we derive that the previous inequality is satisfied if $\abs{x_1 - \estimate{x}{1}{3}- d_{13}} < - \delta_3^1(t) + r_{13}$, with $ \delta_3^1(t) < r_{13}$. By squaring both sides, we notice that $0 < \rho^{\psi_1}([x_1, \hat{x}^1_3]^\top) - \rho_{\psi_i}^t$ implies \eqref{eq in example: rewritten stl task} with $\rho^{\psi_1}([x_1, \hat{x}^1_3]^\top) = r_{13}^2 - \abs{x_1 - \estimate{x}{1}{3} -d_{13}}^2$ and $\rho_{\psi_1}^t(t) = 2 \delta_3^1(t)r_{13} -  \delta_3^1(t)^2$. Therefore, Assumption~\ref{assumption on robustness lower bound} holds.
    An analogous statement can be established for $n>1$ using the norm operator.
    $\hfill \triangle$ 
\end{example}

Under Assumption \ref{assumption on robustness lower bound}, ensuring satisfaction of $\phi_i$ in the presence of state estimation errors is equivalent to enforcing $-\gamma_{\psi_i}(t) + \rho^{\max}_{\psi_i} < \rho^{\psi_i}(\hat{\vect{x}}_{\phi_i}) - \rho_{\psi_i}^t(t) < \rho^{\max}_{\psi_i} -\rho_{\psi_i}^t(t)$, where $\gamma_{\psi_i}(t)$ is defined as in \eqref{eq: robustness inequality no estimations ivolved} and $\rho^{\psi_i}(\hat{\vect{x}}_{\phi_i})$ as in Section \ref{Section: collaborative and non-collaborative tasks}. By introducing $\Gamma_{\psi_i}(t) := \gamma_{\psi_i}(t) - \rho_{\psi_i}^t(t)$, the above requirement can be equivalently written as:
\begin{equation}
    \label{eq: inequality on robustness to guarantee task satisfaction, observer not in design stage of gamma}
    -\Gamma_{\psi_i}(t) < \rho^{\psi_i}(\hat{\vect{x}}_{\phi_i}) - \rho^{\max}_{\psi_i} < 0.
\end{equation}

For collaborative tasks $\phi_i$ for which $\Neigh{i}{T} \setminus \Neighext{i}{C} = \emptyset$, i.e., tasks involving only direct communicating agents, $\hat{\vect{x}}_{\phi_i} \equiv \vect{x}_{\phi_i}$ holds. Consequently, $\rho_{\psi_i}^t(t) = 0$ for all $t$ and condition \eqref{eq: inequality on robustness to guarantee task satisfaction, observer not in design stage of gamma} reduces to \eqref{eq: robustness inequality no estimations ivolved}.

\subsection{Prescribed performance functions design}
\label{Section: Performance functions parameter design}
As introduced in Section \ref{Section: observer-based task satisfaction}, we aim to enforce $0 < {\rho}^{\psi_i}(\hat{\vect{x}}_{\phi_i}(t)) - \rho_{\psi_i}^t(t)$ by imposing $-\gamma_{\psi_i}(t) + \rho^{\max}_{\psi_i} < \rho^{\psi_i}(\hat{\vect{x}}_{\phi_i}) - \rho_{\psi_i}^t(t) < \rho^{\max}_{\psi_i} -\rho_{\psi_i}^t(t)$ and through the proper design of $\gamma_{\psi_i}(t)$. Therefore, to ensure the satisfiability of $\phi_i$ regardless of $\hat{\vect{x}}_{\phi_i}$, the prescribed performance functions $\delta_{N^i_j}^i(t)$ and $\gamma_{\psi_i}(t)$  must be designed for all $i \in \mathcal{V}$ and $N^i_j \in \Nhop{i}{k}$ such that the following conditions hold:
\begin{enumerate}
    \item[(i)] $\Gamma_{\psi_i}(t) > 0$ for all $t\in\mathbb{R}_{\geq 0}$;
    \item[(ii)] $\rho^{\max}_{\psi_i} - \max_{\tau \in T_{\phi_i}^{\text{wdw}}} \rho_{\psi_i}^t(\tau) > 0$ for all $i \in \mathcal{V}$, where $T_{\phi_i}^{\text{wdw}}$, defined as in Section~\ref{Section: observer-based task satisfaction}, represents the time window over which $\phi_i$ must be satisfied;
    \item[(iii)]$ \hat{\rho}_\psi^{\text{opt}} := \text{sup}_{\vect{x}} \{\min_{i \in \mathcal{V}} [\rho^{\psi_i}(\vect{x}_{\phi_i})- \max_{\tau \in T_{\phi_i}^{\text{wdw}}} \rho_{\psi_i}^t(\tau)]\}$ satisfies $\hat{\rho}_\psi^{\text{opt}} > 0$.
\end{enumerate}
While condition (i) is required to properly define \eqref{eq: inequality on robustness to guarantee task satisfaction, observer not in design stage of gamma},
conditions (ii) and (iii) constitute a feasibility condition for the given $\rho_{\psi_i}^t(t)$ and $\rho^{\max}_{\psi_i}$.
From \eqref{eq: inequality on robustness to guarantee task satisfaction, observer not in design stage of gamma}, $\Gamma_{\psi_i}(t) = \gamma_{\psi_i}(t) - \rho_{\psi_i}^t(t)$, where both $\gamma_{\psi_i}(t)$ and $\delta_{N^i_j}^i(t)$ (for all $N^i_j \in \Neigh{i}{T}\setminus \Neighext{i}{C}$), on which $\rho_{\psi_i}^t(t)$ depends, are design parameters. Therefore, whenever Assumption~\ref{Assumption: assumption on task feasibility} holds, (i)-(iii) can be satisfied through an appropriate selection of $\delta_{N^i_j}^i(t)$, $\rho^{\max}_{\psi_i}$ and $\gamma_{\psi_i}(t)$ for all $i \in \mathcal{V}$ and $N^i_j \in \Neigh{i}{T}\setminus \Neighext{i}{C}$.

Under $\Gamma_{\psi_i}(t) > 0$, the following holds.
\begin{lemma}
    \label{lemma on prescribed performance function}
     $\Gamma_{\psi_i}(t)$ is a prescribed performance function as per Definition~\ref{Definition of Prescribed performance function}.
\end{lemma}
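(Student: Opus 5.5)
The plan is to check the three conditions of Definition~\ref{Definition of Prescribed performance function} directly for $\Gamma_{\psi_i}(t) = \gamma_{\psi_i}(t) - \rho_{\psi_i}^t(t)$. Positivity, i.e., that $\Gamma_{\psi_i}$ maps into $\mathbb{R}_{>0}$, is exactly the standing hypothesis $\Gamma_{\psi_i}(t) > 0$ for all $t \in \mathbb{R}_{\geq 0}$, which is condition (i) of Section~\ref{Section: Performance functions parameter design}. So only the regularity and boundedness conditions need work. Both $\gamma_{\psi_i}$ and $\rho_{\psi_i}^t$ are prescribed performance functions: the former by its construction in \eqref{eq: robustness inequality no estimations ivolved}, the latter by Assumption~\ref{assumption on robustness lower bound}. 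I would argue as in Lemma~\ref{Lemma: sum of prescribed perofmance functions}, now for a difference rather than a sum.

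For (i), $\gamma_{\psi_i}, \rho_{\psi_i}^t \in \mathcal{C}^1$ gives $\Gamma_{\psi_i} \in \mathcal{C}^1$, since $\mathcal{C}^1$ is closed under subtraction.

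For (iii), I will use
\begin{equation*}
|\dot{\Gamma}_{\psi_i}(t)| \leq |\dot{\gamma}_{\psi_i}(t)| + |\dot{\rho}_{\psi_i}^t(t)| \leq \overline{\gamma}_{d} + \overline{\rho}^t_{d} < \infty ,
\end{equation*}
where $\overline{\gamma}_{d}$ and $\overline{\rho}^t_{d}$ are the derivative bounds supplied by the definition for the two functions.

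For (ii), the upper bound, the one subtle point is that $\Gamma_{\psi_i} = \gamma_{\psi_i} + (-\rho_{\psi_i}^t)$ subtracts a function. So I cannot simply add the upper bounds as in Lemma~\ref{Lemma: sum of prescribed perofmance functions}; I need a lower bound on $\rho_{\psi_i}^t$. Here I would note that Assumption~\ref{assumption on robustness lower bound} allows $\rho_{\psi_i}^t : \mathbb{R}_{\geq 0} \to \mathbb{R}$, but as a prescribed performance function per Definition~\ref{Definition of Prescribed performance function} it takes values in $\mathbb{R}_{>0}$. In the case $\Neigh{i}{T}\setminus\Neighext{i}{C}=\emptyset$ it is identically $0$, as observed after \eqref{eq: inequality on robustness to guarantee task satisfaction, observer not in design stage of gamma}. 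In either case $\rho_{\psi_i}^t(t) \geq 0$, hence
\begin{equation*}
\Gamma_{\psi_i}(t) \leq \gamma_{\psi_i}(t) \leq \overline{\gamma} < \infty .
\end{equation*}

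This nonnegativity of $\rho_{\psi_i}^t$ is the step I expect to be the only real obstacle, because it relies on reading the codomain correctly. If $\rho_{\psi_i}^t$ could be negative, I would fall back on a different bound. Every $\mathcal{C}^1$ function with derivative bounded by $\overline{\rho}^t_d$ satisfies
\begin{equation*}
|\rho_{\psi_i}^t(t)| \leq |\rho_{\psi_i}^t(0)| + \overline{\rho}^t_d\, t ,
\end{equation*}
but this is not uniform in $t$. So I would instead use the concrete structure from the construction of $\rho_{\psi_i}^t$, as in Example~\ref{example: funnel modification}. There it is a polynomial in the bounded functions $\delta^i_{N_j^i}$ (cf.\ Lemma~\ref{Lemma om the norm of prescibed performance functions}), and hence is bounded below.
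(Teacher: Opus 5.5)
Your proposal is correct and follows the paper's proof essentially verbatim: you get $\mathcal{C}^1$ by closure under subtraction, and the upper bound $\Gamma_{\psi_i}(t)\leq \gamma_{\psi_i}(t)\leq\overline{\gamma}_{\psi_i}$ from nonnegativity of $\rho_{\psi_i}^t$. Positivity comes from the design condition $\Gamma_{\psi_i}(t)>0$, and the derivative bound from the triangle inequality. Your use of $\rho_{\psi_i}^t\geq 0$ instead of the paper's $0<\rho_{\psi_i}^t$ is slightly more careful, since it also covers the case $\rho_{\psi_i}^t\equiv 0$ for tasks without estimated states; the fallback argument for a possibly negative $\rho_{\psi_i}^t$ is not needed.
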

\begin{proof}
    The functions $\gamma_{\psi_i}(t)$ and $\rho_{\psi_i}^t(t)$ are prescribed performance functions as defined in \eqref{eq: robustness inequality no estimations ivolved} and Assumption~\ref{assumption on robustness lower bound}. Therefore: (i) Since $\gamma_{\psi_i}, \rho_{\psi_i}^t \in \mathcal{C}^1$, it follows that $\Gamma_{\psi_i} \in \mathcal{C}^1$; (ii) From Definition~\ref{Definition of Prescribed performance function}, there exist bounded constants $\overline{\gamma}_{\psi_i}, \overline{\rho}_{\psi_i}^t \in \mathbb{R}_{> 0}$ such that $0 < \gamma_{\psi_i} \leq \overline{\gamma}_{\psi_i}$ and $0< \rho_{\psi_i}^t < \overline{\rho}_{\psi_i}^t$. Then, if $\Gamma_{\psi_i}(t) > 0$ is satisfied by design, $0<\Gamma_{\psi_i}(t) < \overline{\gamma}_{\psi_i}$ holds; (iii) Since $|\dot{\gamma}_{\psi_i}(t)|\leq \overline{\gamma}_{d, \psi_i}$ and $|\dot{\rho}_{\psi_i}^t|\leq \overline{\rho}_{d,\psi_i}^t$ hold for some $\overline{\gamma}_{d, \psi_i} <\infty$ and $\overline{\rho}_{d,\psi_i}^t < \infty$, $|\dot{\Gamma}^{\psi_i}(t)| \leq |\dot{\gamma}_{\psi_i}(t)| + |\dot{\rho}_{\psi_i}^t|\leq \overline{\gamma}_{d, \psi_i} + \overline{\rho}_{d,\psi_i}^t$. As a result, $\Gamma_{\psi_i}(t)$ is a prescribed performance function.
\end{proof}

\begin{remark}
    \label{Remark: task modification in case of composite tasks}
    To handle tasks $\phi_i$ of the form \eqref{eq: definition of temporal formula}, where the \mbox{non-temporal} formula $\psi_i$ is obtained as the conjunction of $z_i$ subformulas, i.e., $\psi_i = \land_{j = 1}^{z_i} \psi_{i,j}$, we exploit the following properties of the always ($G$) and eventually ($F$) temporal operators:
\begin{equation}
    \label{eq: always operator properties}
    \begin{split}
        G_{[a,b]}\psi_i &= \land_{j = 1}^{z_i} G_{[a,b]} \psi_{i,j}, \\ F_{[a,b]}\psi_i &= G_{[\tau,\tau]} \psi_i \text{ with } \tau \in [a,b].
    \end{split}
\end{equation}
Given \eqref{eq: always operator properties}, each funnel function $\gamma_{\psi_{i,j}}(t)$ can be designed and adjusted independently for each subformula $\psi_{i,j}$, yielding the corresponding modified funnels  $\Gamma_{\psi_{i,j}}(t)$ for each $j \in \{1,\dots, z_i\}$. Satisfaction of $\phi_{i}$ is then enforced by imposing
\begin{equation}
    \label{eq: STL constraints in case of conjuction}
    - \bar{\Gamma}_{\psi_{i}}(t) < \bar{\rho}^{\psi_i}(\hat{\vect{x}}_{\phi_i}) - \rho^{\max}_{\psi_i} < 0
\end{equation}
as in \eqref{eq: inequality on robustness to guarantee task satisfaction, observer not in design stage of gamma}, with 
\begin{equation}
    \label{eq: smooth minimum among modified funnel}
    \bar{\Gamma}_{\psi_{i}}(t) = -\frac{1}{\eta} \ln(\sum_{j = 1}^{z_i}\exp(-\eta \Gamma_{\psi_{i,j}}(t))) 
\end{equation}
and
\begin{equation}
    \label{eq: smooth minimum among modified robustness}
    \bar{\rho}^{\psi_i}(\hat{\vect{x}}_{\phi_i}) = -\frac{1}{\eta} \ln(\sum_{j = 1}^{z_i}\exp(-\eta \rho^{\psi_{i,j}}(\hat{\vect{x}}_{\phi_i}))).
\end{equation}
Indeed, since $\bar{\rho}^{\psi_i}(\hat{\vect{x}}_{\phi_i})$ and $\bar{\Gamma}_{\psi_{i}}(t)$ in \eqref{eq: smooth minimum among modified funnel} and \eqref{eq: smooth minimum among modified robustness} satisfy $\bar{\Gamma}_{\psi_{i}}(t)\leq \min_{j \in \{1, \dots, z_i\}} \Gamma_{\psi_{i,j}}(t)$ and $\bar{\rho}^{\psi_i}(\hat{\vect{x}}_{\phi_i}) \leq \min_{j \in \{1, \dots, z_i\}} \rho^{\psi_{i,j}}(\hat{\vect{x}}_{\phi_i})$, \eqref{eq: STL constraints in case of conjuction} suffices to guarantee satisfaction of $\phi_{i}$.
\end{remark}

%% file: Content/Controller.tex
Building on \cite{LINDEMANN2021100973} and \cite{10918825}, this section adopts the Prescribed Performance Control (PPC) \cite{4639441} framework to address Problem \ref{Problem: first problem formulation}. Unlike existing approaches, we propose a novel observer-based controller that ensures task satisfaction by exploiting the observers’ prescribed-performance guarantees and the modified robustness functions introduced in Section~\ref{Sec: Observer-based Task Satisfaction}.

Under the condition that $\Gamma_{\psi_i}(t) > 0$ by design, the normalized error $e^{\psi_i}(\hat{\vect{x}}_{\phi_i}, t)$ is defined as $e^{\psi_i}(\hat{\vect{x}}_{\phi_i}, t) := \Gamma_{\psi_i}^{-1}(t) (\rho^{\psi_i}(\hat{\vect{x}}_{\phi_i}) - \rho^{\max}_{\psi_i})$.
The transformed error $\epsilon^{\psi_i}(\hat{\vect{x}}_{\phi_i}, t)$ is then given by
\begin{equation}
    \label{eq: transformed erro definition}
    \begin{split}
    \epsilon^{\psi_i}(\hat{\vect{x}}_{\phi_i}, t) &:= T_{\psi_i}(e^{\psi_i}(\hat{\vect{x}}_{\phi_i}, t)),
    \end{split}
\end{equation}
where $T_{\psi_i}: (-1,0) \rightarrow \mathbb{R}$ is a strictly increasing transformation $T_{\psi_i}(e^{\psi_i}(\hat{\vect{x}}_{\phi_i}, t)) := \ln\left(-\frac{e^{\psi_i}(\hat{\vect{x}}_{\phi_i}, t)+1}{e^{\psi_i}(\hat{\vect{x}}_{\phi_i}, t)}\right)$ with Jacobian $J_{\psi_i}(e^{\psi_i})= \frac{\partial{T_{\psi_i}}}{\partial e^{\psi_i}} = - \frac{1}{e^{\psi_i}(e^{\psi_i}+1)}$.

The central objective of PPC is to construct control laws $u_i$ that ensure the boundedness of $\epsilon^{\psi_i}(\hat{\vect{x}}_{\phi_i}, t)$, which in turn guarantees the satisfaction of \eqref{eq: inequality on robustness to guarantee task satisfaction, observer not in design stage of gamma}. To this end, consider the following assumption on $\rho^{\psi_i}$.

\begin{assumption}
    \label{Assumption on concavity and weelposeness of the robustness function}
    Each formula $\psi_i$ is such that: (i) $\rho^{\psi_i}: \mathbb{R}^{N_i^T} \rightarrow \mathbb{R}$ is concave with  $\frac{\partial \rho^{\psi_i}(\vect{x}_{\phi_i})}{\partial x_{i}} = 0_n$ only at the global maximum (ii) $\rho^{\psi_i}$ is continuously differentiable with respect to $\vect{x}_{\phi_i}$ or smooth almost everywhere except from the global maximum and (iii) the formula is \mbox{well-posed} in the sense that for all $C \in \mathbb{R}$ there exists $0 \leq \Bar{C} < \infty$ such that for all $\vect{x}_{\phi_i}$ with $\rho^{\psi_i}(\vect{x}_{\phi_i}) \geq C$, $\norm{\vect{x}_{\phi_i}}\leq \Bar{C}$ holds.
\end{assumption}

\begin{remark}
    \label{Remark: weel-posenes of robustness}
    Since linear functions and those modeling containment or formation tasks are concave, condition (i) of Assumption~\ref{Assumption on concavity and weelposeness of the robustness function} is not restrictive.
    Moreover, each $\phi_i$ can be augmented with $\psi^{\text{Bound}}_i := (\norm{\vect{x}_{\phi_i}}^2 \leq \Bar{C}_i^2)$; thus, condition (iii) of Assumption~\ref{Assumption on concavity and weelposeness of the robustness function} can be satisfied by selecting $\Bar{C}_i$ sufficiently large to preserve feasibility, provided that a suitable task assignment and observer convergence are ensured. For \mbox{collaborative} tasks, only $\Bar{\vect{x}}_{\phi_i}$ can be directly bounded via $\psi^{\text{Bound}}_i$. Nevertheless, as will be shown in Theorem~\ref{Theorem on task satisfacion}, under Assumption~\ref{Assumption on Task graph} and a convergent observer, it suffices to augment each $\psi_j$ that does not satisfy condition (iii) with $\psi^{\text{Bound}}_j := (\norm{\Bar{\vect{x}}_{\phi_j}}^2 \leq \Bar{C}_j^2)$ to guarantee boundedness of $\norm{\hat{\vect{x}}_{\phi_i}}$, and consequently ensure that (iii) also holds for $\rho^{\psi_i}(\hat{\vect{x}}_{\phi_i})$.
\end{remark}

Let $\rho^{\text{opt}}_{\psi_i} := \text{sup}_{\vect{x}_{\phi_i}} \rho^{\psi_i}(\vect{x}_{\phi_i})$ denote the local optimum. Given the concavity of $\rho^{\psi_i}$ from Assumption~\ref{Assumption on concavity and weelposeness of the robustness function}, to avoid $\frac{\partial\rho^{\psi_i}(\vect{x}_{\phi_i})}{\partial x_{i}} = 0_n$ along the transient, and thus feasibility issues, it suffices to tune $\rho^{\max}_{\psi_i}$ so that $0 < \rho^{\max}_{\psi_i} <\rho^{\text{opt}}_{\psi_i}$.

To present our main result, an additional assumption is required.
\begin{assumption}
    \label{Assumption on initialization}
    $-\Gamma_{\psi_i}(0) < {\rho}^{\psi_i}(\hat{\vect{x}}_{\phi_i}(0)) - \rho^{\max}_{\psi_i} < 0$ and $|\xi^{i}_{N^i_j}(0)| < \rho_{N_j^i}^i(0)$ hold for all $i \in \mathcal{V}$ and $N_j^i \in \Nhop{i}{k}$.
\end{assumption}

Assumption~\ref{Assumption on initialization} involves only design parameters and initialization quantities. Therefore, it's not restrictive.

With the notation and assumptions above, we now present the decentralized controller. 
\begin{theorem}
    \label{Theorem on task satisfacion}
    Consider a MAS \eqref{eq: agent's dynamic} with communication graph $\mathcal{G}_C$, and task graph $\mathcal{G}_T$ induced by a global STL task $\phi =\land_{i=1}^N \phi_i$. Suppose each agent runs the $k$-hop PPSO in \eqref{Eq: State observer dynamics} with $k$ selected according to \eqref{selection of the parameter k}. Consider the agents belonging to the cluster $\mathcal{C}_l$, subject to the task $\phi^{c}_l = \land_{l_i \in \mathcal{V}_l} \phi_{l_i}$. Then, $\phi^c_l$ is satisfied, for all $l \in \mathcal{L}$, if Assumptions \ref{Assumption on existence of a solution}-\ref{Assumption on Task graph}, \ref{Assumption on the non-existance of a task not in communication inside a cluster}-\ref{Assumption on initialization} hold and each agent $l_i\in \mathcal{V}_l$ applies the decentralized controller
    \begin{equation}
        \label{eq: agent's input}
        u_{l_i} = - g^{\top}_{l_i}(x_{l_i}(t))\sum_{l_j\in \mathcal{V}_l} \frac{\partial {\rho}^{\psi_{l_j}}(\hat{\vect{x}}_{\phi_{l_j}})}{\partial x_{{l_i}}} \Gamma^{-1}_{\psi_{l_j}}  J_{\psi_{l_j}} \epsilon^{\psi_{l_j}},
    \end{equation}
    where $\Gamma_{\psi_{l_j}}$ is defined as in \eqref{eq: inequality on robustness to guarantee task satisfaction, observer not in design stage of gamma}, and $J_{\psi_{l_j}}$, $\epsilon^{\psi_{l_j}}$ as in \eqref{eq: transformed erro definition}.
\end{theorem}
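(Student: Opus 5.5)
We argue by induction over the cluster-induced graph $\mathcal{G}'$. By Remark~\ref{remark on the acyclicity of the cluster induced graph} it is a directed acyclic graph, so we can process clusters in reverse topological order: leaf clusters first, and then each cluster once all clusters it points to have been handled. For a fixed cluster $\mathcal{C}_l$ the argument has two parts. First, we show that the closed loop of the cluster admits a maximal solution on which the transformed errors $\epsilon^{\psi_{l_j}}$, $l_j\in\mathcal{V}_l$, stay bounded; this yields \eqref{eq: inequality on robustness to guarantee task satisfaction, observer not in design stage of gamma} and hence $\phi^c_l$ through Assumption~\ref{assumption on robustness lower bound}. Second, we show that the resulting state and input bounds verify Assumption~\ref{Assumption: observer asssumptions for convergence} for the agents of $\mathcal{C}_l$. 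This last point is what lets the clusters pointing to $\mathcal{C}_l$ use the observer guarantee of Theorem~\ref{Theorem: main theorem on state estimation convergence} for the states of $\mathcal{C}_l$ that they estimate.

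For a single cluster, we define the open time-varying set
\[
\Omega_l(t)=\{\vect{x}_{\phi^c_l} : -\Gamma_{\psi_{l_j}}(t) < \rho^{\psi_{l_j}}(\hat{\vect{x}}_{\phi_{l_j}})-\rho^{\max}_{\psi_{l_j}}<0,\ \forall l_j\in\mathcal{V}_l\}.
\]
It is nonempty at $t=0$ by Assumption~\ref{Assumption on initialization}. The closed-loop right-hand side is locally Lipschitz in the states and continuous in $t$ on this set, by Assumption~\ref{Assumption on existence of a solution} and the smoothness of $T_{\psi}$, $\Gamma_{\psi}$ and of the observer on $\tilde{\mathcal{X}}_i(t)$. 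A standard existence result therefore gives a maximal solution on some interval $[0,\tau_{\max})$.

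We then take $V_l=\tfrac12\sum_{l_j\in\mathcal{V}_l}(\epsilon^{\psi_{l_j}})^2$. Differentiating, $\dot\epsilon^{\psi_{l_j}} = J_{\psi_{l_j}}\Gamma^{-1}_{\psi_{l_j}}\bigl(\tfrac{d}{dt}\rho^{\psi_{l_j}}(\hat{\vect{x}}_{\phi_{l_j}}) - \dot\Gamma_{\psi_{l_j}} e^{\psi_{l_j}}\bigr)$. The time derivative of $\rho^{\psi_{l_j}}(\hat{\vect{x}}_{\phi_{l_j}})$ splits into three contributions. The first comes from states of agents in $\mathcal{V}_l$; these are known exactly, by Assumption~\ref{Assumption on the non-existance of a task not in communication inside a cluster} and the definition of clusters. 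The second comes from estimated states $\hat x^{l_j}_{q}$ of agents in downstream clusters, which evolve through \eqref{Eq: State observer dynamics}. The third comes from directly measured neighbours in other clusters. Writing $\vect{v}_l=\bigl[\sum_{l_j}\frac{\partial\rho^{\psi_{l_j}}}{\partial x_{l_i}}\Gamma^{-1}_{\psi_{l_j}}J_{\psi_{l_j}}\epsilon^{\psi_{l_j}}\bigr]_{l_i\in\mathcal{V}_l}$ and substituting \eqref{eq: agent's input}, the known-state contribution becomes $-\vect{v}_l^\top \mathrm{diag}(g_{l_i}g_{l_i}^\top)\vect{v}_l\le -\underline{\lambda}\,\norm{\vect{v}_l}^2$. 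This uses Assumption~\ref{Assumption on existence of a solution}(ii) together with the compactness of the state region, which follows from Assumption~\ref{Assumption on concavity and weelposeness of the robustness function}(iii). All remaining terms are drift $f_{l_i}$, disturbance $w_{l_i}$, $\dot\Gamma e$, and the velocities of external and estimated states. Each of them is multiplied by $\Gamma^{-1}J\epsilon$ and is bounded on $\Omega_l(t)$: the drift and disturbance by Assumption~\ref{Assumption on existence of a solution}, the $\dot\Gamma e$ terms by Lemma~\ref{lemma on prescribed performance function}, and the external and estimated state velocities by the inductive hypothesis for downstream clusters. For the estimated velocities we use the fact that $\epsilon_i$ in the observer is bounded (proof of Theorem~\ref{Theorem: main theorem on state estimation convergence}), so the observer derivative $\dot{\hat{\vect{x}}}_i$ is bounded.

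The main obstacle is turning $-\norm{\vect{v}_l}^2$ into a term that dominates $\norm{\vect{\epsilon}}$, i.e.\ showing that $\vect{v}_l$ is not small when some $\epsilon^{\psi_{l_j}}$ is large. I would first exploit concavity, as in \cite{LINDEMANN2021100973}: since $\rho^{\max}_{\psi_{l_j}}<\rho^{\text{opt}}_{\psi_{l_j}}$, the gradient $\partial\rho^{\psi_{l_j}}/\partial x$ is bounded away from zero on the relevant compact level set. The acyclicity of $\mathcal{G}_T$ (Assumption~\ref{Assumption on Task graph}) then makes the matrix collecting these gradients have full row rank, since we can order the agents so that it is triangular. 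Combined with $J_{\psi}\ge 4$, this gives $\norm{\vect{v}_l}\ge c\norm{\vect{J\epsilon}}$. We then obtain $\dot V_l\le -c_1\norm{\vect{J\epsilon}}^2 + c_2\norm{\vect{J\epsilon}}$, which is negative outside a ball, so $\epsilon$ is bounded on $[0,\tau_{\max})$.

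Boundedness of $\epsilon$ keeps the solution in a compact subset of $\Omega_l(t)$, which excludes finite escape and gives $\tau_{\max}=\infty$. It also gives a bounded $u_{l_i}$ and a bounded $x_{l_i}$, by Assumption~\ref{Assumption on concavity and weelposeness of the robustness function}(iii) and the bounding tasks of Remark~\ref{Remark: weel-posenes of robustness}. This closes the induction step. Finally, Assumption~\ref{assumption on robustness lower bound} together with condition (ii) of Section~\ref{Section: Performance functions parameter design} yields $\rho^{\psi_{l_i}}(\vect{x}_{\phi_{l_i}})>0$ on $T^{\text{wdw}}_{\phi_{l_i}}$, and hence satisfaction of $\phi^c_l$ for every $l\in\mathcal{L}$.
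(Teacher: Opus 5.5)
Your proposal is correct and follows essentially the paper's proof. Both argue by induction over $\mathcal{G}'$ from the leaf clusters upward, using the quadratic Lyapunov function on the stacked transformed errors, with positive definiteness obtained from the block-triangular structure induced by the acyclicity of $\mathcal{G}_T$; both then use the bounded inputs and states of downstream clusters to invoke Theorem~\ref{Theorem: main theorem on state estimation convergence} and bound $\dot{\hat{\vect{x}}}$. The differences are only technical refinements: you close with a direct ``$\dot V<0$ outside a ball'' argument plus an explicit maximal-solution/forward-completeness step instead of the paper's $S=1-\exp(-V)$ construction (that step needs only Peano, since $w_i$ and $\nabla\rho^{\psi}$ are merely continuous, not locally Lipschitz), and your ``directly measured neighbours in other clusters'' term is actually empty, because a communicating task neighbour lies in the same cluster by the definition of $\mathcal{G}_{CT}$.
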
 

\begin{proof}
    The proof is structured in three steps. In Step~A, the transformed error dynamics are derived for a generic cluster $\mathcal{C}_l$. In Step~B, task satisfaction is established for clusters $\mathcal{C}_l \in \mathcal{G}'$ corresponding to leaf clusters of $\mathcal{G}'$. Finally, in Step~C, observer convergence together with Assumption~\ref{Assumption on Task graph} is used to prove task satisfaction for all remaining clusters.
    
    For notational simplicity, $\hat{\rho}^{\psi_{l_i}} := \rho^{\psi_{l_i}}(\hat{\vect{x}}_{\phi_{l_i}})$ and $\rho^{\psi_{l_i}} := \rho^{\psi_{l_i}}(\vect{x}_{\phi_{l_i}})$ in the following.
    
    \textbf{Step A (transformed error dynamics):} 
    Given the definition in \eqref{eq: transformed erro definition}, $\dot{\epsilon}^{\psi_{l_i}}(\hat{\vect{x}}_{\phi_{l_i}}, t) = -\frac{1}{e^{\psi_{l_i}}(1+e^{\psi_{l_i}})}\dot{e}^{\psi_{l_i}}$ holds with $\dot{e}^{\psi_{l_i}} =  \Gamma_{\psi_{l_i}}^{-1}\Bigl(\frac{\partial \hat{\rho}^{\psi_{l_i}}}{\partial \Bar{\vect{x}}_{\phi_{l_i}}}^{\top} \dot{\Bar{\vect{x}}}_{\phi_{l_i}} + \frac{\partial \hat{\rho}^{\psi_{l_i}}}{\partial \estimate{\vect{x}}{{l_i}}{{\phi_{l_i}}}}^{\top}  \dot{\Hat{\vect{x}}}^{l_i}_{\phi_{l_i}} - \dot{\Gamma}_{\psi_{l_i}}(t)e^{\psi_{l_i}}\Bigr)$, where $\frac{\partial \hat{\rho}^{\psi_{l_i}}}{\partial \Bar{\vect{x}}_{\phi_{l_i}}} \in \mathbb{R}^{N^{TC}_{{l_i}}}$ and $\frac{\partial \hat{\rho}^{\psi_{l_i}}}{\partial \estimate{\vect{x}}{{l_i}}{{\phi_{l_i}}}} \in \mathbb{R}^{N^{T\setminus C}_{{l_i}}}$ are vectors containing, respectively, the derivatives of $\hat{\rho}^{\psi_{l_i}}$ with respect to the state of agents $j\in \Neigh{{l_i}}{T} \cap \Neighext{{l_i}}{C}$ and $j\in \Neigh{{l_i}}{T} \setminus \Neighext{{l_i}}{C}$, and where $\dot{\Bar{\vect{x}}}_{\phi_{l_i}}$ and $\dot{\Hat{\vect{x}}}^{l_i}_{\phi_{l_i}}$ denote respectively the stacked dynamics of $x_j$ for $j\in \Neigh{{l_i}}{T} \cap \Neighext{{l_i}}{C}$, and the one of $\hat{x}^{l_i}_j$, $j\in \Neigh{{l_i}}{T} \setminus \Neighext{{l_i}}{C}$, when the estimate is performed by $i$.  
    As introduced in Section \ref{Section: collaborative and non-collaborative tasks}, $\Bar{\vect{x}}_{\phi_{l_i}}$ is the vector containing the true state of agents  $j\in \Neigh{{l_i}}{T} \cap \Neighext{{l_i}}{C}$. Thus, $\dot{\epsilon}^{\psi_{l_i}}$ can be rewritten as:
    \begin{equation}
        \label{eq: transformed normalized error dynamics}
        \begin{split}
            \dot{\epsilon}^{\psi_{l_i}} = & J_{\psi_{l_i}}\Gamma^{-1}_{\psi_{l_i}}\biggl(\frac{\partial \hat{\rho}^{\psi_{l_i}}}{\partial {\vect{x}}_{\phi_{l_i}}}^{\top}  \dot{{\vect{x}}}_{\phi_{l_i}}  + \frac{\partial \hat{\rho}^{\psi_{l_i}}}{\partial \estimate{\vect{x}}{{l_i}}{{\phi_{l_i}}}}^{\top}  \dot{\Hat{\vect{x}}}^{i}_{\phi_{l_i}}  -\dot{\Gamma}_{\psi_{l_i}}e^{\psi_{l_i}}\biggr),
        \end{split}
    \end{equation}
    where $\frac{\partial \hat{\rho}^{\psi_{l_i}}}{\partial {\vect{x}}_{\phi_{l_i}}}$ is the derivative of $\hat{\rho}^{\psi_{l_i}}$ with respect to the state of the agents in cluster $\mathcal{C}_l  = (\mathcal{V}_l, \mathcal{E}_l)$, with $l_i \in \mathcal{V}_l$. Note that, from condition (i) in Assumption \ref{Assumption on concavity and weelposeness of the robustness function}, and since $0 < \rho^{\max}_{\psi_j} <\rho^{\text{opt}}_{\psi_j}$ holds for all $j \in \mathcal{V}$, the only components of $\frac{\partial \hat{\rho}^{\psi_{l_i}}}{\partial {\vect{x}}_{\phi_{l_i}}}$ equal to zero are those associated to agents $r \notin \Neigh{{l_i}}{T}$ and $r \in \Neigh{{l_i}}{T} \setminus \Neighext{{l_i}}{C}$.
    By stacking the transformed errors of cluster $\mathcal{C}_l$, i.e., $\epsilon^{\psi_{l_i}}$ for all $l_i \in \mathcal{V}_l$, we define the transformed error vector  $\vect{\epsilon}^{\phi^c_l} := [\epsilon^{\psi_{l_1}}, \dots, \epsilon^{\psi_{l_{v_l}}}]^{\top}$  with dynamics
    \begin{equation}
        \hspace{-0.2cm}
        \label{eq: cluster transformed normalized error dynamics}
        \dot{ \vect{\epsilon}}^{\phi^c_l} = \vect{J}_{\phi^c_l} \vect{\Gamma}^{-1}_{\phi^c_l}  \{ \vect{\Lambda}_{\vect{x}_{\phi^c_l}}^{\top} \dot{\vect{x}}_{\phi^c_l} + \vect{\Lambda}_{\Hat{\vect{x}}_{\phi^c_l}}^{\top}\dot{\Hat{\vect{x}}}_{\phi^c_l} - \dot{\vect{\Gamma}}_{\phi^c_l} \vect{e}^{\phi^c_l}\},
    \end{equation}
    where $\vect{x}_{\phi^c_l} := [x^\top_{l_1}, \dots,  x^\top_{l_{v_l}}]^{\top}$ and $\Hat{\vect{x}}_{\phi^c_l} := [\estimate{\vect{x}}{{l_1 \top}}{{\phi_{l_1}}}, \dots , \estimate{\vect{x}}{{l_{v_l}} \top}{{\phi_{l_{v_l}}}}]^\top $ are defined as in Section \ref{Section: collaborative and non-collaborative tasks}, $\vect{J}_{\phi^c_l} := \text{diag}(J_{\psi_{l_1}}, \dots, J_{\psi_{l_{v_l}}} )$, $\vect{\Gamma}_{\phi^c_l} := \text{diag}(\Gamma_{\psi_{l_1}}, \dots, \Gamma_{\psi_{l_{v_l}}})$,  $\vect{\Lambda}_{\vect{x}_{\phi^c_l}} := \Bigl[ \frac{\partial \hat{\rho}^{\psi_{l_1}}}{\partial \vect{x}_{\phi^c_l}} \ \dots \  \frac{\partial \hat{\rho}^{\psi_{l_{v_l}}}}{\partial \vect{x}_{\phi^c_l}} \Bigr]$, 
      $\vect{\Lambda}_{\Hat{\vect{x}}_{\phi^c_l}} := \text{diag}\Bigl( \frac{\partial \hat{\rho}^{\psi_{l_1}}}{\partial \estimate{\vect{x}}{{l_1}}{{\phi_{l_1}}}}, \dots , \frac{\partial \hat{\rho}^{\psi_{l_{v_l}}}}{\partial \estimate{\vect{x}}{l_{v_l}}{{\phi_{l_{v_l}}}}}\Bigr)$ and $\vect{e}^{\phi^c_l} := [e^{\psi_{l_1}}, \dots,  e^{\psi_{l_{v_l}}}]^{\top}$.
    Given the definition of $\vect{x}_{\phi^c_l}$ from Section \ref{Section: collaborative and non-collaborative tasks}, its dynamics can be written as:
    \begin{equation}
        \label{eq: cluster state dynamcis}
        \dot{\vect{x}}_{\phi^c_l} = \vect{f}_{\phi^c_l}(\vect{x}_{\phi^c_l}) +\vect{g}_{\phi^c_l}(\vect{x}_{\phi^c_l})\vect{u}_{\phi^c_l} + \vect{w}_{\phi^c_l}(\vect{x},t),
    \end{equation}
    where $\vect{f}_{\phi^c_l}(\vect{x}_{\phi^c_l}) := [f_{l_1}(x_{l_1}), \dots ,f_{l_{v_l}}(x_{l_{v_l}})]^\top$, $\vect{g}_{\phi^c_l}(\vect{x}_{\phi^c_l}) := \text{diag}(g_{l_1}(x_{l_1}), \dots ,g_{l_{v_l}}(x_{l_{v_l}}))$,  $\vect{u}_{\phi^c_l} := [u_{l_1}, \dots, u_{l_{v_l}}]^\top$ and $\vect{w}_{\phi^c_l}(\vect{x},t) := [w_{l_1}(\vect{x},t), \dots, w_{l_{v_l}}(\vect{x},t)]^\top$.    
   
    \textbf{Step~B (leaf clusters):} As discussed in Remark \ref{remark on the acyclicity of the cluster induced graph}, $\mathcal{G}'$ is a directed acyclic graph. Therefore, there exists at least one leaf cluster $\mathcal{C}_l = (\mathcal{V}_l, \mathcal{E}_l)$, $l \in \mathcal{L}$, with no outgoing edges. Consequently, from Definition \ref{def: cluster induced graph} and Assumption~\ref{Assumption on the non-existance of a task not in communication inside a cluster}, all $\phi_{l_i}$, with $l_i\in \mathcal{V}_l$, involve only agents in $\mathcal{V}_l$, and satisfaction of
    $\phi_l^c$ is independent of estimations. As a result, $\hat{\rho}^{\psi_{l_i}} \equiv \rho^{\psi_{l_i}}$, $\frac{\partial \hat{\rho}^{\psi_{l_i}}}{\partial \estimate{\vect{x}}{{l_i}}{{\phi_{l_i}}}}^{\top} = 0$, and \eqref{eq: cluster transformed normalized error dynamics} boils down to $\dot{\vect{\epsilon}}^{\phi^c_l} = \vect{J}_{\phi^c_l} \vect{\Gamma}^{-1}_{\phi^c_l}  \{ \vect{\Lambda}_{\vect{x}_{\phi^c_l}}^{\top} \dot{\vect{x}}_{\phi^c_l}- \dot{\vect{\Gamma}}_{\phi^c_l} \vect{e}^{\phi^c_l}\}$, where $\vect{\Lambda}_{\vect{x}_{\phi^c_l}} := \Bigl[ \frac{\partial {\rho}^{\psi_{l_1}}}{\partial \vect{x}_{\phi^c_l}} \ \dots \  \frac{\partial {\rho}^{\psi_{l_{v_l}}}}{\partial \vect{x}_{\phi^c_l}} \Bigr]$ and $\vect{\Gamma}_{\phi^c_l} := \text{diag}(\gamma_{\psi_{l_1}}, \dots,\gamma_{\psi_{l_{v_l}}})$.
    
    Consider the candidate Lyapunov function $V(\vect{\epsilon}^{\phi^c_l} (\vect{e}^{\phi^c_l})) = \frac{1}{2} \vect{\epsilon}^{\phi^c_l} (\vect{e}^{\phi^c_l})^\top \vect{\epsilon}^{\phi^c_l} (\vect{e}^{\phi^c_l})$. By differentiating $V$ and by replacing $\dot{\vect{\epsilon}}^{\phi^c_l}$, $\dot{V}$ becomes $ \dot{V} =  \vect{\epsilon}^{\phi^c_l\top} \vect{J}_{\phi^c_l} \vect{\Gamma}^{-1}_{\phi^c_l}  \{ \vect{\Lambda}_{\vect{x}_{\phi^c_l}}^{\top} \dot{\vect{x}}_{\phi^c_l}- \dot{\vect{\Gamma}}_{\phi^c_l} \vect{e}^{\phi^c_l}\}$. Then, by plugging \eqref{eq: cluster state dynamcis}, $ \dot{V} =  \vect{\epsilon}^{\phi^c_l\top} \vect{J}_{\phi^c_l} \vect{\Gamma}^{-1}_{\phi^c_l}  \{ \vect{\Lambda}_{\vect{x}_{\phi^c_l}}^{\top} [\vect{f}_{\phi^c_l}(\vect{x}_{\phi^c_l}) +  \vect{w}_{\phi^c_l}(\vect{x},t) +\vect{g}_{\phi^c_l}(\vect{x}_{\phi^c_l})\vect{u}_{\phi^c_l} ]- \dot{\vect{\Gamma}}_{\phi^c_l} \vect{e}^{\phi^c_l}\}$ holds.
    By stacking $u_{l_i}$ as per \eqref{eq: agent's input} for all $l_i \in \mathcal{V}_l$, the input vector is $\vect{u}_{\phi^c_l} = -\vect{g}^{\top}_{\phi^c_l}(\vect{x}_{\phi^c_l})  \vect{\Lambda}_{\vect{x}_{\phi^c_l}}^{\top} \vect{\Gamma}^{-1}_{\phi^c_l} \vect{J}_{\phi^c_l}\vect{\epsilon}^{\phi^c_l}$, and 
     \begin{equation} 
        \label{eq: Lyapunov function expression}
        \begin{split}
            \dot{V} =& - \vect{\epsilon}^{\phi^c_l\top} \vect{J}_{\phi^c_l} \vect{\Gamma}^{-1}_{\phi^c_l} \vect{\Lambda}_{\vect{x}_{\phi^c_l}}^{\top} \Theta_{\phi^c_l}  \vect{\Lambda}_{\vect{x}_{\phi^c_l}} \vect{\Gamma}^{-1}_{\phi^c_l} \vect{J}_{\phi^c_l}\vect{\epsilon}^{\phi^c_l} \\  &+ \vect{\epsilon}^{\phi^c_l\top} \vect{J}_{\phi^c_l} \vect{\Gamma}^{-1}_{\phi^c_l} \{- \dot{\vect{\Gamma}}_{\phi^c_l} \vect{e}^{\phi^c_l} + \vect{\Lambda}_{\vect{x}_{\phi^c_l}}^{\top} [\vect{f}_{\phi^c_l}+ \vect{w}_{\phi^c_l}]\},
        \end{split}
    \end{equation}
    where $\Theta_{\phi^c_l} := \vect{g}_{\phi^c_l}(\vect{x}_{\phi^c_l}) \vect{g}^\top_{\phi^c_l}(\vect{x}_{\phi^c_l})$ is a positive definite block diagonal matrix by condition (ii) of Assumption~\ref{Assumption on existence of a solution}.
    
    Given that $\mathcal{G}_T$ is acyclic from Assumption~\ref{Assumption on Task graph}, there exists a topological reordering of its nodes, i.e., an ordering such that each directed edge $(i,j) \in \mathcal{G}_T$ implies $j < i$. Accordingly, there exist permutation matrices $P_{\vect{x}_{\phi^c_l}}$ and $P_{\phi^c_l}$ such that $P_{\vect{x}_{\phi^c_l}} \vect{\Lambda}_{\vect{x}_{\phi^c_l}}P^\top_{\phi^c_l}$ admits a block \mbox{lower-triangular} structure induced by the task dependencies. This reflects that each task depends only on state components associated with \mbox{lower-indexed} nodes in the ordering, while no dependence exists on \mbox{higher-indexed} nodes. Since permutation matrices are invertible and induce only relabeling of coordinates, they preserve rank and positive definiteness of quadratic forms under congruence transformations. Hence, without loss of generality, we assume that the agents and task indices are ordered according to a topological ordering, and we omit the corresponding permutation matrices in the following.
    
    Since $\rho^{\max}_{\psi_{l_i }} <\rho^{\text{opt}}_{\psi_{l_i }}$ holds by design, and (i) of Assumption~\ref{Assumption on concavity and weelposeness of the robustness function} guarantees that $\frac{\partial {\rho}^{\psi_{l_i}}}{\partial x_{l_i}} \neq 0$ for all $l_i \in \mathcal{V}_l$ whenever ${\rho}^{\psi_{l_i}} < \rho^{\text{opt}}_{\psi_{l_i }}$, the elements on the main diagonal of $\vect{\Lambda}_{\vect{x}_{\phi^c_l}}$ are nonzero. 
    As a result, since $\vect{\Lambda}_{\vect{x}_{\phi^c_l}}$ is block lower-triangular with nonzero elements on the main diagonal, $\text{rank}(\vect{\Lambda}_{\vect{x}_{\phi^c_l}}) = v_l$ and $\vect{\Lambda}_{\vect{x}_{\phi^c_l}}^{\top} \Theta_{\phi^c_l} \vect{\Lambda}_{\vect{x}_{\phi^c_l}}$ is positive definite.
    Since $\vect{J}_{\phi^c_l}$ and $\vect{\Gamma}^{-1}_{\phi^c_l} $ are diagonal matrices with positive diagonal entries $J_{\psi_{l_i}}(e^{\psi_{l_i}})$ and $\gamma_{\psi_{l_i}}(t)$, $- \vect{\epsilon}^{\phi^c_l\top } \vect{J}_{\phi^c_l} \vect{\Gamma}^{-1}_{\phi^c_l}\vect{\Lambda}_{\vect{x}_{\phi^c_l}}^{\top} \Theta_{\phi^c_l} \vect{\Lambda}_{\vect{x}_{\phi^c_l}} \vect{\Gamma}^{-1}_{\phi^c_l} \vect{J}_{\phi^c_l}\vect{\epsilon}^{\phi^c_l} <  - \alpha_{\rho} \alpha_{J} \vect{\epsilon}^{\phi^c_l \top}\vect{\epsilon}^{\phi^c_l} $ holds with $\alpha_{\rho} = \lambda_{\text{min}}(\vect{\Lambda}_{\vect{x}_{\phi^c_l}}^{\top} \Theta_{\phi^c_l} \vect{\Lambda}_{\vect{x}_{\phi^c_l}}) \in \mathbb{R}_{> 0}$ and $\alpha_J = \min_{l_i \in \mathcal{V}_l} \Bigl\{\min_{t \in \mathbb{R}_{\geq 0},e^{\psi_{l_i}}\in (-1,0) } \gamma_{\psi_{l_i}}^{-2}(t) J^2_{\psi_{l_i}}(e^{\psi_{l_i}}) \Bigr\} \in \mathbb{R}_{> 0}$. By adding and subtracting $\zeta \norm{\vect{\Gamma}^{-1}_{\phi^c_l} \vect{J}_{\phi^c_l}\vect{\epsilon}^{\phi^c_l}}^2$ for some $0 < \zeta < \alpha_\rho$, \eqref{eq: Lyapunov function expression} can be rewritten as
     \begin{equation}
        \label{eq: rewritten Lyapunov function bound, step B proof task satisfaction}
        \begin{split}
             \dot{V} \leq &- (\alpha_{\rho} - \zeta) \alpha_J \norm{\vect{\epsilon}^{\phi^c_l}}^2 + \vect{\epsilon}^{\phi^c_l \top} \vect{J}_{\phi^c_l} \vect{\Gamma}^{-1}_{\phi^c_l} b(t) \\ &- \zeta \norm{\vect{\Gamma}^{-1}_{\phi^c_l} \vect{J}_{\phi^c_l}\vect{\epsilon}^{\phi^c_l}}^2,
        \end{split}
     \end{equation}
     where $b(t) :=  \Bigl\{- \dot{\vect{\Gamma}}_{\phi^c_l} \vect{e}^{\phi^c_l} + \vect{\Lambda}_{\vect{x}_{\phi^c_l}}^{\top} [\vect{f}_{\phi^c_l}+ \vect{w}_{\phi^c_l}]\Bigr\}$. Since $\vect{\epsilon}^{\phi^c_l \top} \vect{J}_{\phi^c_l} \vect{\Gamma}^{-1}_{\phi^c_l} b(t) - \zeta \norm{\vect{\Gamma}^{-1}_{\phi^c_l} \vect{J}_{\phi^c_l}\vect{\epsilon}^{\phi^c_l}}^2$ resemble terms of the quadratic form $\norm{\sqrt{\zeta} \vect{\Gamma}^{-1}_{\phi^c_l} \vect{J}_{\phi^c_l}\vect{\epsilon}^{\phi^c_l} - \frac{1}{2\sqrt{\zeta}}b(t)}^2$, and $\norm{\sqrt{\zeta} \vect{\Gamma}^{-1}_{\phi^c_l} \vect{J}_{\phi^c_l}\vect{\epsilon}^{\phi^c_l} - \frac{1}{2\sqrt{\zeta}}b(t)}^2 \geq 0$ hold by the definition of norm, the inequality $\vect{\epsilon}^{\phi^c_l \top} \vect{J}_{\phi^c_l} \vect{\Gamma}^{-1}_{\phi^c_l} b(t) - \zeta \norm{\vect{\Gamma}^{-1}_{\phi^c_l} \vect{J}_{\phi^c_l}\vect{\epsilon}^{\phi^c_l}}^2 \leq \frac{1}{4\zeta}b^\top(t)b(t)$ is satisfied for all $\vect{\epsilon}^{\phi^c_l}$ and \eqref{eq: rewritten Lyapunov function bound, step B proof task satisfaction} can be bounded as $ \dot{V} \leq - (\alpha_{\rho} - \zeta) \alpha_J \norm{\vect{\epsilon}^{\phi^c_l}}^2 + \frac{1}{4\zeta}b^\top(t)b(t)$. As a result:
     \begin{equation}
        \label{eq: Lyapunov function final upper bound}
        \dot{V} \leq - \kappa V + \vect{b}(t),
    \end{equation}
    with $\kappa := 2(\alpha_{\rho} - \zeta) \alpha_J $ and $\vect{b}(t) := \frac{1}{4\zeta} (\norm{\vect{\Lambda}_{\vect{x}_{\phi^c_l}}}\norm{\vect{f}_{\phi^c_l}} + \norm{\vect{\Lambda}_{\vect{x}_{\phi^c_l}}}\norm{\vect{w}_{\phi^c_l}} + \norm{\dot{\vect{\Gamma}}_{\phi^c_l}  \vect{e}^{\phi^c_l}})^2$. To proceed, let's check whether $\vect{b}(t)$ admits an upper bound $\Bar{\vect{b}}(t)$. For this purpose, define the set $\mathcal{X}_{l_i}(t):=\{\vect{x}_{\phi_{l_i}} \in \mathbb{R}^{N^{TC}_{i}}| -1 < e^{\psi_{l_i}} < 0 \}$ as the one containing the states $\vect{x}_{\phi_{l_i}}$ satisfying task $\phi_{l_i}$ at time $t$. Since $\gamma_{\phi_{l_i}}(t)$ is a non-increasing function, $\mathcal{X}_{l_i}(t_2) \subseteq \mathcal{X}_{l_i}(t_1)$ holds for all $t_1 < t_2$, and $\mathcal{X}_{l_i}(0)$ collects all states $\vect{x}_{\phi_{l_i}}$ for which $e^{\psi_{l_i}} \in (-1,0)$ for all $t \in \mathbb{R}_{\geq 0}$. From condition (iii) of Assumption~\ref{Assumption on concavity and weelposeness of the robustness function}, $\mathcal{X}_{l_i}(0)$ is bounded for all $l_i \in \mathcal{V}_l$. Thus, since $f$ is Lipschitz continuous, $\norm{\vect{f}_{\phi^c_l}(\vect{x}_{\phi^c_l})}$ is bounded in $\mathcal{X}_{l_i}(0)$. Moreover, since $\dot{\vect{\Gamma}}_{\phi^c_l} \vect{e}^{\phi^c_l}$ is a column vector with $\dot{\gamma}_{\phi_{l_i}}e^{\psi_{l_i}}$ as entries, and $|\dot{\gamma}_{\phi_{l_i}}(t)e^{\psi_{l_i}}| < |\dot{\gamma}_{\phi_{l_i}}(0)|$ holds with $|\dot{\gamma}_{\phi_{l_i}}(0)|< \infty$ from $\gamma_{\phi_{l_i}}$ definition, also $\norm{\dot{\gamma}_{\phi_{l_i}}e^{\psi_{l_i}}}$ is bounded. 
    Note that ${\rho}^{\psi_{l_i}}$ is concave, continuously differentiable and \mbox{well-posed} according to Assumption~\ref{Assumption on concavity and weelposeness of the robustness function}. Thus, since every function of this kind has bounded derivative on bounded open sets contained in its domain, $\norm{\vect{\Lambda}_{\vect{x}_{\phi^c_l}}}$ is bounded on $\mathcal{X}_{l_i}(0)$.
    Finally, since $w_i(\vect{x},t)$ satisfies Assumption \ref{Assumption on existence of a solution}, $\norm{\vect{w}_{\phi^c_l}}$ is bounded and an upper bound $\Bar{\vect{b}}(t)$ on $\vect{b}(t)$ is guaranteed to exist for all $\vect{x}_{\phi_{l_i}} \in \mathcal{X}_{l_i}(0)$ and all $l_i \in \mathcal{V}_l$. 
    
    To conclude Step~A, consider $S(\vect{e}^{\phi^c_l}) := 1 - \exp(- V(\vect{\epsilon}^{\phi^c_l} (\vect{e}^{\phi^c_l})))$. From its definition: (i) $S(\vect{e}^{\phi^c_l}) \in (0,1)$ for all $\vect{e}^{\psi_{l_i}} \in \mathcal{D} $, with $\mathcal{D} = \bigtimes_{j=1}^{v_l}(-1,0)$, and (ii) $S(\vect{e}^{\phi^c_l}) \rightarrow 1$ as $\vect{e}^{\phi^c_l} \rightarrow \partial \mathcal{D} $. Thus, studying the boundedness of $\vect{\epsilon}^{\phi^c_l}$ through the one of $V$, reduces to proving that $S(\vect{e}^{\phi^c_l}) < 1$ holds for all $t$. By differentiating $S(\vect{e}^{\phi^c_l})$, and by replacing \eqref{eq: Lyapunov function final upper bound} and $V(\vect{\epsilon}^{\phi^c_l} (\vect{e}^{\phi^c_l})) = - \ln (1-S(\vect{e}^{\phi^c_l}))$, the following inequality is obtained: $\dot{S}(\vect{e}^{\phi^c_l}) \leq - \kappa (1-S(\vect{e}^{\phi^c_l})) \Bigl(-\frac{1}{\kappa}\vect{b}(t) - \ln (1-S(\vect{e}^{\phi^c_l}))\Bigr)$.
    Since $\kappa$ and $1-S(\vect{e}^{\phi^c_l})$ are positive, they do not affect the sign of $ \dot{S}(\vect{e}^{\phi^c_l})$. Therefore, to verify whether $\dot{S}(\vect{e}^{\phi^c_l}) \leq 0 $, it suffices to study under which condition $-\frac{1}{\kappa}\vect{b}(t) - \ln (1-S(\vect{e}^{\phi^c_l}))\geq 0$ holds. Note that
 $-\frac{1}{\kappa}\vect{b}(t) - \ln (1-S(\vect{e}^{\phi^c_l}))\geq 0 $ is satisfied for all $\vect{e}^{\phi^c_l} \in \Omega^c_{\vect{e}}$, where $\Omega_{\vect{e}} :=\bigl\{\vect{e}^{\phi^c_l} \in \mathcal{D}| S(\vect{e}^{\phi^c_l}) < 1- \exp{(-\frac{\bar{\vect{b}}(t)}{\kappa})} \bigr\}$. Furthermore, $-\frac{1}{\kappa}\vect{b}(t) - \ln (1-S(\vect{e}^{\phi^c_l})) = 0 $ holds for $\vect{e}^{\phi^c_l} \in  \partial\Omega_{\vect{e}}$. Consequently, $\dot{S}(\vect{e}^{\phi^c_l}) \leq 0$  for $\vect{e}^{\phi^c_l} \in \Omega^c_{\vect{e}}$, with $\dot{S}(\vect{e}^{\phi^c_l}) =0$ iff $\vect{e}^{\phi^c_l} \in \partial\Omega_{\vect{e}}$. 
    Since the initialization satisfies $\rho^{\max}_{\psi_{l_i}} -\gamma_{\psi_{l_i}}(0) < \rho^{\psi_{l_i}}(\Bar{\vect{x}}_{\phi_{l_i}}(0)) < \rho^{\max}_{\psi_{l_i}}$ for all $l_i \in \mathcal{V}_l$, it follows that $e^{\psi_{l_i}} \in (-1,0)$ and $S(\vect{e}^{\phi^c_l}(0)) < 1$. Moreover, since $\exp{(-\frac{\Bar{\vect{b}}(t)}{\kappa})} > 0$ by definition, $S(\vect{e}^{\phi^c_l})) < 1 $ is preserved for all $t$, independently of whether $\vect{e}^{\phi^c_l}$ is initialized inside or outside $\Omega_{\vect{e}}$. From $S(\vect{e}^{\phi^c_l})) < 1 $, boundedness of $V(\vect{\epsilon}^{\phi^c_l} (\vect{e}^{\phi^c_l}))$, and therefore of the transformed error $\vect{\epsilon}^{\phi^c_l}$, follows. As a result, ${\rho}^{\psi_{l_i}}$ satisfies \eqref{eq: inequality on robustness to guarantee task satisfaction, observer not in design stage of gamma} for all $l_i \in \mathcal{V}_l$, and tasks satisfaction is guaranteed for every leaf cluster. 
    
    \textbf{Step C (non-leaf clusters):} Consider $\mathcal{C}_l$ that are not leaf clusters. For simplicity, focus on an \mbox{in-neighbor} of a leaf cluster, i.e., $\mathcal{C}_l$ such that $(\mathcal{C}_l, \mathcal{C}_j) \in \mathcal{E}'$ with $\mathcal{C}_j$ being a leaf cluster. From Step~B, $\psi_{j_q}$ is satisfied for all  $j_q \in \mathcal{V}_j$. Hence, $\epsilon^{\psi_{j_q}} < \infty$ for all $j_q \in \mathcal{V}_j$, and $u_{j_q} < \infty$ holds. Since $\vect{x}_{\phi_{j_q}} \in \mathcal{X}_{j_q}(t)$, with $\mathcal{X}_{j_q}(t)$ bounded, all trajectories of agents  $j_q \in \mathcal{V}_j$ are bounded. Thus, Assumption \ref{Assumption: observer asssumptions for convergence} holds, Theorem~\ref{Theorem: main theorem on state estimation convergence} applies, and the \mbox{$k$-hop} PPSO in \eqref{Eq: State observer dynamics} guarantees $|\tilde{x}^{N^{j_q}_r}_{j_q}| < \delta^{N^{j_q}_r}_{j_q}$ for all $j_q\in \mathcal{V}_j$ and $N^{j_q}_r \in \Nhop{j_q}{k}$.
    Based on the observer guarantees, task satisfaction of non-leaf clusters is proven similarly to Step~B.
    Consider $V(\vect{\epsilon}^{\phi^c_l} (\vect{e}^{\phi^c_l})) = \frac{1}{2} \vect{\epsilon}^{\phi^c_l} (\vect{e}^{\phi^c_l})^\top \vect{\epsilon}^{\phi^c_l} (\vect{e}^{\phi^c_l})$. By means of \eqref{eq: agent's input}, \eqref{eq: cluster transformed normalized error dynamics} and \eqref{eq: cluster state dynamcis}, $\dot{V}$ can be written as
    \begin{equation} 
        \label{eq: Lyapunov function expression in inner clusters}
        \begin{split}
        \hspace{-0.1cm}
            \dot{V} =& \vect{\epsilon}^{\phi^c_l\top} \vect{J}_{\phi^c_l} \vect{\Gamma}^{-1}_{\phi^c_l} \{- \dot{\vect{\Gamma}}_{\phi^c_l} \vect{e}^{\phi^c_l} + \vect{\Lambda}_{\Hat{\vect{x}}_{\phi^c_l}}^{\top}\dot{\Hat{\vect{x}}}_{\phi^c_l} + \vect{\Lambda}_{\vect{x}_{\phi^c_l}}^{\top} [\vect{f}_{\phi^c_l}+\\  & \vect{w}_{\phi^c_l}] \} - \vect{\epsilon}^{\phi^c_l\top} \vect{J}_{\phi^c_l} \vect{\Gamma}^{-1}_{\phi^c_l} \vect{\Lambda}_{\vect{x}_{\phi^c_l}}^{\top} \Theta_{\phi^c_l}  \vect{\Lambda}_{\vect{x}_{\phi^c_l}} \vect{\Gamma}^{-1}_{\phi^c_l} \vect{J}_{\phi^c_l}\vect{\epsilon}^{\phi^c_l},
        \end{split}
    \end{equation}
    where $\vect{J}_{\phi^c_l}$, $\vect{\Gamma}^{-1}_{\phi^c_l}$, $\vect{\Lambda}_{\Hat{\vect{x}}_{\phi^c_l}}$ and $\vect{\Lambda}_{\vect{x}_{\phi^c_l}}$ are defined as in \eqref{eq: cluster transformed normalized error dynamics}, and $\Theta_{\phi^c_l}$ as in \eqref{eq: Lyapunov function expression}. 
    As in Step~A, under Assumption~\ref{Assumption on Task graph}, $\vect{\Lambda}_{\vect{x}_{\phi^c_l}}$ can be written as a block \mbox{lower-triangular} matrix with $\frac{\partial \hat{\rho}^{\psi_{l_i}}}{\partial x_{l_i}}$, $l_i \in \mathcal{V}_l$, on the main diagonal.
    
    Then, since $\rho^{\max}_{\psi_{l_i }} <\rho^{\text{opt}}_{\psi_{l_i }}$ by design, and every task depends on the true state of the agents to which it is assigned, $\text{rank}(\vect{\Lambda}_{\vect{x}_{\phi^c_l}}) = v_l$ and $\vect{\Lambda}_{\vect{x}_{\phi^c_l}}^{\top} \Theta_{\phi^c_l} \vect{\Lambda}_{\vect{x}_{\phi^c_l}}$ is positive definite. Since $J_{\psi_{l_i}}(e^{\psi_{l_i}}) >0$ and $\Gamma_{\psi_{l_i}}(t)>0$, $- \vect{\epsilon}^{\phi^c_l\top} \vect{J}_{\phi^c_l} \vect{\Gamma}^{-1}_{\phi^c_l} \vect{\Lambda}_{\vect{x}_{\phi^c_l}}^{\top} \Theta_{\phi^c_l}  \vect{\Lambda}_{\vect{x}_{\phi^c_l}} \vect{\Gamma}^{-1}_{\phi^c_l} \vect{J}_{\phi^c_l}\vect{\epsilon}^{\phi^c_l} <  - \alpha_{\rho} \alpha_J \vect{\epsilon}^{\phi^c_l \top}\vect{\epsilon}^{\phi^c_l}$ holds with $\alpha_{\rho} = \lambda_{\text{min}}(\vect{\Lambda}_{\vect{x}_{\phi^c_l}}^{\top} \Theta_{\phi^c_l} \vect{\Lambda}_{\vect{x}_{\phi^c_l}}) \in \mathbb{R}_{> 0}$ and $\alpha_J = \min_{l_i \in \mathcal{V}_l} \Bigl\{\min_{t \in \mathbb{R}_{\geq 0}, e^{\psi_{l_i}}\in (-1,0)} \Gamma_{\psi_{l_i}}^{-2}(t) J^2_{\psi_{l_i}}(e^{\psi_{l_i}}) \Bigr\} \in \mathbb{R}_{> 0}$.
    
    By adding and subtracting $\zeta \norm{\vect{\Gamma}^{-1}_{\phi^c_l} \vect{J}_{\phi^c_l}\vect{\epsilon}^{\phi^c_l}}^2$ for some $0 < \zeta < \alpha_\rho$, \eqref{eq: Lyapunov function expression in inner clusters} can be rewritten as:
    \begin{equation}
        \label{eq: Lyapunov function final upper bound, observation case}
        \dot{V} \leq - \kappa V + \vect{b}(t),
    \end{equation}
    where $\kappa := 2(\alpha_{\rho} - \zeta) \alpha_J$ and $\vect{b}(t) := \frac{1}{4\zeta} (\norm{\vect{\Lambda}_{\vect{x}_{\phi^c_l}}}\norm{\vect{f}_{\phi^c_l}} + \norm{\vect{\Lambda}_{\vect{x}_{\phi^c_l}}}\norm{\vect{w}_{\phi^c_l}} + \norm{\dot{\vect{\Gamma}}_{\phi^c_l}  \vect{e}^{\phi^c_l}}+ \norm{\vect{\Lambda}_{\Hat{\vect{x}}_{\phi^c_l}}} \norm{\dot{\Hat{\vect{x}}}_{\phi^c_l}})^2$. 
    
    Since \eqref{eq: Lyapunov function final upper bound, observation case} resembles \eqref{eq: Lyapunov function final upper bound}, to prove the satisfaction of the cluster's task, it suffices to prove the existence of an upper bound $\Bar{\vect{b}}(t)$ on $\vect{b}(t)$ as in Step~B. Consider the set $\mathcal{X}_{l_i}(t):=\{\hat{\vect{x}}^\top_{\phi_{l_i}}\in \mathbb{R}^{N^{TC}_{i}}| -1 < e^{\psi_{l_i}} < 0 \}$ containing the states $\Bar{\vect{x}}_{\phi_{l_i}}$ and the state estimates $\hat{\vect{x}}^{l_i}_{\phi_{l_i}}$ satisfying task $\phi_{l_i}$. Following Lemma~\ref{lemma on prescribed performance function}, suppose $\Gamma_{\phi_{l_i}}(t)$ assumes its maximum value at $t_{\text{max}}$. Then, $\mathcal{X}_{l_i}(t) \subseteq \mathcal{X}_{l_i}(t_{\text{max}})$ for all $t$, and $\mathcal{X}_{l_i}(t_{\text{max}})$ collects all states $\Hat{\vect{x}}_{\phi_{l_i}}$ such that $e^{\psi_{l_i}} \in (-1,0)$ at all $t \in \mathbb{R}_{\geq 0}$. 
    
    Since some robustness functions $\rho^{\psi_{l_i}}$ associated with tasks assigned to agent $l_i \in \mathcal{V}_l$ may depend on estimated states, i.e., $\rho^{\psi_{l_i}} = \rho^{\psi_{l_i}}(\hat{\vect{x}}_{\phi_{l_i}})$, condition (iii) in Assumption~\ref{Assumption on concavity and weelposeness of the robustness function} cannot be ensured by merely augmenting $\psi_{l_i}$ with the boundedness constraint $\psi^{\text{Bound}}_{l_i} := (\norm{\vect{x}_{\phi_{l_i}}}^2 \leq \Bar{C}_{l_i}^2)$. However, since the out-neighboring clusters of $\mathcal{C}_l$, i.e., $\mathcal{C}_j$ with $(\mathcal{C}_l, \mathcal{C}_j) \in \mathcal{E}'$, complete their tasks as established in Step~B, augmenting the original specifications $\psi_{l_i}$ and $\psi_{j_q}$ with the constraints $\psi^{\text{Bound}}_{l_i} := (\norm{\Bar{\vect{x}}_{\phi_{l_i}}}^2 \leq \Bar{C}_{l_i}^2)$ and $\psi^{\text{Bound}}_{j_q} := (\norm{\Bar{\vect{x}}_{\phi_{j_q}}}^2 \leq \Bar{C}_{j_q}^2)$, for all $l_i \in \mathcal{V}_l$ and $j_q \in \mathcal{V}_j$ with $(\mathcal{C}_l, \mathcal{C}_j) \in \mathcal{E}'$, together with the use of the convergent observer \eqref{Eq: State observer dynamics}, ensures that the augmented robustness functions associated to $\psi_{l_i} \land \psi^{\text{Bound}}_{l_i}$ and $\psi_{j_q} \land \psi^{\text{Bound}}_{j_q}$ are \mbox{well-posed}. To preserve the feasibility of $\psi_{l_i}$ and $\psi_{j_q}$ when they are extended to $\psi_{l_i} \land \psi^{\text{Bound}}_{l_i}$ and $\psi_{j_q} \land \psi^{\text{Bound}}_{j_q}$, the constants $\Bar{C}_{l_i}$ and $\Bar{C}_{j_q} \in \mathbb{R}_{\geq 0}$ must be selected sufficiently large so that conditions (i)-(iii) in Section~\ref{Section: Performance functions parameter design} remain satisfied for the augmented specifications. Intuitively, they should be chosen to account for the \mbox{worst-case} estimation errors induced by the observer, ensuring that the feasible set of the augmented tasks is nonempty.
    
    Under these conditions, $\mathcal{X}_{l_i}(t)$ remains bounded in time and $\mathcal{X}_{l_i}(t_{\text{max}})$ is guaranteed to be a bounded open set.
    As a result, for the same reasons as in Step~A, $\norm{\vect{f}_{\phi^c_l}}$, $\norm{\vect{w}_{\phi^c_l}}$, $ \norm{\dot{\vect{\Gamma}}_{\phi^c_l}  \vect{e}^{\phi^c_l}}$, $\norm{\vect{\Lambda}_{\vect{x}_{\phi^c_l}}}$ and $\norm{\vect{\Lambda}_{\Hat{\vect{x}}_{\phi^c_l}}}$ are bounded on $\mathcal{X}_{l_i}(t_{\text{max}})$. $\dot{\Hat{\vect{x}}}_{\phi^c_l}$ is a vector containing the dynamics of all estimates involved in tasks $\phi_{l_i}$, i.e., $\dot{\hat{x}}_{r}^{l_i} = - \rho_{r}^{l_i}(t)^{-1} J(e_{r}^{l_i}) \epsilon_{r}^{^{l_i}}(t)$, for all $l_i \in \mathcal{V}_l$ and $r \in  \Neigh{{l_i}}{T} \setminus \Neighext{{l_i}}{C}$. From the observer guarantees, $|\tilde{x}_{r}^{l_i}| < \delta_{r}^{l_i}$ holds. Thus, $\epsilon_{r}^{l_i} < \infty$, $J(e_{r}^{l_i}) < \infty$, and each component $\dot{\hat{x}}_{r}^{l_i}$ of $\dot{\Hat{\vect{x}}}_{\phi^c_l}$ is bounded. Hence, $\norm{\dot{\Hat{\vect{x}}}_{\phi^c_l}}$ is bounded as well.  Given the previous bounds, an upper bound ${\Bar{\vect{b}}}(t)$ on $\vect{b}(t)$ is guaranteed to exist for all $\hat{\vect{x}}_{\phi_{l_i}} \in \mathcal{X}_{l_i}(t_{\text{max}})$ and all $l_i \in \mathcal{V}_l$. 
    By introducing $S(\vect{e}^{\phi^c_l})$, task satisfaction can be proven as in Step~B. For this reason, the rest of the proof is omitted for brevity.
    
    In Step C, we analyzed clusters $\mathcal{C}_l$ whose out-neighbors are leaves. Owing to the acyclic structure of $\mathcal{G}'$, and assuming task satisfaction for the already treated clusters, the same reasoning can be applied iteratively to ensure task satisfaction all the way up to the root clusters.
\end{proof}
\begin{remark}
    As introduced in Section~\ref{Section: cluster induced graph}, the global task $\phi$ can be rewritten as $\phi = \land_{l=1}^{N'} {\phi}^c_l$. Thus, since Theorem~\ref{Theorem on task satisfacion} guarantees the satisfaction of each cluster’s task, global task satisfaction follows.
\end{remark}

%% file: Content/Case_study.tex
\begin{figure*}[t!]
    \centering
    \includegraphics[width=1\linewidth]{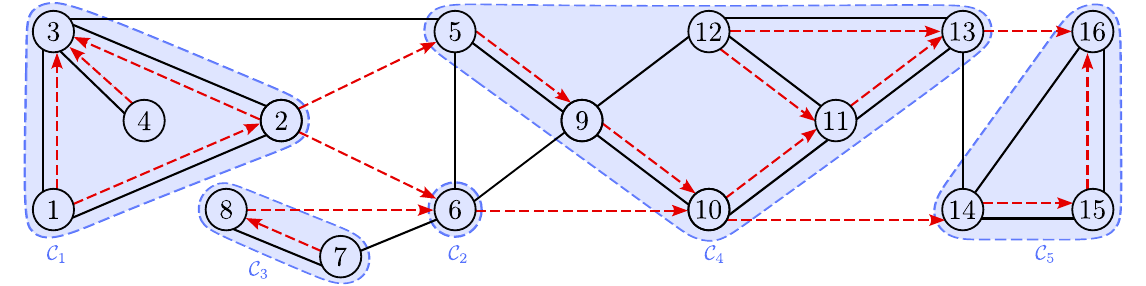}
    \caption{The graphs $\mathcal{G}_C$ and $\mathcal{G}_T$, respectively in black and red, induce five clusters $\mathcal{C}_1$, $\mathcal{C}_2$, $\mathcal{C}_3$, $\mathcal{C}_4$, $\mathcal{C}_5$ with $\mathcal{V}_1 :=\{1,2,3,4\}$, $\mathcal{V}_2 := \{6\}$, $\mathcal{V}_3 := \{7,8\}$, $\mathcal{V}_4:= \{5,9,10,11,12,13\}$ and $\mathcal{V}_5 := \{14,15,16\}$.}
    \label{fig:Graph Gt, Gc, induced clusters simu}
    \vspace{-0.1cm}
\end{figure*}

To demonstrate the validity of the proposed approach, we consider a network of $N = 16$ mobile robots whose information exchange is described by the communication graph $\mathcal{G}_C$ shown in black in Fig.~\ref{fig:Graph Gt, Gc, induced clusters simu}. Each robot evolves according to the dynamics in \cite{liu2008omni}, i.e.:
\begin{equation}
    \dot x_i(t) = - f_i(\vect{x}) + A_i(B_i^\top)^{-1} R_i u_i + w_i(t), 
\end{equation}
where: $x_i(t) := [x_{i,1}, x_{i,2}, x_{i,3}]^\top$ denotes the state of robot $i$, consisting of its planar position $\vect{p}_i := [x_{i,1}, x_{i,2}]^\top$ and orientation $x_{i,3}$; $u_i := [\omega_{i,1}, \omega_{i,2}, \omega_{i,3}]$ consists of the wheel angular velocities; $R_i$ denotes the wheel radius; $A_i := \begin{bmatrix} \cos(x_{i,3}) & -\sin(x_{i,3}) & 0 \\ \sin(x_{i,3}) & \cos(x_{i,3}) & 0 \\ 0 & 0 & 1\end{bmatrix}$ is the rotation matrix associated with the robot orientation; $B_i := \begin{bmatrix} 0 & \cos(\pi/6) & -\cos(\pi/6) \\ -1 & \sin(\pi/6) & \sin(\pi/6) \\ L_i & L_i & L_i \end{bmatrix}$, with $L_i$ denoting the radius of the robot body; $f_i(\vect{x}) := \sum_{j \in \mathcal{V}\setminus\{i\}} k_i \frac{[(\vect{p}_i- \vect{p}_j)^\top, 0]\top}{\norm{\vect{p}_i-\vect{p}_j}+ \epsilon}$ represents a bounded coupling term promoting collision avoidance among the robots, where $k_i = 0.1$ is a design parameter and $\epsilon = 0.0001$ prevents singularities; and $w_i(t) \in [-1.5, 1.5]$ denotes a bounded random disturbance affecting robot $i$.

The robots are subject to the following STL tasks: 
\begin{equation*}
    \begin{split}
         \phi_1 := G_{[4,5]}&((\norm{\vect{p}_1 - [0, 2]^\top}^2 \leq 7.05) \land (\norm{\vect{p}_1 - \vect{p}_2}^2 \leq\\ & 26.75) \land (\norm{\vect{p}_1 - \vect{p}_3}^2 \leq 70.05))\\
         \phi_2 := G_{[4,5]}&( (\norm{\vect{p}_2 - \vect{p}_3}^2 \leq 70.05) \land  (\norm{\vect{p}_2 - \vect{p}_5}^2 \leq \\ &70.05)  \land  (\norm{\vect{p}_2 - \vect{p}_6}^2 \leq 70.05)) \\
         \phi_3 := G_{[4,5]} &(\norm{\vect{p}_3 - [-1.175, -1.618]^\top}^2 \leq 7.05) \\
         \phi_4 := G_{[4,5]} &(\norm{\vect{p}_4 - \vect{p}_3}^2 \leq 26.75)\\
         \phi_5:= G_{[4,5]} & (\norm{\vect{p}_5 - \vect{p}_9}^2 \leq 70.04) \\
         \phi_6:= G_{[4,5]} & (\norm{\vect{p}_6 - \vect{p}_{10}}^2 \leq 70.04)\\
         \phi_7:= G_{[4,5]} & (\norm{\vect{p}_7 - \vect{p}_8}^2 \leq 70.04)\end{split}
\end{equation*}
\begin{equation*}
    \begin{split}
         \phi_8:= G_{[4,5]} & (\norm{\vect{p}_8 - \vect{p}_6}^2 \leq 70.04)\\
         \phi_9:= G_{[4,5]} & (\norm{\vect{p}_9 - \vect{p}_{10}}^2 \leq 70.04)\\
         \phi_{10}:= G_{[4,5]} & ((\norm{\vect{p}_{10} - \vect{p}_{11}}^2 \leq 70.04) \land (\norm{\vect{p}_{10}- \vect{p}_{14}}^2 \leq\\ & 70.04)) \\ 
         \phi_{11}:= G_{[4,5]} & (\norm{\vect{p}_{11} - \vect{p}_{13}}^2 \leq 70.04) \\        
         \phi_{12}:= G_{[4,5]} & ((\norm{\vect{p}_{12} - \vect{p}_{11}}^2 \leq 70.04) \land (\norm{\vect{p}_{12}- \vect{p}_{13}}^2 \leq\\ & 70.04)) \\
         \phi_{13}:= G_{[4,5]} & (\norm{\vect{p}_{13} - \vect{p}_{16}}^2 \leq 70.04) \\
         \phi_{14}:= G_{[4,5]} & (\norm{\vect{p}_{14} - \vect{p}_{15}}^2 \leq 70.04) \\
         \phi_{15}:= G_{[4,5]} & (\norm{\vect{p}_{15} - \vect{p}_{16}}^2 \leq 70.04) \\
        \phi_{16} := G_{[4,5]} &(\norm{\vect{p}_{16} - [-1.175, -1.618]^\top}^2 \leq 7.05)    
    \end{split}
\end{equation*}

Intuitively, Agents $1,3$ and $16$ are required to reach and remain within a prescribed spatial region, while the remaining agents are required to follow the swarm and maintain sufficient proximity to each other.

The task graph $\mathcal{G}_T$ resulting from the STL task is depicted in red in Fig.~\ref{fig:Graph Gt, Gc, induced clusters simu}. The corresponding \mbox{cluster-induced} graph $\mathcal{G}' = (\mathcal{C}', \mathcal{E}')$, obtained from $\mathcal{G}_C$ and $\mathcal{G}_T$, is defined by the cluster set $\mathcal{C}' := \{\mathcal{C}_1, \mathcal{C}_2, \mathcal{C}_3, \mathcal{C}_4, \mathcal{C}_5\}$ and the edge set $\mathcal{E}' := \{(\mathcal{C}_1, \mathcal{C}_2), (\mathcal{C}_1, \mathcal{C}_4), (\mathcal{C}_3, \mathcal{C}_2), (\mathcal{C}_2, \mathcal{C}_4), (\mathcal{C}_4, \mathcal{C}_5)\}$, where $\mathcal{V}_1 :=\{1,2,3,4\}$, $\mathcal{V}_2 := \{6\}$, $\mathcal{V}_3 := \{7,8\}$, $\mathcal{V}_4:= \{5,9,10,11,12,13\}$ and $\mathcal{V}_5 := \{14,15,16\}$. Since $\mathcal{G}_T$ satisfies Assumption~\ref{Assumption on Task graph}, $\mathcal{G}'$ is a directed acyclic graph as introduced in Remark~\ref{remark on the acyclicity of the cluster induced graph}.

To allow every $i \in \mathcal{V}$ to estimate the state of the agents $N_j^i \in \Nhop{i}{k} \cap \Neigh{i}{T}$, the \mbox{$k$-hop} PPSO in \eqref{Eq: State observer dynamics} is implemented with $k = 3$. Then, $\Nhop{1}{3} :=\{4,5,6,9\}$, $\Nhop{2}{3} :=\{4,5,6,9\}$, $\Nhop{3}{3} :=\{6,7,9,10,12\}$, $\Nhop{4}{3} :=\{1,2,5,6,9\}$, $\Nhop{5}{3} :=\{1,2,4,7,8,10,11,12,13\}$, $\Nhop{6}{3} :=\{1,2,3,4,8,10,11,12,13\}$, $\Nhop{7}{3} :=\{3,5,9,10,12\}$, $\Nhop{8}{3} :=\{5,6,9\}$, $\Nhop{9}{3} :=\{1,2,3,4,7,8,11,13,14\}$, $\Nhop{{10}}{3} :=\{3,5,6,7,12,13,14\}$, $\Nhop{{11}}{3} :=\{5,6,9,14,15,16\}$, $\Nhop{{12}}{3} :=\{3,5,6,7,10,14,15,16\}$, $\Nhop{{13}}{3} :=\{5,6,9,10,15,16\}$, $\Nhop{{14}}{3} :=\{9,10,11,12\}$, $\Nhop{{15}}{3} :=\{11,12,13\}$ and $\Nhop{{16}}{3} :=\{11,12,13\}$. 

By exploiting the properties in \eqref{eq: always operator properties}, each task $\phi_i$ can be rewritten as $\phi_i = \land_{j = 1}^{z_i} \phi_{i,j}$, with $\phi_{i,j} := G_{[4,5]} (\norm{\vect{p}_i - \vect{p}_l} \leq r_{\psi_{i,j}})$, and where $l$ denote the global index of the agent involved, together with agent $i$, in the satisfaction of task $\phi_{i,j}$. Thus, according to Section~\ref{Section: Performance functions parameter design}, the task funnel relaxation can be computed, for all $\phi_{i,j}$ involving state estimates, as in Example~\ref{example: funnel modification}, and task satisfaction can be enforced following Remark~\ref{Remark: task modification in case of composite tasks}.
To guarantee feasibility of the STL tasks, the prescribed performance functions of the $k$-hop PPSO, i.e., $\delta^{N_j^i}_i(t)$ for all $i\in \mathcal{V}$ and $N_i^j \in \Nhop{i}{3}$, are tuned such that conditions (i)-(iii) in Section~\ref{Section: Performance functions parameter design} are satisfied with $\rho_{\psi_{i,j}}^t(t) = 2 \norm{\delta_{\psi_{i,j}}^i(t)} r_{\psi_{i,j}} -  \norm{\delta_{\psi_{i,j}}^i(t)}^2$, where $\norm{\delta_{\psi_{i,j}}^i(t)}$ is the norm of the prescribed performance functions associated with the state estimates involved in $\phi_{i,j}$. For example, task $\phi_2$ involves the estimates $\estimate{\vect{p}}{2}{5}:= [\estimate{x}{2}{{5,1}}, \estimate{x}{2}{{5,1}}]^\top$ and $\estimate{\vect{p}}{2}{6}:= [\estimate{x}{2}{{6,1}}, \estimate{x}{2}{{6,1}}]^\top$. Thus, to guarantee feasibility, the prescribed performance functions of the state estimation errors are selected as $\delta^2_{5, j}(t) \simeq 5.475 e^{-0.5t} + 0.209$ and $\delta^2_{6,j}(t) \simeq  14.613e^{-0.5t} + 0.557$ for all $j\in \{1,2,3\}$. Then, to guarantee the prescribed performance of the observer, $\rho^{N_r^5}_{5,j}(t)$ and $\rho^{N_r^6}_{6,j}(t)$, for all  $N_r^5 \in \Nhop{5}{3}$, $N_r^6 \in \Nhop{6}{3}$ and $j\in \{1,2,3\}$, are designed to satisfy the constraint in \eqref{eq: constraint on the prescribed prformance functions}. In particular, $\rho^2_{5,j}(t) \simeq 0.1 e^{-0.5t} + 0.004$ and $\rho^2_{6,j}(t) \simeq 0.735 e^{-0.5t} + 0.028$ for all $j\in \{1,2,3\}$.
In order to satisfy the initialization condition in Assumption~\ref{Assumption on initialization}, the state estimation error are randomly initialized such that $|\xi^{N^i_r}_{{i,j}}(0)| < \rho^{N_r^i}_{i,j}(0)$ holds for all $i\in \mathcal{V}$, $N_r^i \in \Nhop{i}{k}$ and $j\in \{1,2,3\}$.

Since $\mathcal{G}_T$ satisfies Assumption~\ref{Assumption on Task graph},  Theorem~\ref{Theorem on task satisfacion} guarantees that the control law in \eqref{eq: agent's input} ensures task satisfaction of the MAS. Specifically, by following the arguments in the proof of Theorem~\ref{Theorem on task satisfacion}, task satisfaction can first be established for the leaf cluster $\mathcal{C}_5$. Then, owing to the boundedness of the control inputs $u_i(t)$, for all $i \in \mathcal{C}_5$, and the resulting prescribed performance guarantees provided by the $k$-hop PPSO, namely $|\error{x}{N_j^i}{i}(t)| <\delta^{N_j^i}_i(t)$ for all $i\in \mathcal{C}_5$ and $N_j^i \in \Nhop{i}{k}$, task satisfaction for cluster $\mathcal{C}_4$ follows. Proceeding recursively along the task-graph hierarchy, task satisfaction can subsequently be established for cluster $\mathcal{C}_2$, and finally for clusters $\mathcal{C}_1$ and $\mathcal{C}_3$. This recursive implication follows directly from the acyclic dependency structure of the task graph.

\begin{figure}
    \centering
    \includegraphics[width =\linewidth]{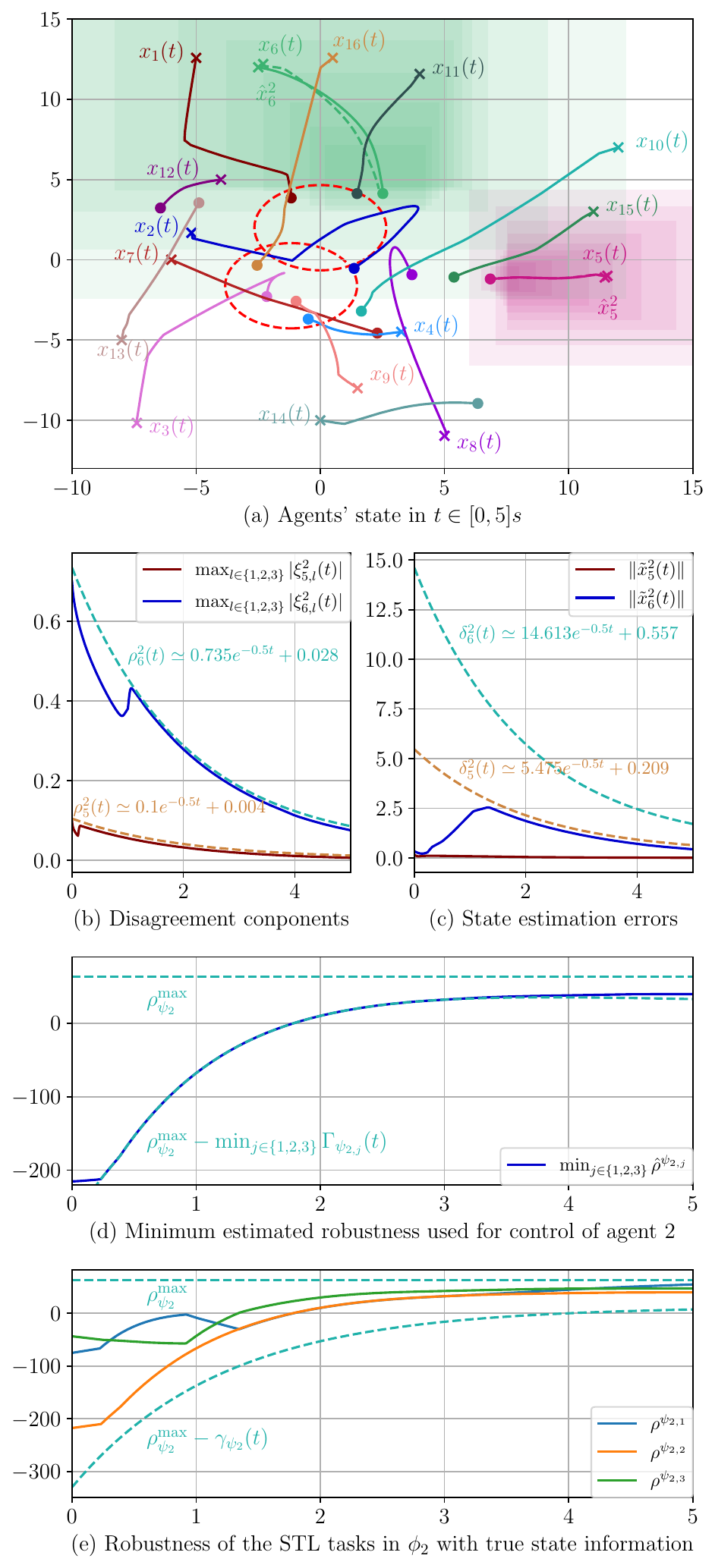}
    \caption{(a) State and estimated trajectories; initial states are represented by crosses, terminal state by dots. The pink and green regions represent the evolution over time of the upper bounds on the state estimation errors guaranteed by the $k$-hop PPSO; red circles denote the target regions of the individual tasks. (b) Evolution of the disagreement components on the estimates performed by agent $2$; the dashed lines are the prescribed performance functions of the $k$-hop PPSO. (c) Evolution of the norm of the state estimation errors performed by agent $2$. (d) Minimum estimated robustness of $\phi_2$. (e) Robustness of the tasks in $\phi_2$ evaluated using the true state.}
    \label{fig: Simulation results}
\end{figure}

Numerical simulations were conducted in Python on a computer equipped with an  Intel Core Ultra $7$ $165\mathrm{H}$ processor ($3.8 \mathrm{GHz}$). The results are shown in Fig.~\ref{fig: Simulation results}. Specifically, Fig.~\ref{fig: Simulation results}a illustrates the MAS trajectory together with the evolution of the state estimates required for task $\phi_2$, as well as the associated uncertainty regions determined by the upper bounds on the estimation errors guaranteed by the $k$-hop PPSO. Due to space limitation, we present here the dynamics of the state estimation errors and the robustness functions for agent~$2$ only, as shown in Figs.~\ref{fig: Simulation results}b-e. 
As illustrated in Fig.~\ref{fig: Simulation results}b and Fig.~\ref{fig: Simulation results}c, whenever the disagreement vector satisfies the prescribed performances bounds in \eqref{eq: definition of the bounds on the disagreement term}, the estimation error evolves according to the inequality \eqref{eq: state estimation error inequality}. Moreover, Fig.~\ref{fig: Simulation results}d and Fig.~\ref{fig: Simulation results}e show that enforcing $\min_{j \in \{1,2,3\}} \hat{\rho}^{\psi_{2,j}} > 0$ for all $t \in [0,5]$, through the constraint in \eqref{eq: inequality on robustness to guarantee task satisfaction, observer not in design stage of gamma}, with $\Gamma_{\psi_2}(t) =\min_{j \in \{1,2,3\}} \Gamma_{\psi_{2,j}}$, is sufficient to guarantee satisfaction of $\phi_2$. Indeed, Fig.~\ref{fig: Simulation results}e confirms that $\rho^{\psi_{2,j}} > 0$ for all $j \in \{1,2,3\}$ and $t\in [4,5]$.

As shown in Fig.~\ref{fig: Simulation results}c, the norm of the state estimation error $\error{x}{2}{6}$ reaches a peak of approximately $2.5 \mathrm{m}$ during the transient phase and remains strictly positive for all $t \in [4,5]$. Together with the robustness plots in Fig.~\ref{fig: Simulation results}e, these results demonstrate that the proposed framework guarantees task satisfaction for all estimation errors satisfying the prescribed performance bounds of the observers, while preserving feasibility of the task during the transient phase. This confirms the effectiveness of the proposed framework.

%% file: Content/Conclusions.tex
We proposed a decentralized \mbox{observer-based} control framework for MAS subject to STL specifications under communication-task graphs mismatch. To enable each agent to estimate the states of non-neighboring agents involved in its tasks, we developed a $k$-hop Prescribed Performance State Observer that provides state estimates of agents up to $k$ communication hops away while guaranteeing predefined transient and steady-state performance.
Owing to its decentralized modular architecture, the proposed framework is robust to bounded external disturbances and is well suited for \mbox{large-scale} heterogeneous \mbox{multi-agent} systems. Future work will focus on \mbox{observer–controller} \mbox{co-design} strategies aimed at relaxing Assumption~\ref{Assumption on the non-existance of a task not in communication inside a cluster}, as well as further enhancing scalability and performance guarantees.